\documentclass[12pt]{article}
\usepackage{a4wide}
\usepackage{amssymb}
\usepackage{mathrsfs}
\begin{document}
{\renewcommand{\thefootnote}{\fnsymbol{footnote}}
\hfill  IGC--08/6--7\\
\medskip
\begin{center}
{\LARGE Anomaly freedom in perturbative loop quantum gravity}\\
\vspace{1.5em}

Martin Bojowald, Golam Mortuza Hossain, Mikhail Kagan

\vspace{0.5em}

Institute for Gravitation and the Cosmos,
The Pennsylvania State University,\\
104 Davey Lab, University Park, PA 16802, USA\\

\vspace{1em}

S.~Shankaranarayanan

\vspace{0.5em}

Institute of Cosmology and Gravitation, University
of Portsmouth, Mercantile House, Portsmouth~P01 2EG, U.K.

\vspace{1.5em}
\end{center}
}

\setcounter{footnote}{0}

\newcommand{\proofend}{\raisebox{1.3mm}{\fbox{\begin{minipage}[b][0cm][b]{0cm}
\end{minipage}}}}
\newenvironment{proof}{\noindent{\it Proof:} }{\mbox{}\hfill \proofend\\\mbox{}}
\newenvironment{ex}{\noindent{\it Example:} }{\medskip}
\newenvironment{rem}{\noindent{\it Remark:} }{\medskip}

\newtheorem{theo}{Theorem}
\newtheorem{lemma}{Lemma}
\newtheorem{defi}{Definition}

\newcommand{\case}[2]{{\textstyle \frac{#1}{#2}}}
\newcommand{\lP}{l_{\mathrm P}}
\newcommand{\be}{\begin{equation}}
\newcommand{\ee}{\end{equation}}
\newcommand{\bq}{\begin{eqnarray}}
\newcommand{\eq}{\end{eqnarray}}

\newcommand*{\R}{{\mathbb R}}
\newcommand*{\N}{{\mathbb N}}
\newcommand*{\Z}{{\mathbb Z}}
\newcommand*{\Q}{{\mathbb Q}}
\newcommand*{\C}{{\mathbb C}}

\newcommand{\md}{{\mathrm{d}}}
\newcommand{\tr}{\mathop{\mathrm{tr}}}
\newcommand{\sgn}{\mathop{\mathrm{sgn}}}
\newcommand{\pd}[2][]{\frac{\partial #1}{\partial #2}}
\def\f{\frac}
\def\t{\tilde}
\def\H{{\mathscr H}}
\def\h{{\cal H}}
\def\l{{\cal L}}
\newcommand{\D}{{\partial}}

\begin{abstract}
 A fully consistent linear perturbation theory for cosmology is
 derived in the presence of quantum corrections as they are suggested
 by properties of inverse volume operators in loop quantum
 gravity. The underlying constraints present a consistent deformation
 of the classical system, which shows that the discreteness in loop
 quantum gravity can be implemented in effective equations without
 spoiling space-time covariance. Nevertheless, non-trivial quantum
 corrections do arise in the constraint algebra. Since correction
 terms must appear in tightly controlled forms to avoid anomalies,
 detailed insights for the correct implementation of constraint
 operators can be gained. The procedures of this article thus provide
 a clear link between fundamental quantum gravity and phenomenology.
\end{abstract}

\section{Introduction}

Quantum gravity is expected to play a role in the early universe, in
such a way that it may become subject to observational tests in
cosmology. As is well known, a complete theory of quantum gravity is
difficult to construct, and even if one would have a fully convincing
candidate it would remain difficult to link such a fundamental
formulation to clear-cut observational consequences. Daunting as this
may seem, such a problem is not specific to quantum gravity and has
been circumvented highly successfully in other areas. For instance, to
date there is no complete and fully rigorous construction of
interacting quantum field theories on flat space-time, and yet clear
and experimentally well-tested procedures to extract predictions have
been used for decades.

Quantum gravity certainly does have additional problems which do not
arise in quantum field theories on flat space-time. Paramount among
these issues, related to the general covariance of the theory and thus
to consequences of the fully constrained nature, are the problem of
time, the self-interacting nature of gravity and the
notion of a physical Hilbert space. Yet, quantum field theory is a
lesson that the lack of a completely formulated underlying theory
should not prevent one from making trustworthy statements valid at,
e.g., low energies. Not all the mathematical constructions, whose
well-defined existence one may wish to prove rigorously, are required
for such purposes. They are surely necessary at a fundamental level,
and they have often stimulated much further research. But they do not
directly relate to observables, and thus can, for some purposes, be
ignored.

The key tool for extracting potentially observable consequences
without being paralyzed by open issues in a fundamental framework are
effective formulations. They capture quantum effects by describing
relevant aspects of an evolving wave function. They allow one to focus
on the relevant degrees of freedom, such as expectation values or
fluctuations, rather than whole wave functions and technical issues of
how they may be represented. And if applied carefully enough, they not
only provide reliable self-consistent predictions but also link back
to the full theory where they originate and thus provide fundamental
insights.

For many purposes, almost all the information contained in a wave
function is irrelevant, and a few state parameters of finite number
suffice for all potentially observable consequences that can be
imagined. This is what provides a much more economical derivation of
physical results.  One should note that effective equations are not
merely an amendment of classical equations by quantum correction
terms, although one can always obtain such equations in semi-classical
regimes because the classical limit must be respected. However,
effective equations apply more generally and constitute a systematic
approximation scheme to analyze full quantum properties such as
dynamical expectation values of a state.

Best known among effective formulations are probably low-energy
effective actions in particle physics
\cite{PositronEffAc,VacPolEffAc}. But they are also available in
canonical formulations \cite{EffAc,EffectiveEOM,Karpacz}, where they
have proven fruitful for quantum cosmology
\cite{BouncePert,BounceSqueezed}. They can be extended to constrained
systems, where they provide effective constraints for the state
parameters \cite{EffCons}. Thus, all ingredients are given which are
necessary for an application to loop quantum gravity
\cite{Rov,ALRev,ThomasRev} and a derivation of its effects.  Due to
the complexity of the problem, there is no complete derivation of
effective equations for loop quantum gravity, but several
characteristic quantum effects are known and can be analyzed.  Taken
together, all quantum corrections provide a complicated substitute for
the classical equations, but they can be separated and studied
individually. As we will see in this article, this provides crucial
insights into what should happen in a consistent full theory of
quantum gravity. We will explicitly construct anomaly-free constraints
which incorporate quantum corrections of inverse metric components as
they occur due to the discreteness of loop quantum gravity. A
companion paper \cite{ScalarGaugeInv} will use these considerations of
effective constraint algebras to provide quantum corrected
cosmological perturbation equations in terms of gauge invariant
variables.

\section{Effective equations and effective constraints}

We start by reviewing the scheme of effective equations for canonical
formulations. For an unconstrained system, the quantum theory is given
by a Hamiltonian operator $\hat{H}$ which is self-adjoint on a given
Hilbert space. It determines evolution of states $\psi$ by the
Schr\"odinger equation
\begin{equation}
 i\hbar \dot{\psi}= \hat{H}\psi
\end{equation}
and allows one to solve, e.g., for scattering amplitudes by evolving
given initial states into possible final states.

\subsection{Effective equations of motion}

Alternatively, one can view the expectation value of the Hamiltonian,
$H^Q:=\langle\hat{H}\rangle$, as a functional on the infinite
dimensional space of states. It generates the same Schr\"odinger
evolution by Hamiltonian equations of motion
\begin{equation}
 \frac{\md}{\md t}\langle\hat{O}\rangle =
 \frac{\langle[\hat{O},\hat{H}]\rangle}{i\hbar}=:
 \{\langle\hat{O}\rangle,H^Q\}\,.
\end{equation}
General expressions for the relevant Poisson brackets on the right
hand side can be computed from commutators of basic operators used in
the given quantum theory.

A solution for $\langle\hat{O}\rangle(t)$ to the Hamiltonian equation
of motion has the same information as the expectation value
$\langle\psi|\hat{O}|\psi\rangle$ computed in a state $|\psi\rangle$
satisfying the Schr\"odinger equation (or as the expectation value of
a Heisenberg operator). However, in general $H^Q$, evaluated in a
given state, depends not only on expectation values of basic operators
but also on fluctuations and all other moments of the state. There is
thus a complicated, infinite dimensional coupled system of equations
involving not only the time dependence of $\langle\hat{O}\rangle$ but
also independent quantum variables such as
$\langle\hat{O}^n\rangle\not=\langle\hat{O}\rangle^n$ (or expectation
values of other operators) which appear in quantum theory.

Effective equations are obtained when one can self-consistently
determine regimes where these infinitely many equations can be
decoupled to a finite system. This usually happens in semi-classical
regimes where higher moments of a state are sub-dominant compared to
low ones, but effective equations can be applied more generally. The
1-particle irreducible low-energy effective action, for instance, can
be derived in an approximation consisting of a combination of an
adiabatic expansion with one in $\hbar$ \cite{EffAc,Karpacz}. Some
rare solvable systems can be studied by exact effective equations
without truncation, which in cosmology is realized for a flat
isotropic model sourced by a free, massless scalar \cite{BouncePert}.

\subsection{Effective constraints}

Gravity is governed by constraints rather than a true Hamiltonian.
Just like the Hamiltonian before, constraint operators $\hat{C}_I$
give rise to principal effective constraints
$\langle\hat{C}_I\rangle$, but with those an infinite tower of other
constraints for quantum variables is generated \cite{EffCons}. We are
thus dealing with a system of infinitely many constraints on an
infinite dimensional phase space even for a single classical canonical
pair. The higher constraints can be ignored in our treatment of
characteristic loop quantum gravity effects appearing in the principal
constraints. As we will see, this is sufficient to arrive at a
consistent constraint algebra together with the cosmological
perturbation equations it implies. Higher constraints would be
required if we were interested in the constrained evolution of higher
moments.

The principal constraints are obtained from expectation values of
quantum operators and thus contain several quantum effects. In
general, they depend on quantum variables and include the quantum
back-reaction of, e.g., fluctuations and correlations on expectation
values. But especially in loop quantum gravity they also contain
characteristic effects which are a consequence of the fundamental
quantum representation used. In loop quantum gravity, these translate
consequences of the kinematical discreteness of the loop
representation into effective equations and thus show implications for
dynamical states. The basic reason for properties of the loop
representation is the use of SU(2)-valued holonomies of the
Ashtekar--Barbero connection instead of linear functions of the
connection \cite{LoopRep}. In particular, the basic variables become
complex but still have to obey certain reality conditions to ensure
that the correct number of classical degrees of freedom is quantized.
This is usually implemented by requiring self-adjointness or unitarity
of basic operators in the quantum representation, but one advantage of
effective equations is that reality conditions can be represented
representation independently at the level of expectation values and
quantum variables \cite{BouncePert,EffCons}. This remains true after
solving the effective constraints, thus showing crucial properties of
the physical Hilbert space even in cases where the physical inner
product may be difficult to construct.

A further advantage of effective equations, especially for the purpose
of cosmological perturbation theory, is the issue of introducing a
background geometry to define perturbative expansions in gravitational
variables such as inhomogeneities. While the underlying theory of loop
quantum gravity is background independent in a form which does not
make it straightforward to introduce a perturbative background via
states, effective constraints can easily be expanded by perturbing
expectation values around a background of the desired classical form.
The background then enters via a selection of a class of states to
compute effective constraints \cite{InhomLattice}. (See also
\cite{BoundaryGraviton} for a conceptually similar proposal based on
boundary states.) Using a Friedmann--Robertson--Walker background, for
instance, allows one to derive cosmological perturbation equations
directly --- and effectively --- from a background independent quantum
theory of gravity.

For the resulting set of equations to be meaningful, they must be
consistent in that they derive from constraints which are anomaly-free
also in the presence of quantum corrections. If this fails, it will be
impossible to express the quantum corrected perturbation equations
solely in terms of gauge-invariant variables as they are determined by
the gauge flow of the corrected constraints. In particular, as we will
demonstrate in this article, {\em off-shell anomaly freedom} is
required. (The importance of the off-shell anomaly problem was also
emphasized in \cite{NPZRev} based on alternative fundamental
considerations.) If one has an anomaly-free quantization with
constraint operators such that the constraint algebra
$[\hat{C}_I,\hat{C}_J]$ closes to a first class system, then also the
algebra of principal quantum constraints
$\{\langle\hat{C}_I\rangle,\langle\hat{C}_J\rangle\}$ will close
because it derives from the commutator algebra. (If there are
structure functions, higher constraints as mentioned in the beginning
of this subsection will be involved.) Approximations to effective
constraints then have to be done self-consistently in such a way that
violations of closure of the algebra do not happen up to the order
considered in an expansion.

In loop quantum gravity, however, no satisfactory form of all
constraint operators is known which would satisfy the requirement of
off-shell closure. (The arguments in \cite{AnoFree,QSDI}, for
instance, specifically refer to partially on-shell statements, and
also the reformulation of anomaly-freedom as a condition for the
existence of observables in \cite{Master,AQGI} is on-shell.) Without
off-shell closure, on the other hand, physical applications based on
the usual form of cosmological perturbation equations are impossible.
(While applications may be possible based on a complete set of quantum
observables in a form of reduced quantization, this route does not
seem manageable.)  The final advantage of effective formulations
exploited in this article is then that one can ensure off-shell
closure of the {\em effective constrained system}. Thus, one can
include known quantum effects as they occur in a quantization where
one has not yet taken care of anomaly freedom, obtain candidates for
effective constraints with those corrections in a suitable
parameterized form (reflecting either quantization ambiguities or
incomplete knowledge of properties of a quantum operator), and compute
their Poisson relations. In general, this algebra will exhibit
anomalies, but in some cases one can adapt the correction functions
used in the parametrization of quantum effects such that anomalies
vanish.

If there is no such adaptation, this specific quantum correction would
be ruled out. But if one can successfully remove anomalies while
keeping non-trivial quantum corrections, one will learn how
specifically the quantum effect has to arise in quantum operators, and
how completely quantization ambiguities can be fixed by the
requirement of anomaly-freedom. The advantage of effective equations
is then that one can do such an analysis order by order in various
expansions, instead of having to face a complicated operator algebra
in which all possible quantum effects are included at once. The result
of completing such a program will not only be consistent sets of
equations of motion which can be used for applications, but also
valuable feedback on the underlying fundamental theory which in our
case will be loop quantum gravity. Thus, we are providing a clear link
between fundamental properties of quantum gravity and its
phenomenology.

\subsection{Cosmological perturbation equations}

Linearization of Einstein's equations around
Friedmann--Robertson--Walker (FRW) space-times provides cosmological
perturbation equations for ten metric components. These metric
perturbations are subject to coordinate (gauge) transformations
parametrized by an infinitesimal 4-vector field $\xi^\mu$
($\mu=0,\ldots,3$), which, in presence of matter, generically give
rise to six {\em gauge invariant} perturbations, i.e.\ combinations of
metric and matter perturbations which remain unchanged under linear
changes of coordinates. In the linear regime, the former
decouple into three independent modes: scalar,
vector, and tensor, carrying two degrees of freedom each. The
evolution of vector and tensor modes taking into account corrections
expected from loop quantum gravity was investigated in \cite{vector}
and \cite{tensor} respectively. In this paper, we focus on the scalar
perturbations which, along with the background FRW-metric, take the
form
\be\label{MetricPert} \md s^2 =
a^2(\eta)\left(-(1+2\phi)\md \eta^2 +2\D_a B\md \eta \md x^a -
((1-2\psi)\delta_{ab}+2\D_a\D_b E)\md x^a\md x^b\right)\,,
\ee
where the scale factor $a$ is a function of the conformal time $\eta$
and the spatial indices run from 1 to 3. The perturbations $\phi$,
$\psi$, $B$ and $E$ are then combined into the two gauge invariant
Bardeen potentials \cite{Bardeen}
\begin{eqnarray}\label{BardeenVars}
\Psi&=&\psi-\H\left(B-\dot E\right)\\
\Phi&=&\phi+\left(B-\dot E\right)^{\dot{}}+\H\left(B-\dot
E\right).\nonumber
\end{eqnarray}
whose evolution is governed by the linearized Einstein 
equations
\cite{CosmoPert}
\begin{eqnarray}
&&\nabla^2\Psi-3\H (\H \Phi + \dot\Psi)=-4\pi G a^2
\delta T_0^{0({\rm GI})} \label{PertI}\\
&&\D_a\left(\H \Phi + \dot \Psi \right)=-4\pi G a^2
\delta T_a^{0({\rm GI})} \label{PertII}\\
&&\left(\ddot \Psi + \H (2\dot\Psi+\dot\Phi)+(2\dot\H+\H^2)\Phi+
\f{1}{2}\nabla^2(\Phi-\Psi)\right)\delta^b_a\nonumber\\
 &&\qquad-\f{1}{2}\D^b\D_a(\Phi-
\Psi)=4\pi G a^2 \delta T_a^{b({\rm GI})}\,. \label{PertIII}
\end{eqnarray}
Here a dot denotes derivative w.r.t conformal time, $\H\equiv \f{\dot
a}{a}$
is the conformal Hubble parameter and
$\delta T^{({\rm GI})}$ are gauge invariant perturbations of the
matter stress-energy tensor. These equations are commonly derived from
the covariant field equations or by varying an action expanded to
second order in the linearized fields. But Hamiltonian formulations
exist for the same procedure, which is more suitable for a comparison
with canonical quantum gravity (in particular in Ashtekar variables as
used in \cite{HamPerturb}).

To formulate the Hamiltonian setting, the action is used to determine
Poisson brackets, and thus a decomposition into configuration fields
and their momenta, as well as constraint functions. The constraints
serve several purposes: (i) they restrict initial values of the fields
to those allowed values which make the constraints vanish, (ii) they
generate gauge transformations which in the case of general relativity
agree with coordinate transformations, and (iii) they provide
equations of motion for the fields in any coordinate time
parameter. (The latter is itself subject to the coordinate changes by
transformations generated by the constraints.) All this is necessary
to reproduce a covariant system even though distinguishing momenta,
which are related only to time but not space derivatives of fields,
invariably removes manifest covariance from the Hamiltonian formalism.

For this to be consistent, it is crucial that the constraints are
preserved under the time evolution they generate. This is
automatically guaranteed if they form a first class set, i.e.\ a set
of functionals whose mutual Poisson brackets vanish when evaluated in
fields satisfying the constraints. In other words, the gauge
transformations and evolution generated by the constraints then define
vector fields on field space which are tangent to the sub-manifold
defined by the vanishing of constraints. Starting on the constraint
surface, either changing the gauge or following evolution will then
keep us on the constraint surface. This is certainly realized
classically, as a reflection of the general covariance of the
underlying theory.

However, if quantum aspects are implemented, one must ensure that this
consistency requirement remains maintained: the quantization must be
anomaly-free. Otherwise the equations may show the wrong type and
number of degrees of freedom if formerly gauge quantities acquire
gauge invariant meaning. Or, worse, anomalies may make the equations
inconsistent to the degree that no non-trivial solution exists at
all. Anomaly-freedom is thus a key requirement not only for the
consistency of an underlying fundamental theory but also for the
possibility of applications. Quantum corrections cannot appear in
arbitrary forms, but only in restricted ways such that the constraints
form a closed algebra under Poisson brackets. In particular,
anomaly freedom will reduce some of the
arbitrariness of the form of loop quantum gravity corrections.

Moreover, as we will see explicitly, to provide quantum corrections to
Eqs.~(\ref{PertI}), (\ref{PertII}) and (\ref{PertIII}) the algebra
must close off-shell, i.e.\ it is not enough that the Poisson brackets
of constraints vanish when the constraints are satisfied but even on
parts of the phase space where constraints $C_I$ do not vanish we must
produce a closed algebra of a form $\{C_I,C_J\}= f^{K}_{IJ}(A,E)
C_K$. (Here, $A$ and $E$ denote the canonical fields which may appear
in the coefficients of the algebra; this means that in general we have
structure functions rather than structure constants.) The effective
algebra may differ from the classical one, and thus be quantum
corrected as well as the constraints; but it must still close. The
reason is that the whole set of coupled equations must be consistent,
which presents a mixture of constraint equations (\ref{PertI}) and
(\ref{PertII}) and evolution equations given by (\ref{PertIII})
together with the continuity or Klein--Gordon equation. To ensure that
these equations are consistent, we must consider the constraints
before they are solved. Consistency then requires an off-shell closure
of the constraint algebra. Practically, the consequence is that only
in this case we can express all the equations solely in terms of
gauge-invariant variables as they are determined by the quantum
corrected constraints. Once this is achieved, the equations are
consistent and can be solved and analyzed. In the absence of off-shell
closure, on the other hand, there would be left-over terms in the
equations of motion which contain gauge-dependent quantities making
such an evolution unphysical.

In this context, it is important to realize that there is no shortcut
to implementing the quantum corrections of fully perturbed field
equations consistently. (Notwithstanding the fact that this has been
attempted on numerous occasions such as
\cite{PowerLoop,PowerPert,InhomEvolve} in the context of loop quantum
cosmology, including by some of the present authors.\footnote{In
  particular, in Ref.~\cite{InhomEvolve} a gauge fixing choice was
  made which, as follows from the present paper, resulted in
  anomalies.  Also the recent proposal \cite{LQCStepping} of building
  inhomogeneous configurations from small numbers of patches
  subdividing space suffers from anomalies if more than two patches
  are considered.  In this work, the authors also use a
  Born-Oppenheimer approximation --- by assuming that the
  wave-function of the background metric and of the perturbations can
  be separated. This approximation is valid only if the wavelength of
  the perturbations $\lambda$ are much larger than the Planck length,
  i.e.\ $\lambda \gg l_P$. In the present work, such an approximation
  is not needed to derive the perturbation equations.})  Consistency
even for the purposes of phenomenological applications is intimately
linked to the fundamental problem of anomaly-freedom once
inhomogeneities enter the game. In homogeneous models of gravity there
is just one constraint, which clearly has a vanishing Poisson bracket
with itself and thus forms an off-shell closed algebra. Thus, in
homogeneous quantum cosmology there is no anomaly problem whatsoever.
Here, quantum corrections can be implemented at will, only restricted
by possible self-imposed conditions such as the desire to be as close
to a candidate for a ``full,'' non-symmetric theory as possible as it
is expressed in loop quantum cosmology \cite{LivRev}. (Some of the
structures, chiefly the kinematical quantum representation, of loop
quantum cosmology can be linked to loop quantum gravity and are thus
more restricted
\cite{InhomLattice,SymmRed,SymmStatesInt,Reduction,Rieffel}. But no
such derivation exists yet for the constraints which are most
important to see the precise role of quantum corrections on the
dynamics.)

It is then sometimes proposed to implement quantum corrections only in
the background evolution, for instance by effects motivated from loop
quantum gravity, and then use some inhomogeneous degree of freedom
such as a matter field as a measure of perturbations around the
background. If just the background is quantized, one knows corrections
only in its evolution but would have to keep the structure of
classical perturbation equations otherwise unchanged. This is rarely
consistent, and the treatment is not gauge-invariant. Gauge invariant
quantities in general relativity such as (\ref{BardeenVars}) combine
several metric perturbations and possibly matter fields. Taking only a
matter field, say, as the measure of perturbations means that one is
fixing the gauge (by implicitly assuming non-gauge invariant metric
perturbations to vanish) without even knowing what the gauge
transformations are.  The classical case of linear perturbations
around Friedmann--Robertson--Walker spacetimes allows gauges where
only the matter fields are inhomogeneous but not the metric
like,
for instance, the uniform density
gauge. However, quantum corrections change the constraints and thus
the gauge transformations they generate. The form and availability of
certain gauges changes, and it is no longer possible to re-express the
gauge-fixed results in terms of the gauge invariant quantities unless
one considers the full gauge problem. This can only be done when
initially all perturbations are allowed and the anomaly-issue is faced
head-on.

In the context of classical cosmological perturbations, the above
arguments can be rephrased in the following manner. The effective
corrections arising in loop quantum cosmology can formally be written
as:
\begin{equation}
G_{\mu\nu} = 8 \pi G \left(T_{\mu\nu} + \tau_{\mu\nu}\right)
\end{equation}
where $\tau_{\mu\nu}$ contains all the corrections from loop
quantum 
cosmology and $T_{\mu\nu}$ corresponds
to the stress-tensor of the classical matter field. Although the
matter field might be an ideal fluid, the stress-tensor
$\tau_{\mu\nu}$ arising due the new physics cannot necessarily be
treated as a perfect fluid. More importantly, the perturbation of
the stress-tensor $\delta \tau_{ab}$ will in general have some
anisotropic stress and the velocity perturbation $\delta v_{a}$
will not vanish for a standard gauge choice. Hence, it is
important to study the perturbations in a gauge invariant manner.

If there is an anomaly-free version of quantum corrected constraints
and the corresponding form of covariance, one could compute complete
gauge-invariant quantities to arbitrary orders in an expansion by
inhomogeneities; see e.g.  \cite{NonLinPert,PertObsI,PertObsII} or, in
deparameterized form after introducing dust as a clock matter system,
\cite{BKdustI,BKdustII}. As discussed, this requires off-shell anomaly
freedom of the constraints which is not easy to realize in closed
form. The treatment by effective constraints then provides a key
advantage: one can verify anomaly-freedom order by order in the
expansion by inhomogeneities (which may be combined with a
semiclassicality expansion in $\hbar$). This can be done with much
more ease than a full anomaly analysis but still, as we will see
explicitly, provides crucial feedback for the full theory.

\subsection{Correction functions}

Any quantization, such as loop quantum gravity, implies characteristic
effects which change the classical behavior. Almost always, there are
quantum back-reaction effects by state parameters such as fluctuations
and correlations on expectation values. (The only exceptions are free
or solvable models such as the harmonic oscillator where moments of a
state evolve independently of expectation values.) In addition, the
specific quantum representation may imply further characteristic
effects, which in the case of loop quantum gravity are all related to
the spatial discreteness of its kinematical representation. The
classical set-up makes use of basic variables given by a densitized
triad $E^a_i$, which provides the spatial metric via $E^a_iE^b_i=\det
(q_{cd}) q^{ab}$, and the Ashtekar connection $A_a^i=\Gamma_a^i+\gamma
K_a^i$ with the spin connection $\Gamma_a^i$ and extrinsic curvature
$K_a^i=E^b_iK_{ab}/\sqrt{|\det (E^c_j)|}$. This canonical pair of
fields is then quantized in the form of fluxes, i.e.\ integrations of
the triad over surfaces, and holonomies or parallel transports of the
connection. The resulting background independent representation has
characteristic properties of spatial discreteness such as a discrete
spectrum of flux operators (which contains zero).  Such properties
imply associated quantum corrections which appear whenever there are
inverse powers of densitized triad components (some of which would
classically diverge at singularities) or holonomies of a connection
rather than just connection or curvature components.

All these corrections typically occur at the same time and must be
combined in a complete treatment. While one type of correction might
be dominant in certain regimes, this would not be known a priori but
had to be shown by a dedicated analysis. Nevertheless, due to the
complexity of general quantum corrections, it is legitimate to
separate the different corrections at first, analyze individual
effects and then combine results. In spirit, this is similar to the
calculation of corrections to an atomic spectrum, which can be done
individually for relativistic corrections, spin-orbit interaction
etc.\ and eventually combined in a complete spectrum. In this paper,
we focus on loop quantum gravity corrections as they arise from inverse
components of the triad.

These corrections are already relevant for cosmology, where they have
been analyzed in preliminary forms in homogeneous and inhomogeneous
contexts
\cite{InflationWMAP,Robust,PowerLoop,ScaleInvLQC,CCPowerLoop,SuperInflLQC,RefinementMatter,RefinementInflation,TensorBackground,TensorRelic}. (Note
that a sub-dominance of these corrections compared to those due to
holonomies has been claimed based on an analysis of isotropic models
\cite{APS}. However, this is based largely on an inadvertent and
artificial suppression in the models used \cite{SchwarzN}; see also
App.~\ref{a:iso}. In any case, the arguments put forward in the
context of \cite{APS} do not apply to inhomogeneous situations.) The
precise form of such corrections as they would result in a principal
constraint from the expectation value of a constraint operator cannot,
at present, be computed due to the complicated form of the volume
spectrum which would be required (see e.g.\ \cite{BoundFull}). But
partially the behavior is known as it follows for instance for
diagonal triads \cite{QuantCorrPert}. The typical behavior is that the
classical function of triad components is multiplied by a correction
function $\alpha(E^a_i)$ which approaches the classical expectation
one for large values of triad components.  (More precisely, the
function depends on fluxes, i.e.\ triad components integrated over
elementary plaquettes of a discrete quantum state. This makes the
functional behavior independent of the choice of coordinates.)

At smaller scales, however, the function starts to deviate from one
and implies quantum corrections. If the correction function is
evaluated on an isotropic background \cite{InvScale}, it has a peak at
a certain characteristic scale $a_*$ of height larger than one, and
then drops off at smaller scales to reach zero for vanishing
triads. Notice that inhomogeneous contexts and states make it
meaningful to speak about this behavior in terms of the scale factor
$a$. In exactly isotropic models which are spatially flat, the
absolute size of the scale factor has no meaning. However, the
argument of correction functions is determined by a dimensionless
ratio given by $q:=\ell_0^2a^2/\ell_{\rm P}^2=: a^2/a_*^2$ where
$\ell_0$ is the size in coordinates of an elementary plaquette whose
flux appears as an argument. The product $\ell_0a$ has unambiguous
meaning because it does not change under rescaling coordinates (which
would change both $a$ and $\ell_0$ individually). The peak of the
correction functions occurs near $q\sim 1$, i.e.\ $a\sim a_*$. The
characteristic scale $a_*$ can be written as $a_*=\ell_{\rm P}/\ell_0=
({\cal N}/V_0)^{1/3}\ell_{\rm P}$ where ${\cal N}$ is the number of
vertices of an underlying state contained in a region of coordinate
volume $V_0$. The ratio ${\cal N}/V_0$ appearing in $a_*$ is thus the
patch density of an underlying discrete state measured in a given
coordinate system. For nearly homogeneous configurations, it does not
depend on the region or on $V_0$, but on coordinates.  (The physical
vertex density which would be independent of coordinates is ${\cal
N}/(a^3V_0)$, but it would not be appropriate to determine a
characteristic scale for $a$. Note that near $a\sim a_*$ there is one
patch per Planck cube; upper bounds for the patch density can be
obtained from phenomenological considerations, such as from big bang
nucleosynthesis \cite{FermionBBN}.)  The value of $a_*$ depends on the
normalization of the scale factor. But it also depends on the vertex
density which can be large. Thus, the peak of correction functions for
a denser state is realized on larger scales, which increases the
corresponding quantum corrections.

An additional implication of the appearance of the vertex density is
that ${\cal N}$ is typically history dependent
\cite{InhomLattice,CosConst} if the dynamical quantum evolution
refines the state as the universe grows (rather than just blowing up a
fixed lattice). Thus, also the scale $a_*$ is history dependent which
contributes to the regime dependence of this type of correction. For a
given background, the history dependence can always be expressed as an
$a$-dependence, which is sometimes seen as problematic because the
scale factor is not coordinate independent. However, given that the
origin of the refinement lies in the inhomogeneous setting, a proper
reduction introduces the correct scaling dependence via additional
parameters depending on the state; see also App.~\ref{a:iso}.

For an implementation of perturbative inhomogeneities, regimes where
relevant scales fall below $a_*$ pose difficulties because the scale
of inhomogeneity would be close to the discreteness scale. In this
paper, we thus assume that scales of the densitized triad are above
the characteristic scale $a_*$, where correction functions deviate
from one by terms perturbative in the Planck length:
\begin{equation}\label{AlphaHomo}
 \alpha(a)= 1+c_{\alpha}\left(\frac{\ell_P^2}{a^2}\right)^{n_{\alpha}}+\cdots
\end{equation}
with positive coefficients $c_{\alpha}$ and $n_{\alpha}$. Both
coefficients can be derived from a specific quantization but are
subject to quantization ambiguities. The coefficient $c_{\alpha}$, in
particular, is then related to $a_*$ (and to $\ell_0$, providing the
correct coordinate dependence in the presence of the scale factor).
Thus, $c_{\alpha}$ may itself depend on $a$ if the vertex number ${\cal N}$ in
a fixed volume, and thus $a_*$, changes with the universe expansion.
We are assuming that the dominant $a$-dependence is via a power-law of
the given form.

Constraints for linearized perturbations will not only require the
dependence of $\alpha(E^a_i)$ on the triad when the latter is
diagonal, but also the dependence on off-diagonal components.
Classically one can always gauge the triad to be diagonal, but gauge
transformations are quantum corrected and a consistency analysis of
the equations must be done before a gauge is fixed. The off-diagonal
dependence of $\alpha$ is not known in explicit form, and it is
difficult to derive because unlike the diagonal case it requires
non-Abelian features of the quantum theory
\cite{BoundFull,DegFull}. As we will see, the consistency analysis of
constraints then relates the off-diagonal dependence to the diagonal
dependence via the condition of anomaly-freedom. Moreover, other terms
in the constraints, including matter Hamiltonians, will also require
characteristic quantum corrections which, in contrast to the primary
correction functions, may not obviously be expected from explicit
quantizations in homogeneous models. Nevertheless, such additional
corrections, called {\em counterterms} in what follows, are required
for anomaly-freedom. In this way, they are fixed in terms of the
primary correction functions depending on diagonal triads. All this not
only provides consistent equations ready to be applied in cosmology,
but also precise feedback on what terms a full anomaly-free quantum
constraint must contain. As technical control on the full setting
increases, these predictions will provide strong consistency checks of
the whole framework.



\section{Canonical perturbation theory and primary correction functions}

In this main part of the paper we develop the ingredients of a
consistent perturbation theory in the presence of quantum corrections
to the classical constraints.

\subsection{Constraints and primary corrections}
\label{PrimCorr}

We first introduce the constraints and primary correction functions
which are expected to arise in the effective constraints. Formally,
the corrections are introduced as multiplicative factors of some terms
in the constraints which depend on the phase space variables and
approach unity in the classical limit. In this paper we restrict
ourselves to correction functions resulting from the quantization of
inverse-triad terms of the constraints. For the primary corrections,
the functions are also assumed to depend only on the triad and to be
local, i.e.\ independent of spatial derivatives of the triad. This
reflects properties of these functions as they have been introduced in
homogeneous models. The input can thus be used to formulate an initial
expectation of the form of such functions. Moreover, in this section,
we assume that the corrections can in principle be obtained from the
full (non-perturbative) theory, and hence should depend only on the
full triad $E^a_i\equiv\bar E^a_i+\delta E^a_i$ rather than on the
background $\bar E^a_i$ and perturbations $\delta E^a_i$ as distinct
arguments. Later on we shall analyze the consistency of such
assumptions. Anomaly-freedom will generate additional counter-terms of
further corrections, which can be re-interpreted as a connection
dependence or non-locality of the primary corrections. Such a
dependence is in any case expected for covariant corrections which
can, e.g., be formulated as functionals of curvature invariants. Of
course, we could put in such a dependence from the outset, but it
would make the calculations much less tractable.

General relativity in Ashtekar variables is subject to the Gauss,
diffeomorphism and Hamiltonian constraints. The Gauss constraint is
identically satisfied for scalar modes and does not need to be
considered here.  In the full quantum theory, the diffeomorphism
constraint does not receive quantum corrections but the Hamiltonian
constraint does \cite{QSDI}. The diffeomorphism constraint is thus
taken as the classical one, $D[N^a]=D_{\rm grav}[N^a]+D_{\rm
matter}[N^a]$ with a gravitational part
\begin{equation} \label{DiffeoConstraint}
D_{\rm grav}[N^a] := \frac{1}{8\pi G\gamma}\int_{\Sigma}\mathrm{d}^3xN^a
\left[(\partial_a A_b^j - \partial_b A_a^j)E^b_j - A_a^j
\partial_b E^b_j \right] ~
\end{equation}
and a matter part
\begin{equation}
D_{\rm matter}[N^a]=\int_{\Sigma}\mathrm{d}^3x N^a \pi \partial_a \varphi
\end{equation}
for a scalar field $\varphi$.

We express the classical gravitational Hamiltonian as
\begin{equation} \label{HamConstClass}
H_{\rm grav}[N] = \frac{1}{16\pi G} \int_{\Sigma}\mathrm{d}^3x N {\mathcal
H}
\end{equation}
in terms of the Hamiltonian density
\begin{equation}\label{Hg_def}
\h=\frac{E_i^a E^b_j}{\sqrt{|\det
E|}}\left(F_{ab}^k{\epsilon^{ij}}_{k}-2(1+\gamma^{-2})
(A-\Gamma)_a^{[i}(A-\Gamma)_b^{j]}\right),
\end{equation}
where the curvature of the Ashtekar connection is given by
\[
F_{ab}^k=2\D_{[a}A_{b]}^k+{\epsilon_{ij}}^{k} A_a^i A_b^j\,,
\]
$\gamma$ is the Barbero--Immirzi parameter and the spin connection
$\Gamma_a^i$ is considered as a functional of the densitized triad
(written explicitly in Eq.~(\ref{SpinConnection})). The presence
of an inverse of the triad determinant, whose quantization does not
have a direct inverse because it has a discrete spectrum containing
zero, suggests the presence of a primary correction function of
inverse-triad type,
\begin{equation} \label{HamConstQuant}
H_{\rm grav}^P[N] = \frac{1}{16\pi G}\int_{\Sigma}\mathrm{d}^3x N
\alpha(E^a_i) {\mathcal H} = H_{\rm grav}[\alpha N] =: H_{\rm
grav}[\tilde{N}] ~.
\end{equation}
For the same reason, primary quantum corrections $\nu(E_i^a)$ and
$\sigma(E_i^a)$ are introduced into the matter part of the Hamiltonian
constraint as
\begin{equation}
H^P_{\rm matter}[N]=\int_{\Sigma}\mathrm{d}^3x N
\left({\nu\h_\pi+\sigma\h_\nabla+\h_\varphi}\right),
\end{equation}
where
\begin{equation}\label{Hm_def}
\h_\pi=\frac{\pi^2}{2 \sqrt{|\det{E}|}},\quad
\h_\nabla=\frac{E^a_i E^b_i
\partial_a \varphi \partial_b
\varphi}{2\sqrt{|\det{E}|}},\quad\h_\varphi=\sqrt{|\det{E}|}V(\varphi)
\end{equation}
are the (classical) Hamiltonian densities corresponding to the
kinetic, gradient and potential terms respectively. Note that the
potential term is not expected to acquire primary quantum corrections
because there is no inverse of the triad in this term. Thus, the
correction does not simply amount to a rescaling of the lapse function
even if $\nu$ and $\sigma$ would equal $\alpha$.

In App.~\ref{a:Unpert} we discuss the Poisson brackets between unperturbed
$H^P[N]\equiv H_{\rm grav}^P[N]+H_{\rm matter}^P[N]$ and $D[N^a]\equiv D_{\rm
  grav}[N^a]+D_{\rm matter}[N^a]$, as well as between $H^P[N_1]$ and
$H^P[N_2]$.

\subsection{Perturbations}

Consider first an action which depends only on one scalar field
$\varphi$.  Generalizations to arbitrary tensor fields will be
considered in the following subsection and in App.~\ref{a:Poisson}.
After the Legendre transform the action takes the form
\be\label{Action}
S[\varphi]=\int{\!\md^4 x \left(\pi
\dot\varphi-\h(\varphi,\pi)\right)},
\ee
where $\h$ is the Hamiltonian density and $\pi$ is the field
momentum. Given a space-time slicing by constant-time surfaces, we
split the fields $\varphi$ and $\pi$ into their homogeneous parts
\be\label{Def_Hom}
\bar\varphi:=\f{1}{V_0}\int{\!\md^3 x\,\varphi}, \quad
\bar\pi:=\f{1}{V_0}\int{\!\md^3 x\,\pi}
\ee
and the inhomogeneous remainder
\be\label{Def_Inhom}
\delta\varphi:=\varphi-\bar\varphi, \quad \delta\pi:=\pi-\bar\pi.
\ee
Here, $V_0$ is the volume of a spatial slice if it is closed, or can
be thought of as a very large (but finite) infrared cutoff volume
otherwise. The coordinate size $V_0$ will only appear in basic
variables and their symplectic structure, but not in final equations
of motion.

We also require the inhomogeneities $\delta \varphi$ and $\delta
\pi$ to be small:
\be\label{Small_Pert}
\left|\f{\delta \varphi}{\bar\varphi}\right| \ll 1, \quad
\left|\f{\delta \pi}{\bar\pi}\right| \ll 1
\ee
for the slicing we use, so that this can be considered as
perturbations around homogeneous solutions.  For a generic
Hamiltonian, these conditions may at some point be violated during the
evolution. In fact, only a narrow class of Hamiltonians admits such a
splitting, for which the inhomogeneities remain smaller than the mean
fields at all times. At the moment, as is common in cosmology of the
early universe, we merely assume that (\ref{Small_Pert}) holds for the
regime under consideration. Hence, from now on we shall refer to
$\delta\varphi$ and $\delta\pi$ as perturbations and will also speak
of the first, second, etc.\ perturbative order, denoted by
superscripts $(1)$, $(2)$, \ldots\ in what follows.  Specifically, we
will be interested in the perturbations up to the second order in the
Hamiltonian, which implies a linear perturbation theory in terms of
equations of motion.

{}From the very definition of $\delta\varphi$ and $\delta\pi$ it
follows that any first-order quantity averages to zero. In
particular,
\be\label{2nd_class_constraints}
\chi_1:=\int{\md^3 x\lambda_1 \delta\varphi}=0, \quad
\chi_2:=\int{\md^3 x\lambda_2 \delta\pi}=0,
\ee
where $\lambda_1$ and $\lambda_2$ are `smearing' constants.%
\footnote{Obviously, the constant factors $\lambda_1$ and $\lambda_2$
  are not necessary in the constraints. They have been introduced to
  make a more explicit connection with the case of tensor fields,
  where such factors would be necessary to contract spatial and
  internal indices} Therefore the first term in the action
(\ref{Action}) splits into two parts:
\be\label{New_Sim_Str}
\int{\md^4 x \pi \dot\varphi} \equiv  \int{\md^4 x
(\bar\pi+\delta\pi)(
\dot{\bar\varphi}+\delta\dot\varphi)}=V_0\int{\md t \bar\pi
\dot{\bar\varphi}}+\int{\md^4 x \delta\pi \delta\dot\varphi},
\ee
yielding the basic Poisson brackets
\be\label{PB}
\left\{\bar\varphi,\bar\pi\right\}=\f{1}{V_0}, \quad
\left\{\delta\varphi(x),\delta\pi(y)\right\}=\delta^3(x-y)
\ee
and, for phase space functions,
$\{,\}:=\{,\}_{\bar\varphi,\bar\pi}+\{,\}_{\delta\varphi,\delta\pi}$,
where
\bq \label{PB_full}
\{F,G\}_{\bar\varphi,\bar\pi}&=&\f{1}{V_0}\left(\f{\D F}{\D
\bar\varphi}\f{\D G}{\D
\bar\pi}-\f{\D F}{\D \bar\pi}\f{\D G}{\D \bar\varphi}\right), \nonumber\\
\{F,G\}_{\delta\varphi,\delta\pi}&=&\int{\md^3 x\left(\f{\delta
F}{\delta(\delta \varphi)}\f{\delta G}{\delta(\delta
\pi)}-\f{\delta F}{\delta(\delta \pi)}\f{\delta G}{\delta(\delta
\varphi)}\right)}\,.
\eq
As discussed in App.~\ref{a:Poisson}, these brackets are not fully
general and in some cases care may be required, but they are
sufficient for calculations done here.

\subsection{Perturbed constraints}
\label{Pert_Constr_Alg}

So far we have used the Ashtekar connection $A_a^i$ as one of the
canonical variables as required for holonomies. From now on we will
explicitly use the spin connection and extrinsic curvature,
\[
A_a^i=\Gamma_a^i+\gamma K_a^i,
\]
where $\gamma$ is the Barbero--Immirzi parameter. Also the canonical pair
$K_a^i=\bar K_a^i + \delta K_a^i,E^a_i=\bar E_i^a + \delta E_i^a$ can
be split into the homogeneous parts
\[
 \bar K_a^i = \bar k \delta_a^i, \quad \bar E_i^a = \bar p
 \delta_i^a,
\]
corresponding to the flat FRW-background, and the inhomogeneous
perturbations which, for the scalar mode, are described by a pair
of scalar functions each:
\begin{equation}\label{ScalarPerturbations}
\delta K_a^i = \delta_a^i \kappa_1 + \partial_a \partial^i
\kappa_2, \quad \delta E_i^a = \delta_i^a \varepsilon_1 +
\partial_i \partial^a \varepsilon_2\,.
\end{equation}
Note that in the perturbed context, the independent phase space
variables are $(\bar k,\bar p)$ and $(\delta K_a^i,\delta E_i^a)$, and 
the non-trivial Poisson brackets between them are given by 
\footnote{See also App.{\ref{a:Poisson}} for details.}
\be\label{BasicPBGrav}
\left\{\bar k,\bar p\right\}=\f{8 \pi G}{3V_0}, \quad
\left\{ \delta K_a^i(x) , \delta E_j^b(y) \right\}= 
8\pi G \delta^i_j \delta_a^b\delta^3(x-y) ~.
\ee
The Hamiltonian density (\ref{Hg_def}), expressed in terms of the
extrinsic curvature, becomes
\begin{equation}\label{Hg_def_1}
\h=\epsilon_{i}^{jk}\frac{E_j^c E_k^d}{\sqrt{|\det E|}}\left[2
\D_c\Gamma_d^i+\epsilon^i_{mn}(\Gamma_c^m\Gamma_d^n-K_c^m
K_d^n)\right]+\h_{\gamma},
\end{equation}
where the last term is proportional to the Barbero--Immirzi parameter,
\[
\h_{\gamma}=\gamma\epsilon_{i}^{jk}\frac{2 E_j^c E_k^d}{\sqrt{|\det E|}}\left(
\D_c K_d^i+\epsilon^i_{mn}\Gamma_c^m K_d^n\right)\equiv
\gamma\epsilon_{i}^{jk}\frac{2 E_j^c E_k^d}{\sqrt{|\det E|}} D_c K_d^i=
2\gamma \frac{E^c_j}{\sqrt{|\det E|}} D_c {\cal G}^j
\]
with the Gauss constraint ${\cal G}^J$, which thus vanishes. Indeed, the
Gauss constraint implies that the extrinsic curvature can be written
as $K_d^i=K_{db}E^{bi}/\sqrt{|\det{E}|}$ where $K_{dc}=K_{cd}$.
Consequently,
\[
\h_{\gamma}\propto\epsilon_{ijk}\frac{E_j^c E_k^d E^b_i}{\det E}
D_c K_{db}=\epsilon^{bcd}D_c K_{db}=0.
\]
Thus the classical theory in $(K_a^i,E^b_j)$ is explicitly insensitive
to the Barbero--Immirzi parameter, as it should. The
$\gamma$-dependence, however, will appear in the correction functions
resulting from a quantization procedure (after which no unitary
transformation exists to change $\gamma$ without leaving a trace on
observable quantities).

The remaining part of the Hamiltonian density can be expanded
straightforwardly in a perturbation series, although the
spin-connection requires some care. Its full expression is
\begin{equation}\label{SpinConnection}
\Gamma_a^i=-\frac{1}{2}\epsilon^{ijk} E_j^b \left(\partial_a
E_b^k-\partial_b E_a^k+E_k^c E_a^l \partial _c E_b^l - E_a^k
\frac{\partial_b (\det E)}{\det E}\right)\,,
\end{equation}
where $E^i_a$ with a lower spatial index designates a co-triad of
density weight minus one, whose perturbed expression reads
\[E_a^l=\frac{1}{\bar p}
\delta_a^l-\frac{1}{\bar p^2}\delta E^c_k \delta_{ca}\delta^{kl}.
\]
The first order part of (\ref{SpinConnection}),
\begin{equation}\label{SpinConnectionLin}
\delta \Gamma_a^i=\frac{1}{2\bar p}\left(
\epsilon_c^{ij}\delta_a^b-\epsilon_c^{ib}\delta_a^j+\epsilon^{ijb}
\delta_{ac}+\epsilon_a^{ib}\delta_c^j\right) \partial_b\delta
E_j^c\,,
\end{equation}
is simplified significantly for a scalar perturbation of the form
(\ref{ScalarPerturbations}).  The diagonal part of $\delta E_j^c$ (in the term
$\D_b\delta E^c_j$, which is $\delta^c_j \partial_b\varepsilon_1$,)
contributes to the linearized spin connection
\[
\delta \Gamma_a^{i(\rm diag)}= \frac{1}{2\bar p}\left(0-\epsilon_a^{ib}\D_b \varepsilon_1-
\epsilon_a^{ib}\D_b\varepsilon_1+3\epsilon_a^{ib} \D_b \varepsilon_1\right)\\
=\frac{1}{2\bar p}\epsilon_a^{ib}\D_b \varepsilon_1\,.
\]
For the off-diagonal perturbation, on the other hand, the
expression $\D_b \delta E^c_j\equiv \D_b\D^c\D_j \varepsilon_2$ is
symmetric in the indices $b,c,j$, implying that only one term in
(\ref{SpinConnectionLin}) remains:
\[\delta \Gamma_a^{i(\rm off-diag)}=\frac{1}{2\bar p}\left[
0-0+0+\epsilon_a^{ib} \D_b \Delta \varepsilon_2 \right]\\
=\frac{1}{2\bar p}\epsilon_a^{ib}\D_b \Delta \varepsilon_2.
\]
Combining the last two expressions, we obtain the
linearized spin connection
\begin{equation}
\delta \Gamma_a^i =
\delta \Gamma_a^{i{\rm(diag)}}+\delta \Gamma_a^{i{\rm(off-diag)}}
 = \frac{1}{2\bar p}\epsilon_a^{ij}\D_j
\left(\varepsilon_1+ \Delta \varepsilon_2\right).
\end{equation}
Note that this expression is diffeomorphism invariant to linear
order. Remarkably, the (gradient of the) term in the parenthesis can
be expressed as the divergence of the unsplit triad perturbation
\[
\D_j \left(\varepsilon_1+ \Delta \varepsilon_2\right) = \D_a
\delta E_j^a,
\]
which can be easily checked by inspection. Thus, for scalar mode the
linearized spin connection can be expressed as
\begin{equation}\label{Spin_Con_Lin}
\delta \Gamma_a^i =\frac{1}{2\bar p}\epsilon_a^{ij}\D_b \delta
E_j^b\,.
\end{equation}
The second order part of the gravitational Hamiltonian constraint also
contains a term quadratic in $\Gamma_i^a$. However, as such a term is
necessarily multiplied with a background quantity, the term becomes
proportional to the trace of the spin connection, $\delta_i^a
\Gamma_a^i$. For the scalar perturbation, the latter can be shown to
vanish up to at least third order using similar symmetry arguments.

In the spin connection part of the Hamiltonian, the first order
term is contributed solely by the `derivative term'
\[
\left[2 \epsilon_i^{~jk}\frac{E_j^c E_k^d }{\sqrt{|\det E|}} \D_c
\delta \Gamma_d^i\right]^{(1)}=\frac{2}{\sqrt{\bar p}} \D^i\D_a
\delta E^a_i,
\]
whereas the second order part comes from both the derivative and
the quadratic terms:
\[
\left[2 \epsilon_i^{~jk}\frac{E_j^c E_k^d }{\sqrt{|\det E|}} \D_c
\delta \Gamma_d^i\right]^{(2)}=\frac{1}{\bar p^{3/2}}
\delta^{ij}\delta E_i^a \D_a\D_b \delta E^b_j,
\]
and
\[
\left[\epsilon_i^{~jk}\frac{E_j^c E_k^d }{\sqrt{|\det
E|}}\epsilon^i_{~mn}\delta \Gamma^m_c \delta
\Gamma_d^n\right]^{(2)}=\frac{1}{2\bar p^{3/2}}
\delta^{ij}\D_a\delta E_i^a \D_b \delta E^b_j.
\]
Combining the last two terms, we obtain (up to a total divergence)
\begin{equation}
\left[\epsilon_i^{~jk}\frac{E_j^c E_k^d }{\sqrt{|\det
E|}}\left(2\D_c \delta \Gamma_d^i+\epsilon^i_{~mn}\delta
\Gamma^m_c \delta \Gamma_d^n\right)\right]^{(2)}=-\frac{1}{2\bar
p^{3/2}} \delta^{ij}\D_a\delta E_i^a \D_b \delta E^b_j.
\end{equation}
Expanding also the extrinsic curvature term, we thus arrive at the
expression for the gravitational Hamiltonian density ${\cal H}={\cal
  H}^{(0)}+ {\cal H}^{(1)}+{\cal H}^{(2)}$ with
\begin{eqnarray} \label{HamConstH0}
{\mathcal H}^{(0)} &=& -6\bar{k}^2\sqrt{\bar p}~,
%
\nonumber\\ \label{HamConstH1} {\mathcal H}^{(1)} &=&
-4 \bar{k}\sqrt{\bar{p}} \delta^c_j\delta K_c^j
-\frac{\bar{k}^2}{\sqrt{\bar{p}}} \delta_c^j\delta E^c_j
+\frac{2}{\sqrt{\bar{p}}}
\partial_c\partial^j\delta E^c_j  ~,
\nonumber\\ \label{HamConstH2} {\mathcal H}^{(2)} &=&
\sqrt{\bar{p}} \delta K_c^j\delta K_d^k\delta^c_k\delta^d_j -
\sqrt{\bar{p}} (\delta K_c^j\delta^c_j)^2
-\frac{2\bar{k}}{\sqrt{\bar{p}}} \delta E^c_j\delta K_c^j
\nonumber\\
&& \quad -\frac{\bar{k}^2}{2\bar{p}^{3/2}} \delta E^c_j\delta
E^d_k\delta_c^k\delta_d^j
+\frac{\bar{k}^2}{4\bar{p}^{3/2}}(\delta E^c_j\delta_c^j)^2
-\frac{\delta^{jk} }{2\bar{p}^{3/2}}(\partial_c\delta E^c_j)
(\partial_d\delta E^d_k)  ~.
\end{eqnarray}

Likewise, the perturbed
diffeomorphism constraint including up to quadratic order in
perturbations is
\begin{equation} \label{PertDiffConst}
D_{\rm grav}[N^a] = \frac{1}{8\pi G}\int_{\Sigma}\mathrm{d}^3x\delta N^c
\left[\bar{p}\partial_c(\delta^d_k \delta K^k_d)
-\bar{p}(\partial_k\delta K^k_c)- \bar{k} \delta_c^k(
\partial_d \delta E^d_k)\right] ~.
\end{equation}

We now consider the contribution from the scalar matter sector.
The classical Hamiltonian is given by
\begin{equation}
H_{\rm matter}[N]=\int_{\Sigma}\mathrm{d}^3x N \left({{\mathcal
H}_\pi+{\mathcal H}_\nabla+{\mathcal H}_\varphi}\right),
\end{equation}
where the kinetic, gradient and potential terms are defined in
(\ref{Hm_def}). Again, we have a perturbation expansion with
\begin{equation} \label{SFHamConstH0}
{\mathcal H}_\pi^{(0)} =
\frac{\bar{\pi}_{\bar\varphi}^2}{2\bar{p}^{3/2}} \quad,\quad
{\mathcal H}_\nabla^{(0)} = 0 \quad,\quad {\mathcal H}_\varphi^{(0)} =
\bar{p}^{3/2} V(\bar{\varphi}) ~,
\end{equation}
\begin{equation} \label{SFHamConstH1}
{\mathcal H}_\pi^{(1)} = \frac{\bar{\pi}
\delta{\pi}}{\bar{p}^{3/2}} -\frac{\bar{\pi}^2}{2\bar{p}^{3/2}}
\frac{\delta_c^j \delta E^c_j}{2\bar{p}} , \quad {\mathcal
H}_\nabla^{(1)} = 0 ,\quad {\mathcal H}_\varphi^{(1)} =
\bar{p}^{3/2}\left( V_{,\varphi}(\bar{\varphi}) \delta\varphi
+V(\bar{\varphi}) \frac{\delta_c^j \delta
E^c_j}{2\bar{p}}\right)
\end{equation}
and
\begin{eqnarray} \label{SFHamConstH2}
{\mathcal H}_\pi^{(2)} &=&
\frac{1}{2}\frac{{\delta{\pi}}^2}{\bar{p}^{3/2}} -\frac{\bar{\pi}
\delta{\pi}}{\bar{p}^{3/2}} \frac{\delta_c^j \delta
E^c_j}{2\bar{p}} +\frac{1}{2}\frac{\bar{\pi}^2}{\bar{p}^{3/2}}
\left( \frac{(\delta_c^j \delta E^c_j)^2}{8\bar{p}^2}
+\frac{\delta_c^k\delta_d^j\delta E^c_j\delta E^d_k}{4\bar{p}^2}
\right) ~, \nonumber\\
{\mathcal H}_\nabla^{(2)} &=& \frac{1}{2}\sqrt{\bar{p}}\delta^{ab}
\partial_a\delta \varphi \partial_b\delta \varphi \nonumber\\
{\mathcal H}_\varphi^{(2)} &=& \frac{1}{2}\bar{p}^{3/2}
V_{,\varphi\varphi}(\bar{\varphi}) {\delta\varphi}^2
+\bar{p}^{3/2} V_{,\varphi}(\bar{\varphi}) \delta\varphi
\frac{\delta_c^j \delta E^c_j}{2\bar{p}}
\nonumber\\
&&\quad + \bar{p}^{3/2} V(\bar{\varphi})\left( \frac{(\delta_c^j
\delta E^c_j)^2}{8\bar{p}^2} -\frac{\delta_c^k\delta_d^j\delta
E^c_j\delta E^d_k}{4\bar{p}^2} \right) ~,
\end{eqnarray}
The perturbed diffeomorphism constraint for the scalar matter field is
\begin{equation} \label{SFPertDiffConst}
D_{\rm matter}[N^a] = \int_{\Sigma}\mathrm{d}^3x\delta N^c
\bar{\pi}\partial_c\delta \varphi ~.
\end{equation}
In the following two subsections we explicitly compute the Poisson
brackets between the perturbed classical constraints and show that
their algebra is closed. At the same time, we will be including
primary correction functions and see how the algebra changes.

\subsection{Poisson bracket between Hamiltonian
and diffeomorphism constraints}

We begin by considering the gravitational sector of the classical
constraint algebra. We will see later that for computational purposes
it is convenient to split the classical perturbed gravitational
Hamiltonian as
\begin{equation} \label{ClassPertHamConst0}
H_{\rm grav}[N] =  \frac{1}{16\pi G}\int_{\Sigma}\mathrm{d}^3x N
{\mathcal H} = H_{\rm grav}[\delta N]+ H_{\rm grav}[\bar{N}] ~.
\end{equation}
Here $H_{\rm grav}[\delta N]$ includes only the perturbed component of
the lapse function whereas $H_{\rm grav}[\bar{N}]$ involves only the
background lapse. Explicit expressions for each part of the perturbed
Hamiltonian constraint are
\begin{equation}
\label{ClassPertHamConst}
H_{\rm grav}[\bar{N}] = \frac{1}{16\pi G}\int
\mathrm{d}^3x\bar{N}\left[ {\mathcal H}^{(0)}
+ {\mathcal H}^{(2)}\right] \quad,\quad
H_{\rm grav}[\delta N] =  \frac{1}{16\pi G}\int \mathrm{d}^3x
 \delta N{\mathcal H}^{(1)} \,,
\end{equation}
where perturbed Hamiltonian densities are given in equations
(\ref{HamConstH0}). We consider now the Poisson bracket between the
gravitational Hamiltonian $H_{\rm grav}[N]$ in
(\ref{ClassPertHamConst}) and the gravitational diffeomorphism
constraint $D_{\rm grav}[N^a]$ in (\ref{PertDiffConst}):
\begin{equation} \label{HDClassical}
\{H_{\rm grav}[N], D_{\rm
    grav}[N^a]\} = - H_{\rm grav}[\delta N^a\partial_a \delta N] ~.
\end{equation}
This Poisson bracket between classical perturbed constraints
(\ref{HDClassical}) is very similar to its counterpart between
the full classical constraints \cite{ALRev}, also computed in App.~\ref{a:Unpert}. This
demonstrates the consistency of perturbed constraint expressions
and elementary Poisson brackets between background and perturbed
basic variables.

As in the gravitational sector, the classical perturbed Hamiltonian
for the scalar matter field including up to quadratic terms in
perturbations can be expressed as $H_{\rm matter}[N] = H_{\rm
matter}[\delta N]+ H_{\rm matter}[\bar{N}]$ where
\begin{eqnarray} \label{SFClassPertHamConst}
H_{\rm matter}[\bar{N}] &:=& \int\mathrm{d}^3x \bar{N}\left[\left(
{\mathcal H}_\pi^{(0)}+{\mathcal H}_\varphi^{(0)}\right) +
\left({\mathcal H}_\pi^{(2)}+{ \mathcal H}_\nabla^{(2)}+
{\mathcal H}_\varphi^{(2)}\right) \right] ~, \nonumber\\
H_{\rm matter}[\delta N] &:=& \int\mathrm{d}^3x \delta N \left[
{\mathcal H}_\pi^{(1)}+{\mathcal H}_\varphi^{(1)} \right] ~.
\end{eqnarray}
Perturbed Hamiltonian densities for scalar matter are given in
equations (\ref{SFHamConstH0}), (\ref{SFHamConstH1}) and
(\ref{SFHamConstH2}). The Poisson bracket between the matter
Hamiltonian constraint and the total diffeomorphism constraint can be
computed as
\begin{equation} \label{SFHDClassical}
\{H_{\rm matter}[N], D_{\rm grav}[N^a]+D_{\rm matter}[N^a]\} = -
H_{\rm matter}[\delta N^a\partial_a \delta
N] ~.
\end{equation}
Combining gravitational sector (\ref{HDClassical}) and matter sector
(\ref{SFHDClassical}) contributions, we can evaluate the Poisson
bracket between the total Hamiltonian and diffeomorphism constraints
as
\begin{eqnarray} \label{TotalHDClassical}
\{H[N], D[N^a]\} &=& \{H_{\rm grav}[N], D_{\rm grav}[N^a]\} +
\{H_{\rm matter}[N],
D_{\rm grav}[N^a]+D_{\rm matter}[N^a]\}\nonumber\\
&=& - H_{\rm grav}[\delta N^a\partial_a \delta N] - H_{\rm matter}[\delta
N^a\partial_a \delta N] ~=~ - H[\delta N^a\partial_a \delta N] ~.
\end{eqnarray}
Clearly, perturbed expressions of total constraints along with
elementary Poisson brackets between background and perturbed basic
variables satisfy the same Poisson brackets as the full expressions.

We now analyze the situation for primary corrected constraints.  As in
the classical situation, we split the primary quantum corrected
gravitational Hamiltonian constraint as
\begin{eqnarray} \label{QuantPertHamConst} H_{\rm grav}^P[N] &=&
  \frac{1}{16\pi G}\int \mathrm{d}^3x N \alpha( E^a_i) {\mathcal H} =
  H_{\rm grav}^P[\delta N]+ H_{\rm grav}^P[\bar{N}] ~.
\end{eqnarray}
The part of Hamiltonian constraint containing only the perturbed lapse
function $H_{\rm grav}^P[\delta N]$ and the part of Hamiltonian
constraint containing only background lapse $H_{\rm grav}^P[\bar{N}]$
are defined as
\begin{eqnarray} \label{QuantPertHamConstExpr}
H_{\rm grav}^P[\bar{N}] &:=& \frac{1}{16\pi G} \int\mathrm{d}^3x
\bar{N}\left( \bar{\alpha}{\mathcal H}^{(0)}
+ \alpha^{(2)}{\mathcal H}^{(0)}
+ \alpha^{(1)}{\mathcal H}^{(1)}
+ \bar{\alpha}{\mathcal H}^{(2)}\right) \nonumber\\
H_{\rm grav}^P[\delta N] &:=& \frac{1}{16\pi G}\int\mathrm{d}^3x
 \delta N \left(\bar{\alpha}{\mathcal H}^{(1)}
+\alpha^{(1)} {\mathcal H}^{(0)}\right) ~,
\end{eqnarray}
where $\bar\alpha \equiv \alpha^{(0)}$. The Poisson bracket between
the primary quantum corrected Hamiltonian constraint
(\ref{QuantPertHamConst}) and the diffeomorphism constraint
(\ref{PertDiffConst}) can be computed as
\begin{eqnarray}\label{HDQuant}
\{H_{\rm grav}^P[N], D_{\rm grav}[N^a]\} = -
H_{\rm grav}^P[\delta N^a\partial_a\delta N]
+ {\mathcal A}_{\rm grav}^{HD}
\end{eqnarray}
where
\begin{eqnarray} \label{HDAnomaly}
{\mathcal A}_{\rm grav}^{HD} &=& -\frac{1}{16\pi G}\int\mathrm{d}^3x
(\partial_c\delta N^j)\bar{p}
\left[ \delta{N}{\mathcal H}^{(0)}
 \frac{\partial\alpha^{(1)}}{\partial(\delta E^a_i)}
(\delta^a_j \delta^c_i - \delta^c_j \delta^a_i)
+ \bar{N}\left\{ {\mathcal H}^{(0)} \left(
 -\frac{\alpha^{(1)}}{\bar{p}}\delta^c_j
\right.\right.\right. \nonumber\\ && \left.\left.\left. \quad\quad
+\frac{1}{3}\frac{\partial\bar{\alpha}}{\partial\bar{p}}
\frac{\delta E^c_j}{\bar{p}}
+ \frac{\partial\alpha^{(2)}}{\partial(\delta E^a_i)}
(\delta^a_j \delta^c_i - \delta^c_j \delta^a_i)\right)
+ {\mathcal H}^{(1)}
\frac{\partial\alpha^{(1)}}{\partial(\delta E^a_i)}
(\delta^a_j \delta^c_i - \delta^c_j \delta^a_i)
\right\} \right]
\end{eqnarray}
would appear as an anomaly if primary corrected constraints were used
as quantum constraints: the Poisson bracket (\ref{HDQuant}) between
the quantum corrected Hamiltonian constraint and diffeomorphism
constraint has additional terms which cannot be expressed completely
in terms of the gravitational constraints for any lapse function or
shift vector. Thus, these terms in the constraint algebra are
potentially anomalous.

Next we explore this issue for the quantum corrected scalar matter sector.
Similarly to the classical Hamiltonian constraint, the quantum
corrected matter Hamiltonian can be split as
\begin{eqnarray} \label{SFQuantPertHamConst}
H_{\rm matter}^P[N] &=& \int\mathrm{d}^3x N \left[ \nu(E^a_i) {\mathcal
H}_\pi +\sigma(E^a_i){\mathcal H}_\nabla +{\mathcal H}_\varphi
\right] =: H_{\rm matter}^P[\delta N]+ H_{\rm matter}^P[\bar{N}] ~.
\end{eqnarray}
The two parts $H_{\rm matter}^P[\delta N]$ and $H_{\rm
matter}^P[\bar{N}]$ of the matter Hamiltonian are defined as
\begin{eqnarray} \label{SFQuantPertHamConst1}
H_{\rm matter}^P[\bar{N}] &:=& \int\mathrm{d}^3x \bar{N} \left[
\left(\bar{\nu}{\mathcal H}_\pi^{(0)}+{\mathcal H}_\varphi^{(0)}\right)
+\left( \nu^{(2)}{\mathcal H}_\pi^{(0)}
+\nu^{(1)}{\mathcal H}_\pi^{(1)}
+\bar{\nu} {\mathcal H}_\pi^{(2)}+
 \bar{\sigma}{\mathcal H}_\nabla^{(2)}+
{\mathcal H}_\varphi^{(2)}\right) \right] \nonumber\\
H_{\rm matter}^P[\delta{N}] &:=& \int\mathrm{d}^3x \delta N
\left[\nu^{(1)}{\mathcal H}_\pi^{(0)}
+ \bar{\nu}{\mathcal H}_\pi^{(1)}
+ {\mathcal H}_\varphi^{(1)} \right] ~,
\end{eqnarray}
where $\bar{\nu}\equiv\nu^{(0)}$. Here $\nu^{(0)}$, $\nu^{(1)}$, and
$\nu^{(2)}$ denote zeroth, first and second order terms in
perturbations of the quantum correction function $\nu$. The Poisson
bracket between the quantum corrected scalar matter Hamiltonian
(\ref{SFQuantPertHamConst}) and the total diffeomorphism constraint
$D_{\rm grav}[N^a]+D_{\rm matter}[N^a]$ can be computed as
\begin{equation}\label{SFHDQuant}
\{H_{\rm matter}^P[N], D_{\rm grav}[N^a]+D_{\rm matter}[N^a]\} = -
H_{\rm matter}^P[\delta
N^a\partial_a\delta N] + {\mathcal A}_{\rm matter}^{HD}\,.
\end{equation}
As in the gravitational sector, there are additional terms
present also in the matter sector Poisson bracket which are
\begin{eqnarray} \label{SFHDAnomaly}
{\mathcal A}_{\rm matter}^{HD} &=& -\int\mathrm{d}^3x(\partial_c\delta N^j)
\bar{p}\left[ \delta{N} {\mathcal H}_\pi^{(0)}
\frac{\partial\nu^{(1)}}{\partial(\delta E^a_i)}
(\delta^a_j \delta^c_i - \delta^c_j \delta^a_i)
+ \bar{N}\left\{ {\mathcal H}_\pi^{(0)}
\left( -\frac{\nu^{(1)}}{\bar{p}}\delta^c_j
\right.\right.\right. \nonumber\\ && \left.\left.\left.
+ \frac{1}{3}\frac{\partial\bar{\nu}}{\partial\bar{p}}
\frac{\delta E^c_j}{\bar{p}}
+ \frac{\partial \nu^{(2)}}{\partial(\delta E^a_i)}
(\delta^a_j \delta^c_i - \delta^c_j \delta^a_i)\right)
+ {\mathcal H}_\pi^{(1)}
\frac{\partial\nu^{(1)}}{\partial(\delta E^a_i)}
(\delta^a_j \delta^c_i - \delta^c_j \delta^a_i)
\right\}
\right] ~.
\end{eqnarray}
Matter sector anomaly terms are similar to anomaly terms in the
gravitational sector, but there are important differences. In
particular, matter anomaly terms involve only the kinetic sector of
the matter Hamiltonian density $\mathcal H_{\pi}$.  In contrast,
gravitational anomaly terms contain the total gravitational
Hamiltonian density $\mathcal H$. Thus, one cannot even hope to
combine all anomaly terms to form the total Hamiltonian constraint for
specific correction functions. Moreover, cancellation would require a
lapse depending on correction functions. The requirement of an
anomaly-free constraint algebra then demands that gravitational sector
and matter sector anomaly terms must vanish separately. Combining
contributions from the gravitational sector (\ref{HDQuant}) and the
matter sector (\ref{SFHDQuant}), we can write the Poisson bracket
between the quantum corrected total Hamiltonian constraint and the
total diffeomorphism constraint as
\begin{eqnarray} \label{TotalHDQuant}
\{H^P[N], D[N^a]\} = - H^P[\delta N^a\partial_a \delta N]
+{\mathcal A}_{\rm grav}^{HD} + {\mathcal A}_{\rm matter}^{HD}~.
\end{eqnarray}

\subsection{Poisson bracket between two Hamiltonian constraints}

We now consider the Poisson bracket between two Hamiltonian
constraints smeared with different lapse functions.  It can be split
into three components as follows
\begin{eqnarray} \label{HHFullClass}
\{H[N_1], H[N_2]\} &=&  \{H_{\rm grav}[N_1], H_{\rm grav}[N_2]\} +
\{H_{\rm matter}[N_1], H_{\rm matter}[N_2]\} \nonumber\\ && ~+ \left[
\{H_{\rm matter}[N_1], H_{\rm grav}[N_2]\} -(N_1\leftrightarrow N_2)\right] ~.
\end{eqnarray}
Using the perturbed expression of the classical gravitational
Hamiltonian (\ref{ClassPertHamConst}) we compute the Poisson
bracket between gravitational Hamiltonian constraints as
\begin{eqnarray} \label{HHClassical}
\{H_{\rm grav}[N_1], H_{\rm grav}[N_2]\} &=&
\{H_{\rm grav}[\delta N_1],H_{\rm grav}[\bar{N}]\}+
\{H_{\rm grav}[\bar{N}],H_{\rm grav}[\delta N_2]\}\\
&=&\{H_{\rm grav}[\delta N_1-\delta N_2], H_{\rm grav}[\bar{N}]\}
=  D_{\rm grav}\left[\frac{\bar{N}}{\bar{p}}
\partial^a (\delta N_2-\delta N_1)\right] ~,\nonumber
\end{eqnarray}
where we have used the property that
$\{H_{\rm grav}[\delta N_1], H_{\rm grav}[\delta N_2]\} = 0$.
Similarly, using the perturbed expression of the classical scalar
matter Hamiltonian (\ref{SFClassPertHamConst}), we compute the
pure matter sector contribution as
\begin{equation} \label{SFHHClassical}
\{H_{\rm matter}[N_1], H_{\rm matter}[N_2]\} =
D_{\rm matter}\left[\frac{\bar{N}}{\bar{p}}
\partial^a (\delta N_2-\delta N_1)\right] ~.
\end{equation}
It is easy to show that the net contribution from the Poisson bracket
between gravitational Hamiltonian and matter Hamiltonian parts in the
constraint vanishes. In particular,
\begin{equation} \label{SFGravHHClassical}
\{H_{\rm matter}[N_1], H_{\rm grav}[N_2]\} - (N_1\leftrightarrow N_2) = 0 ~.
\end{equation}
Combining equations (\ref{HHClassical}), (\ref{SFHHClassical}) and
(\ref{SFGravHHClassical}) we evaluate the Poisson bracket between
total Hamiltonian constraints as
\begin{equation} \label{TotalHHClassical}
\{H[N_1], H[N_2]\} =  D\left[\frac{\bar{N}}{\bar{p}}
\partial^a (\delta N_2-\delta N_1)\right] ~.
\end{equation}
Thus, the perturbed expression of the classical Hamiltonian constraint
indeed satisfies the same Poisson bracket with itself as its
unperturbed expression.

Now using the perturbed expression of the primary quantum corrected
Hamiltonian (\ref{QuantPertHamConst}) we compute the Poisson
bracket between quantum corrected gravitational Hamiltonian
constraints as
\begin{equation} \label{HHQuant}
\{H_{\rm grav}^P[N_1], H_{\rm grav}^P[N_2]\} =
\{H_{\rm grav}[\delta N_1-\delta N_2], H_{\rm grav}[\bar{N}]\}
= D_{\rm grav}\left[ \bar{\alpha}^2
\frac{\bar{N} }{\bar{p}}\partial^a (\delta N_2-\delta N_1)\right]
+ {\mathcal A}_{\rm grav}^{HH}
\end{equation}
with
\begin{eqnarray} \label{HHAnomaly}
{\mathcal A}_{\rm grav}^{HH} = \frac{1}{8\pi G}\int\mathrm{d}^3x \bar{N}
(\delta N_1-\delta N_2)\left[
(\bar{\alpha} \bar{k}^2 \bar{p} \delta K_c^j)
\left\{-2\frac{\partial\bar{\alpha}}{\partial\bar{p}}\delta^c_j
+\frac{\partial \alpha^{(1)}}{\partial (\delta E^a_i)}
(\delta^c_j\delta^a_i + 3 \delta^c_i\delta^a_j) \right\}
\right. \nonumber\\ \left.
+ (2\bar{\alpha}\bar{k} \partial_c\partial^j\delta E^c_j)
\left\{ \frac{\partial\bar{\alpha}}{\partial\bar{p}} -
\frac{\partial \alpha^{(1)}}{\partial (\delta E^a_i)}\delta^a_i
\right\}
+ (6\bar{\alpha}\bar{k}^3 \bar{p}) \left\{
\frac{\partial \alpha^{(2)}}{\partial (\delta E^a_i)}\delta^a_i
-\frac{\partial \alpha^{(1)}}{\partial \bar{p}}\right\}
\right. \nonumber\\ \left.
+ (6\alpha^{(1)}\bar{k}^3 \bar{p})\left\{
\frac{\partial \bar{\alpha}}{\partial \bar{p}}
-\frac{\partial \alpha^{(1)}}
{\partial (\delta E^a_i)}\delta^a_i \right\}
+(\bar{\alpha}\bar{k}^3\delta E^c_j)\frac{\partial \alpha^{(1)}}
{\partial (\delta E^a_i)} \left\{\delta_c^j\delta^a_i
-3\delta_c^a\delta_i^j\right\}
\right] ~.
\end{eqnarray}
Similarly, using equation (\ref{SFQuantPertHamConst}) we compute
the Poisson bracket between primary quantum corrected scalar matter
Hamiltonians as
\begin{equation} \label{SFHHQuant}
\{H_{\rm matter}^P[N_1], H_{\rm matter}^P[N_2]\} =
D_{\rm matter}\left[\bar{\nu}\bar{\sigma}
\frac{\bar{N}}{\bar{p}}
\partial^a (\delta N_2-\delta N_1)\right] ~.
\end{equation}
Eq.~(\ref{SFHHQuant}) is analogous to its classical counterpart except
that the new shift vector for the resulting diffeomorphism constraint
now contains quantum correction functions $\bar{\nu}$ and
$\bar{\sigma}$.  Net contributions from the Poisson bracket between
quantum corrected gravitational Hamiltonian and matter Hamiltonian
constraints are
\begin{equation} \label{SFGravHHQuant}
\{H_{\rm matter}^P[N_1], H_{\rm grav}^P[N_2]\} - (N_1\leftrightarrow N_2)
= {\mathcal A}_m^{HH}
\end{equation}
with
\begin{eqnarray} \label{SFHHAnomaly}
{\mathcal A}_{\rm matter}^{HH} = \int\mathrm{d}^3x \bar{N}
(\delta N_1-\delta N_2)\left[
\frac{\bar{\pi}^2}{2\bar{p}^{3/2}}(\sqrt{\bar{p}} \delta K_c^j)
\left\{-\frac{2}{3}\frac{\partial\bar{\nu}}{\partial\bar{p}}\delta^c_j
+\frac{\partial \nu^{(1)}}{\partial (\delta E^a_i)}
(\delta^c_j\delta^a_i - \delta^c_i\delta^a_j) \right\}
\right. \nonumber\\ \left.
+\frac{\bar{\pi}\delta\pi}{\bar{p}^{3/2}}(2\bar{k}\sqrt{\bar{p}})
\left\{ \frac{\partial\bar{\nu}}{\partial\bar{p}} -
\frac{\partial\nu^{(1)}}{\partial(\delta E^a_i)}\delta^a_i
\right\}
-\frac{\bar{\pi}^2}{2\bar{p}^{3/2}}(2\bar{k}\sqrt{\bar{p}})
\left\{ \frac{\partial\nu^{(2)}}{\partial (\delta E^a_i)}\delta^a_i
-\frac{\partial \nu^{(1)}}{\partial \bar{p}}\right\}
\right. \nonumber\\ \left.
+\frac{\bar{\pi}^2}{2\bar{p}^{3/2}}
(\frac{\bar{k}}{\sqrt{\bar{p}}} \delta E^c_j)
\left\{-\frac{4}{3}\frac{\partial\bar{\nu}}{\partial\bar{p}}\delta_c^j
+\frac{\partial \nu^{(1)}}{\partial (\delta E^a_i)}
(\delta_c^j\delta^a_i + \delta_c^a\delta_i^j) \right\}
\right]\,.
\end{eqnarray}
Using equations (\ref{HHQuant}), (\ref{SFHHQuant}) and
(\ref{SFGravHHQuant}) we can combine contributions from the
gravitational and matter sectors to express the Poisson bracket
between primary quantum corrected total Hamiltonians:
\begin{eqnarray} \label{TotalHHQuant}
\{H^P[N_1],H^P[N_2]\} &=& D\left[\bar{\alpha}^2\frac{\bar{N}}{\bar{p}}
\partial^a (\delta N_2-\delta N_1)\right]
\nonumber\\
&+& D_{\rm matter}\left[(\bar{\nu}\bar{\sigma}-\bar{\alpha}^2)
\frac{\bar{N}}{\bar{p}}\partial^a(\delta N_2-\delta N_1)\right]
+{\mathcal A}_{\rm grav}^{HH}+ {\mathcal A}_{\rm matter}^{HH} ~.
\end{eqnarray}

\subsection{Conditions for an anomaly-free constraint algebra}

In contrast to the classical situation, we have seen that primary
quantum corrected constraints fail to form a first class constraint
algebra for arbitrary correction functions. To interpret this
properly, we recall that quantum correction functions that we have
used as a guideline in the Hamiltonian constraint are not completely
known. In particular, one can compute only zeroth order terms using
homogeneous models \cite{InvScale,Ambig,ICGC} (as well as some
partially gauge-fixed inhomogeneous cases using lattice states
\cite{QuantCorrPert}).  Linear and quadratic terms in perturbations of
quantum correction functions can in principle be computed using the
machinery of the full theory \cite{BoundFull}. However such
computations are not yet available. In this section we will analyze
whether there are conditions on quantum correction functions that we
must impose based solely on the requirement of an anomaly-free
constraint algebra.

First, we note that the explicit appearance of the matter
diffeomorphism constraint in equation (\ref{TotalHHQuant}), drops out
if the quantum correction functions satisfy
\begin{equation} \label{ANSRelation}
\bar{\alpha}^2  =  \bar{\nu} \bar{\sigma} ~.
\end{equation}
This requirement may be seen as a consistency relation between
gravitational and matter correction functions, which has important
physical implications: For instance, it ensures that gravitational
waves and massless scalar fields propagate with the same group velocity
given by the physical speed of light \cite{tensor}.

Furthermore, in order for the constraint algebra to be closed,
${\mathcal A}_{\rm grav}^{HD}$ should vanish irrespective of the
choice of lapse function.  In other words, anomaly terms involving
background lapse $\bar N$ and perturbed lapse $\delta N$ must vanish
independently.  This requirement leads to
\begin{equation} \label{HDAnomalyFreeCond}
\alpha^{(1)} = 0 \quad\mbox{ and }\quad
\frac{1}{3}\frac{\partial\bar{\alpha}}{\partial\bar{p}}
\frac{\delta E^c_j}{\bar{p}}
+ \frac{\partial\alpha^{(2)}}{\partial(\delta E^a_i)}
(\delta^a_j \delta^c_i - \delta^c_j \delta^a_i) = 0 ~.
\end{equation}
On the other hand, from equation (\ref{HHAnomaly}) the conditions
\begin{equation} \label{HHAnomalyFreeCond}
\frac{1}{3}\frac{\partial\bar{\alpha}}{\partial\bar{p}}
\delta^a_i =\frac{\partial\alpha^{(1)}}{\partial(\delta E^a_i)}
\quad\mbox{ and }\quad \frac{\partial\alpha^{(1)}}{\partial\bar{p}} =
\frac{\partial\alpha^{(2)}}{\partial(\delta E^a_i)}\delta^a_i
\end{equation}
are required to make ${\mathcal A}_{\rm grav}^{HH}$ vanish. Clearly,
the requirement of an anomaly-free constraint algebra imposes
restrictions on the first and second order terms of the quantum
correction functions.  However, it is evident that equations
(\ref{HDAnomalyFreeCond}) and (\ref{HHAnomalyFreeCond}) are
over-complete for the unknown functions $\alpha^{(1)}$ and
$\alpha^{(2)}$. Importantly, these conditions are incompatible with
each other for non-trivial primary corrections and admit only trivial
solutions of $\bar{\alpha}={\rm constant}$, $\alpha^{(1)}=0$ and
$\alpha^{(2)}=0$ which is just the classical situation without quantum
corrections. The situation for the scalar matter sector is very
similar. Using equation (\ref{SFHDAnomaly}) it is easy to see that
${\mathcal A}_{\rm grav}^{HD}=0$ requires
\begin{equation} \label{SFHDAnomalyFreeCond}
\nu^{(1)} = 0 \quad\mbox{ and }\quad
\frac{1}{3}\frac{\partial\bar{\nu}}{\partial\bar{p}}
\frac{\delta E^c_j}{\bar{p}}
+ \frac{\partial\nu^{(2)}}{\partial(\delta E^a_i)}
(\delta^a_j \delta^c_i - \delta^c_j \delta^a_i) = 0 ~,
\end{equation}
while
\begin{equation} \label{SFHHAnomalyFreeCond}
\frac{1}{3}\frac{\partial\bar{\nu}}{\partial\bar{p}}
\delta^a_i =\frac{\partial\nu^{(1)}}{\partial(\delta E^a_i)}
\quad\mbox{ and }\quad \frac{\partial\nu^{(1)}}{\partial\bar{p}} =
\frac{\partial\nu^{(2)}}{\partial(\delta E^a_i)}\delta^a_i ,
\end{equation}
solves ${\mathcal A}_{\rm matter}^{HH}=0$. As in the
gravitational sector, anomaly-free requirements on matter quantum
correction functions admit only trivial solutions of $\bar{\nu}={\rm
constant}$, $\nu^{(1)}=0$ and $\nu^{(2)}=0$. There are no additional
requirements on the quantum correction function $\sigma$ as only
its background component appears in the perturbed Hamiltonian. However,
equation (\ref{ANSRelation}) then requires that also $\bar{\sigma}$
must be constant.

At this stage, we could only conclude that inverse triad corrections
leave no trace whatsoever in effective constraints of an anomaly-free
quantization. This would be extremely puzzling given the crucial role
played by the corresponding operators for well-defined fundamental
constraints of loop quantum gravity. Fortunately, what we have shown
is, in fact, a weaker statement since we assumed the primary
correction function to depend only locally on the triad in algebraic
form. What we have shown is that this pure triad dependence is
insufficient, and we will now relax this assumption by introducing
additional corrections which we call counterterms. These terms are not
directly motivated by simple expressions computed from a
constraint operator, but they will be fixed in terms of primary
correction functions by anomaly freedom. In the conclusions, we will
comment on the expectations for the presence of such terms based on a
loop quantization.

\section{Anomaly-free quantum constraints}

In the previous section, we have shown that the presence of only
primary quantum correction functions does not lead to an anomaly-free
perturbed constraint algebra. It is however possible that the chosen
form of quantum correction functions, as they have been used in all
studies so far, does not capture all possible quantum
effects. Naturally, one is then led to ask whether there are ``counter
terms'' to the chosen form of the correction functions that should be
included in quantum corrected expressions of the Hamiltonian
constraint to make the constraint algebra anomaly-free. We show here
that such expectations are indeed realized. In particular, it is
possible to arrive at a quantum corrected constraint algebra which is
anomaly-free by including specific counter terms to the primary
quantum corrected Hamiltonian constraint.

\subsection{Gravitational sector}

For a non-trivial primary quantum correction function it is not
possible for both $\mathcal{A}_{\rm grav}^{HD}$ and $\mathcal{A}_{\rm
grav}^{HH}$ to vanish simultaneously. However, it turns out that one
can perform partial anomaly cancellation in the constraint algebra
even for non-trivial quantum correction functions by relaxing some of
the conditions imposed so far to result in manageable computations.
We approach this by adding counter terms, i.e.\ further potential
quantum corrections, ensuring first that the quantum corrected
Hamiltonian constraint is covariant under spatial diffeomorphisms,
i.e.\  $\mathcal{A}_{\rm grav}^{HD}=0$. At this point, it is important
that the diffeomorphism constraint should not receive quantum
corrections of the type studied here. Counter terms thus appear only
in the Hamiltonian constraint. We have seen that condition
(\ref{HDAnomalyFreeCond}) on the quantum correction function, in
particular $\alpha^{(1)}=0$, precisely ensures this requirement and we
can simplify the primary gravitational constraint (45) to
\begin{eqnarray} \label{QuantPertHamConstExprSimp}
H_{\rm grav}^P[\bar{N}] &:=& \frac{1}{16\pi G} \int\mathrm{d}^3x
\bar{N}\left[ \bar{\alpha}{\mathcal H}^{(0)}
+ \left(\alpha^{(2)}{\mathcal H}^{(0)}
+ \bar{\alpha}{\mathcal H}^{(2)}\right)\right] \nonumber\\
H_{\rm grav}^P[\delta N] &:=& \frac{1}{16\pi G}\int\mathrm{d}^3x
\delta N \left[ \bar{\alpha}{\mathcal H}^{(1)}\right] ~.
\end{eqnarray}
Similarly, we simplify the expression of
$\mathcal{A}_{\rm grav}^{HH}$ (59), which we then refer to as
\begin{equation} \label{AnomalyHHSimp}
\mathcal{A}_{\rm grav}^P = \frac{1}{8\pi G} \int\mathrm{d}^3x \bar{N}
(\delta N_1-\delta N_2) (2\bar{\alpha}\bar{p})
\frac{\partial \bar{\alpha}} {\partial \bar{p}}
\left[ \frac{\bar{k}}{\bar{p}}(\partial_c\partial^j\delta E^c_j)
- \bar{k}^2(\delta^c_j\delta K_c^j)
+ \bar{k}^3 \frac{(\delta_c^j\delta E^c_j)}{2\bar{p}}\right] ~.
\end{equation}

As we have already seen, canceling this anomaly based solely on
primary corrections could be achieved only for the trivial case of
constant $\alpha$.  To cancel the anomaly terms (\ref{AnomalyHHSimp})
without having to require constant $\bar{\alpha}$, we need to generate
additional terms which take a form similar to primary anomaly
terms. In fact, additional corrections not considered so far can
easily arise in an effective Hamiltonian constraint. Adding additional
terms into the Hamiltonian constraint however can potentially generate
new anomaly terms in the Poisson bracket $\{H_{\rm grav}^P[N], D_{\rm
grav}[N^a]\}$. To avoid this we should ensure that the counter terms
being added to the Hamiltonian constraint commute with the
diffeomorphism constraint. Moreover, counter terms should not affect
the background dynamics as there are no anomalies in the constraint
algebra when one turns off inhomogeneity. Thus, counter terms should be
constructed using only those terms which contain inhomogeneous
perturbations.

We notice that while we describe the emergence of counter terms in a
constructive manner, all this reflects requirements which fundamental
anomaly-freedom would pose. To that end, we consider a minimal
approach for constructing counter terms. In particular, we require
that counter terms should generate only those three kinds of terms
which are already present in the anomaly expression
(\ref{AnomalyHHSimp}). With this requirement, the allowed form of
counter terms that can be included in the quantum corrected
Hamiltonian is given by $H_{\rm grav}^C[N] = H_{\rm grav}^C[\delta
N]+ H_{\rm grav}^C[\bar{N}]$ where
\begin{equation} \label{GravCounterTermDeltaN}
H_{\rm grav}^C[\delta N] = \frac{1}{16\pi G} \int\mathrm{d}^3x \delta N
\bar{\alpha}\left[-4 f(\bar{p})\bar{k}\sqrt{\bar{p}}
(\delta^c_j\delta K_c^j) - g(\bar{p})
\frac{\bar{k}^2}{\sqrt{\bar{p}}}(\delta_c^j\delta E^c_j) \right]
\end{equation}
and
\begin{equation} \label{GravCounterTermBarN}
H_{\rm grav}^C[\bar{N}] = \frac{1}{16\pi G}\int\mathrm{d}^3x \bar{N}
\bar{\alpha}\left[- h(\bar{p})\frac{\delta^{jk}}{2\bar{p}^{3/2}}
(\partial_c\delta E^c_j)(\partial_d\delta E^d_k) \right] ~.
\end{equation}
Here we have introduced three dimensionless scalar functions $f$, $g$
and $h$ which depend on the quantum correction functions and are to be
determined through anomaly cancellation conditions.  Only background
components of these functions are relevant as the counter terms are
already quadratic in perturbations.

The new terms can be interpreted as resulting from a dependence of the
primary $\alpha$ on extrinsic curvature components and spatial
derivatives of the triad as a general functional. Thus, the
introduction of counter terms relaxes some of the conditions imposed
earlier on $\alpha$. The Poisson bracket between counter terms and the
diffeomorphism constraint can be computed as
\begin{eqnarray} \label{CTDiffeoPB}
\{H_{\rm grav}^C[N],D[N^a]\} = \frac{1}{8\pi G}\int \mathrm{d}^3x
\bar{\alpha} (\partial_c \delta N^c)\delta N \left[
-(2f+g)(\bar{k}^2\sqrt{\bar{p}})\right] ~.
\end{eqnarray}
Thus, the requirement that counter terms commute with the
diffeomorphism constraint leads to the simple condition
\begin{equation} \label{CTEqn1}
2f+g=0
\end{equation}
on the coefficient functions. Out of three unknown coefficient
functions only two remain to be determined. Using equations
(\ref{GravCounterTermDeltaN}), (\ref{GravCounterTermBarN}) we compute
contributions from counter terms to the Poisson bracket between
Hamiltonian constraints
\begin{eqnarray} \label{CTContribGrav}
\mathcal{A}_{\rm grav}^C &=& \{H_{\rm grav}^C[N_1],H_{\rm grav}^P[N_2]\}
-(N_1 \leftrightarrow N_2) \nonumber\\
&=& \{H_{\rm grav}^C[\delta N_1-\delta N_2],H_{\rm grav}^P[\bar{N}]\}
+ \{H_{\rm grav}^P[\delta N_1-\delta N_2],H_{\rm grav}^C[\bar{N}]\} \nonumber\\
&=& \frac{1}{8\pi G} \int\mathrm{d}^3x \bar{N} (\delta N_1-\delta
N_2)\bar{\alpha}^2 \left[ \bar{k}^2(\delta^c_j\delta K_c^j)
\left(f-g-4\bar{p}\frac{\partial f}{\partial \bar{p}} \right)
\right.\nonumber\\ &&\quad\quad\quad\quad + \left.\bar{k}^3
\frac{(\delta_c^j\delta E^c_j)}{2\bar{p}}
\left(-f+g-2\bar{p}\frac{\partial g}{\partial \bar{p}} \right)
+\frac{\bar{k}}{\bar{p}} (\partial_c\partial^j\delta E^c_j)
\left(-h-f\right) \right] ~.
\end{eqnarray}

Since the perturbed classical constraint algebra is closed without
counter terms, the quantum counter terms must vanish in the classical
limit, i.e.\ when all primary correction function are unity.  Counter
terms can then depend on primary correction functions only through
their derivatives. Given the expanded form (\ref{AlphaHomo}) of
primary quantum correction functions used here, terms such as
$(\partial\bar\alpha / \partial \bar p)^2$ can be neglected compared
to the terms $\partial\bar\alpha / \partial \bar p$. For the same
reason the contributions from the Poisson bracket between counter
terms $\{H_{\rm grav}^C[N_1],H_{\rm grav}^C[N_2]\}$ can be neglected
compared to the contributions considered in (\ref{CTContribGrav}).

Combining all contributions from the original anomaly
(\ref{AnomalyHHSimp}) and from counter terms (\ref{CTContribGrav}) one
can express the total gravitational anomaly $\mathcal{A}_{\rm grav}
:= \mathcal{A}_{\rm grav}^{P} + \mathcal{A}_{\rm grav}^{C}$ as
\begin{equation} \label{TotalGravAnomalyHH}
\mathcal{A}_{\rm grav} = \frac{1}{8\pi G} \int\mathrm{d}^3x \bar{N}
(\delta N_1-\delta N_2)\bar{\alpha}^2 \left[ \frac{\bar{k}}{\bar{p}}
(\partial_c\partial^j\delta E^c_j) \mathcal{G}_1
+ \bar{k}^2(\delta^c_j\delta K_c^j) \mathcal{G}_2
+ \bar{k}^3 \frac{(\delta_c^j\delta E^c_j)}{2\bar{p}}
\mathcal{G}_3 \right] ~,
\end{equation}
where
\begin{eqnarray} \label{AnoCanGravEqn1}
\mathcal{G}_1 &=& -h -f+\frac{2\bar{p}}{\bar{\alpha}}\frac{\partial
\bar{\alpha}} {\partial \bar{p}} ~,\\
\label{AnoCanGravEqn2}
\mathcal{G}_2 &=& f-g-4\bar{p}\frac{\partial f}{\partial \bar{p}}
-\frac{2\bar{p}}{\bar{\alpha}}\frac{\partial \bar{\alpha}}
{\partial \bar{p}} ~, \\
\label{AnoCanGravEqn3}
\mathcal{G}_3 &=& -f+g-2\bar{p}\frac{\partial g}{\partial \bar{p}}
+\frac{2\bar{p}}{\bar{\alpha}}\frac{\partial \bar{\alpha}}
{\partial \bar{p}} ~,
\end{eqnarray}
Anomaly cancellation will require coefficients of
($\partial_c\partial^j\delta E^c_j$), ($\delta^c_j\delta K_c^j$) and
($\delta_c^j\delta E^c_j$) to vanish. This in turn implies that the
coefficient functions $f$, $g$ and $h$ should be such that they
satisfy three equations $\mathcal{G}_1=0$, $\mathcal{G}_2=0$ and
$\mathcal{G}_3=0$. On the other hand, $f$ and $g$ also need to satisfy
equation (\ref{CTEqn1}). Thus, the set of equations for $f$, $g$ and
$h$ may appear to be over-complete.  However, it is remarkable to note
that equation (\ref{CTEqn1}) along with equation
(\ref{AnoCanGravEqn2}) solves the equation (\ref{AnoCanGravEqn3})
identically. In particular using equation (\ref{CTEqn1}), it is easy
to see that
\begin{equation} \label{GravConsistencyRelation}
\mathcal{G}_3 = - \mathcal{G}_2 ~.
\end{equation}
Thus for a given quantum correction function $\alpha$, there are
unambiguous solutions for $f$, $g$ and $h$ such that the constraint
algebra is anomaly-free. In particular, for a background
quantum correction function $\alpha$ given in (\ref{AlphaHomo}),
\begin{equation}\label{FGHSolns}
 f = -\frac{g}{2} =
\frac{2}{4n_{\alpha}+3}\frac{\bar{p}}{\bar{\alpha}}
\frac{\partial \bar{\alpha}}{\partial \bar{p}}\quad,\quad h =
4\frac{2n_{\alpha}+1}{4n_{\alpha}+3}\frac{\bar{p}}{\bar{\alpha}}
\frac{\partial \bar{\alpha}} {\partial \bar{p}}\quad,
\end{equation}
solve equation (\ref{CTEqn1}) and ensure vanishing of equations (\ref{AnoCanGravEqn1}),
(\ref{AnoCanGravEqn2}) and (\ref{AnoCanGravEqn3}).

\subsection{Cosmological constant}

For obtaining anomaly freedom in the gravitational sector by adding
appropriate counter terms, it was crucial that coefficients
$\mathcal{G}_2$ and $\mathcal{G}_3$ are related through equation
(\ref{GravConsistencyRelation}). Thus, one should consider the
robustness of this relation under the inclusion of additional
classical terms. Including a non-zero cosmological constant to
the gravitational sector provides a definite test to see whether such a
relation can still be satisfied. Counter terms in the gravitational
sector would now generate additional contributions due to the presence
of the cosmological constant term. We now show that a
non-zero cosmological constant does not spoil the non-trivial
consistency condition (\ref{GravConsistencyRelation}).

Contributions
to the gravitational Hamiltonian constraint from a non-zero
cosmological constant $\Lambda$ are
\begin{equation} \label{CCFull}
H_{\Lambda}[N] = \frac{1}{8\pi G}\int\mathrm{d}^3x
N \sqrt{|\det E|} \Lambda
=: H_{\Lambda}[\delta N]+H_{\Lambda}[\bar{N}] ~.
\end{equation}
As with the matter potential term, no primary inverse triad
corrections are expected.  The perturbed expressions of $
H_{\Lambda}[\delta N]$ and $H_{\Lambda}[\bar{N}]$, including up to
quadratic terms in perturbations are given by
\begin{equation} \label{CCBarN}
H_{\Lambda}[\bar{N}] = \frac{1}{16\pi G}\int\mathrm{d}^3x
\bar{N}\left[ {\mathcal H}_{\Lambda}^{(0)}+{\mathcal
H}_{\Lambda}^{(2)}\right] \quad,\quad
H_{\Lambda}[\delta N] = \frac{1}{16\pi G} \int\mathrm{d}^3x
\delta N{\mathcal H}_{\Lambda}^{(1)}~.
\end{equation}
The explicit expressions of perturbed densities ${\mathcal
H}_{\Lambda}^{(0)}$, ${\mathcal H}_{\Lambda}^{(1)}$ and ${\mathcal
H}_{\Lambda}^{(2)}$ are
\begin{eqnarray} \label{CCDeltaN}
&&{\mathcal H}_{\Lambda}^{(0)} = 2\Lambda \bar{p}^{3/2} \quad,\quad
{\mathcal H}_{\Lambda}^{(1)} = 2\Lambda \bar{p}^{3/2}
\frac{(\delta_c^j\delta E^c_j)}{2\bar{p}} \\
&&{\mathcal H}_{\Lambda}^{(2)} = 2\Lambda\bar{p}^{3/2}
\left(\frac{(\delta_c^j \delta E^c_j)^2}{8\bar{p}^2}
-\frac{(\delta_c^k\delta_d^j\delta E^c_j\delta E^d_k)}{4\bar{p}^2}
\right) ~.
\end{eqnarray}
Contributions to the anomaly expression arising from the
Poisson bracket between counter terms and cosmological constant terms
are
\begin{eqnarray} \label{CTContribGravCC0}
\mathcal{A}_{\Lambda} &=& \{H_{\rm grav}^C[N_1],H_{\Lambda}[N_2]\}
-(N_1 \leftrightarrow N_2) \nonumber\\
&=& \frac{1}{8\pi G}
\int\mathrm{d}^3x \bar{N}(\delta N_1-\delta N_2)\bar{\alpha}
(\Lambda \bar{p}) \left[ - (\delta^c_j\delta K_c^j) f -
\bar{k} \frac{(\delta_c^j\delta E^c_j)}{2\bar{p}}
(f+g) \right] ~.
\end{eqnarray}
Combining contributions from counterterms and the original anomaly in
the presence of a non-zero cosmological constant, one can evaluate the
total gravitational anomaly $\mathcal{A}_{\rm grav} :=
\mathcal{A}_{\rm grav}^{P} + \mathcal{A}_{\rm grav}^{C}+
\mathcal{A}_{\Lambda}$ as
\begin{equation} \label{TotalGravAnomalyHHCC}
\mathcal{A}_{\rm grav} = \frac{1}{8\pi G} \int\mathrm{d}^3x \bar{N}
(\delta N_1-\delta N_2)\bar{\alpha}^2 \left[ \frac{\bar{k}}{\bar{p}}
(\partial_c\partial^j\delta E^c_j) \mathcal{G}_1^{\Lambda}
+ \bar{k}^2(\delta^c_j\delta K_c^j) \mathcal{G}_2^{\Lambda}
+ \bar{k}^3 \frac{(\delta_c^j\delta E^c_j)}{2\bar{p}}
\mathcal{G}_3^{\Lambda} \right] ~,
\end{equation}
where new coefficients in the anomaly expression are
\begin{eqnarray} \label{AnoCanGravCCEqn}
\mathcal{G}_1^{\Lambda}=\mathcal{G}_1 \quad,\quad
\mathcal{G}_2^{\Lambda} = \mathcal{G}_2 -
f\frac{\Lambda \bar{p}}{\bar{\alpha}\bar{k}^2} \quad,\quad
\mathcal{G}_3^{\Lambda} = \mathcal{G}_3 -
(f+g)\frac{\Lambda \bar{p}}{\bar{\alpha}\bar{k}^2}~.
\end{eqnarray}
(The $\Lambda$-dependence cancels upon using the background Friedmann
equation.)  Using equation (\ref{CTEqn1}), which remains unchanged, we
again note that the new coefficients satisfy the same non-trivial
consistency relation
\begin{equation} \label{GravConsistencyRelationCC}
\mathcal{G}_3^{\Lambda} = - \mathcal{G}_2^{\Lambda} ~.
\end{equation}
Thus, also in the presence of a non-zero cosmological constant there
are unambiguous solutions for $f$, $g$ and $h$ such that the
constraint algebra is anomaly-free. This demonstrates that
anomaly freedom of the quantum corrected constraint algebra including
appropriate counter terms is a robust feature.

\subsection{Scalar matter}

For cosmological applications, we must ensure the existence of
consistent equations in the presence of matter.  Similarly to the
gravitational sector we ensure first that the quantum corrected scalar
matter Hamiltonian is covariant under spatial diffeomorphism,
i.e.\  $\mathcal{A}_{\rm matter}^{HD}=0$. This requirement implies
that we should impose conditions (\ref{SFHDAnomalyFreeCond}) on the
quantum correction function $\nu$ and simplify the expression of the
primary quantum corrected matter Hamiltonian as
\begin{eqnarray} \label{SFQuantPertHamConstSimp}
H_{\rm matter}^P[\bar{N}] &:=& \int\mathrm{d}^3x \bar{N} \left[
\left(\bar{\nu}{\mathcal H}_\pi^{(0)}+{\mathcal
H}_\varphi^{(0)}\right)
+ \left( \nu^{(2)}{\mathcal H}_\pi^{(0)}
+\bar{\nu} {\mathcal H}_\pi^{(2)}+
+ \bar{\sigma}{\mathcal H}_\nabla^{(2)}+
{\mathcal H}_\varphi^{(2)}\right) \right] \nonumber\\
H_{\rm matter}^P[\delta{N}] &:=& \int\mathrm{d}^3x
\delta N \left[ \bar{\nu}{\mathcal H}_\pi^{(1)}
+ {\mathcal H}_\varphi^{(1)} \right] ~.
\end{eqnarray}
This also simplifies the matter anomaly term $\mathcal{A}_{\rm
matter}^{HH}$ which we then refer to as
\begin{eqnarray} \label{SFHHAnomalySimp}
{\mathcal A}_{\rm matter}^P &=&
 \int\mathrm{d}^3x \bar{N}(\delta N_1-\delta N_2)
\left[\frac{\bar{\nu}\bar{\pi}\delta\pi}{\bar{p}^{3/2}}
\frac{\bar{\alpha}\bar{k}}{\sqrt{\bar{p}}}
\left(\frac{2\bar{p}}{\bar{\nu}}\frac{\partial\bar{\nu}}{\partial\bar{
p}}\right) +
\frac{\bar{\nu}\bar{\pi}^2}{2\bar{p}^{3/2}}(\delta^c_j\delta K_c^j)
\frac{\bar{\alpha}}{\sqrt{\bar{p}}} \left(-\frac{2\bar{p}}{3\bar{\nu}}
\frac{\partial\bar{\nu}}{\partial\bar{p}}\right)
\right. \nonumber\\ && \left.
+ \frac{\bar{\nu}\bar{\pi}^2}{2\bar{p}^{3/2}}
\frac{(\delta_c^j\delta E^c_j)}{2\bar{p}}
\frac{\bar{\alpha}\bar{k}}{\sqrt{\bar{p}}}
\left(-\frac{10\bar{p}}{3\bar{\nu}}
\frac{\partial\bar{\nu}}{\partial\bar{p}}\right)
\right] ~.
\end{eqnarray}
For computational convenience we consider the construction of counter
terms for the kinetic and potential sectors separately.

\subsubsection{Kinetic sector}

To cancel anomalies in the kinetic sector of scalar matter, we start
with a general form of possible counter terms as $H_{\pi\nabla}^C[N]
:= H_{\pi}^C[\delta N]+H_{\nabla}^C[\delta N] + H_{\pi}^C[\bar{N}] +
H_{\nabla}^C[\bar{N}]$ where
\begin{equation} \label{SFGenCounterTermKSDeltaN}
H_{\pi}^C[\delta N] = \int\mathrm{d}^3x \delta N \left[
f_1(\bar{p})\frac{\bar{\nu}\bar{\pi}\delta{\pi}}{\bar{p}^{3/2}}
-f_2(\bar{p})\frac{\bar{\nu}\bar{\pi}^2}{2\bar{p}^{3/2}}
\frac{(\delta_c^j
\delta E^c_j)}{2\bar{p}} \right] \quad,\quad
H_{\nabla}^C[\delta N] = 0
\end{equation}
and
\begin{eqnarray} \label{SFGenCounterTermKSBarN}
H_{\pi}^C[\bar{N}] &=& \int\mathrm{d}^3x\bar{N}\left[
g_1(\bar{p})\frac{\bar{\nu}\delta{\pi}^2}{2\bar{p}^{3/2}}
-g_2(\bar{p})\frac{\bar{\nu}\bar{\pi} \delta\pi}{\bar{p}^{3/2}}
\frac{(\delta_c^j \delta E^c_j)}{2\bar{p}}
\right. \nonumber\\
&& ~~~~~~~~~~~~~~ +
\left. \frac{\bar{\nu}\bar{\pi}^2}{2\bar{p}^{3/2}} \left(
g_3(\bar{p})\frac{(\delta_c^j \delta E^c_j)^2}{8\bar{p}^2}
+g_4(\bar{p})\frac{(\delta_c^k\delta_d^j\delta E^c_j\delta
E^d_k)}{4\bar{p}^2}
\right)\right] ~, \nonumber\\
H_{\nabla}^c[\bar{N}] &=& \int\mathrm{d}^3x \bar{N} \left[
g_5(\bar{p})\frac{\bar{\sigma}\sqrt{\bar{p}}}{2}\delta^{ab}
\partial_a\delta \varphi \partial_b\delta \varphi \right]~.
\end{eqnarray}
The general guidelines followed for the gravitational sector led us to
introduce seven dimensionless unknown functions $f_1$, $f_2$, $g_1$,
$g_2$, $g_3$, $g_4$ and $g_5$ which should be related to primary
quantum correction functions as it will be determined through anomaly
cancellation conditions.

The Poisson bracket between counter terms
(\ref{SFGenCounterTermKSDeltaN}), (\ref{SFGenCounterTermKSBarN}) and
the total diffeomorphism constraint is
\begin{eqnarray} \label{SFCTDiffeoPBKS}
\{H_{\pi\nabla}^C[N],D[N^a]\} =
\int\mathrm{d}^3x \left[ (\partial_c \delta N^j)\bar{N}\left(g_4
\frac{\bar{\nu}\bar{\pi}^2}{2\bar{p}^{3/2}}
\frac{\delta E^c_j}{2\bar{p}}\right)
+ (\partial_c \delta N^c) \left\{
\delta N (2f_1 -f_2)\frac{\bar{\nu}\bar{\pi}^2}{2\bar{p}^{3/2}}
\right.\right. \nonumber\\ \left. \left.
+ \bar{N}(g_1-g_2)
\frac{\bar{\nu}\bar{\pi}\delta{\pi}}{\bar{p}^{3/2}}
-\bar{N}(2g_2-g_3+g_4) \frac{\bar{\nu}\bar{\pi}^2}{2\bar{p}^{3/2}}
\frac{(\delta_c^j\delta E^c_j)}{2\bar{p}} \right\}
\right] ~.
\end{eqnarray}
Requiring that counter terms commute with the diffeomorphism
constraint leads to the conditions
\begin{equation} \label{SFCTEqn1}
2f_1 = f_2 ~,~ g_1=g_2 ~,~ 2g_2=g_3 ~,~ g_4 = 0 ~.
\end{equation}
Given that $g_4$ is required to vanish, we will drop the corresponding
term from the set of counter terms in our further evaluation.  We
began with seven unknown functions in the counter terms for the
kinetic sector. Requiring that counter terms commute with the
diffeomorphism constraint imposes four conditions. This in turn allows
only three more conditions to be imposed on these functions from
anomaly cancellation in the remaining Poisson bracket of Hamiltonian
constraints.

There are some subtleties in finding anomaly cancellation conditions
for the matter sector. Inclusion of counter terms to the gravitational
sector has generated additional contributions both in the
gravitational as well as the matter sector. Thus, matter sector
counter terms need to cancel the original anomaly expression
(\ref{SFHHAnomalySimp}) but also contributions from gravitational
counter terms.  Contributions from gravitational counter terms to
anomaly expressions of the matter kinetic sector are
\begin{eqnarray} \label{GravCTContribSFKS}
\mathcal{A}_{{\rm grav}\pi\nabla}^{C} &:=& \{H_{\rm grav}^C[N_1],H_{\pi\nabla}^P[N_2]\}
-(N_1\leftrightarrow N_2) \nonumber\\
&=&\int\mathrm{d}^3x \bar{N}(\delta N_1-\delta N_2)
\left[\frac{\bar{\nu}\bar{\pi}\delta\pi}{\bar{p}^{3/2}}
\frac{\bar{\alpha}\bar{k}}{\sqrt{\bar{p}}}(3f) +
\frac{\bar{\nu}\bar{\pi}^2}{2\bar{p}^{3/2}}(\delta^c_j\delta K_c^j)
\frac{\bar{\alpha}}{\sqrt{\bar{p}}}(f)
\right. \nonumber\\ && \left.
+ \frac{\bar{\nu}\bar{\pi}^2}{2\bar{p}^{3/2}}
\frac{(\delta_c^j\delta E^c_j)}{2\bar{p}}
\frac{\bar{\alpha}\bar{k}}{\sqrt{\bar{p}}}(g-5f)\right] ~.
\end{eqnarray}
Similarly, contributions from counter terms of the matter kinetic
sector to the Poisson bracket between total Hamiltonians can be
computed as
\begin{eqnarray} \label{SFCTContribSFKS}
\mathcal{A}_{\pi\nabla}^{C} &:=&
\{H_{\pi\nabla}^C[N_1],H_{\rm grav}^P[N_2]+H_{\pi\nabla}^P[N_2]\}
-(N_1 \leftrightarrow N_2) \nonumber\\
&=&\int\mathrm{d}^3x \bar{N}(\delta N_1-\delta N_2) \left[
\frac{\bar{\nu}\bar{\pi}}{\bar{p}}\bar{\sigma}
\nabla^2\delta\varphi(f_1+g_5) +
\frac{\bar{\nu}\bar{\pi}\delta\pi}{\bar{p}^{3/2}}
\frac{\bar{\alpha}\bar{k}}{\sqrt{\bar{p}}}
\left(2\bar{p}\frac{\partial f_1}{\partial\bar{p}}-3f_1+3g_2\right)
 \right. \nonumber\\ &&
+ \left.
\frac{\bar{\nu}\bar{\pi}^2}{2\bar{p}^{3/2}}(\delta^c_j\delta K_c^j)
\frac{\bar{\alpha}}{\sqrt{\bar{p}}}(-f_2) +
\frac{\bar{\nu}\bar{\pi}^2}{2\bar{p}^{3/2}}\frac{(\delta_c^j\delta
E^c_j)}{2\bar{p}}
\frac{\bar{\alpha}\bar{k}}{\sqrt{\bar{p}}}
\left(-2\bar{p}\frac{\partial f_2}{\partial\bar{p}}+4f_2-3g_3\right)
\right] ~,
\end{eqnarray}
We then combine the original anomaly (\ref{SFHHAnomalySimp}) with
contributions (\ref{GravCTContribSFKS}) from gravitational counter
terms and contributions from matter kinetic sector counter terms
(\ref{SFCTContribSFKS}) to express the total anomaly
$\mathcal{A}_{\pi\nabla} := \mathcal{A}_{\rm matter}^P +
\mathcal{A}_{{\rm grav}\pi\nabla}^{C} + \mathcal{A}_{ \pi\nabla}^C$ in
the kinetic sector of scalar matter as
\begin{eqnarray} \label{SFTotalKinAnomaly}
\mathcal{A}_{\pi\nabla} &=& \int\mathrm{d}^3x\bar{N}
(\delta N_1-\delta N_2)
\left[\frac{\bar{\nu}\bar{\pi}\delta\pi}{\bar{p}^{3/2}}
\frac{\bar{\alpha}\bar{k}}{\sqrt{\bar{p}}}
\mathcal{B}_1 +
\frac{\bar{\nu}\bar{\pi}^2}{2\bar{p}^{3/2}}(\delta^c_j\delta K_c^j)
\frac{\bar{\alpha}}{\sqrt{\bar{p}}}
\mathcal{B}_2
\right. \nonumber\\
&& \left. + \frac{\bar{\nu}\bar{\pi}^2}{2\bar{p}^{3/2}}
\frac{(\delta_c^j\delta E^c_j)}{2\bar{p}}
\frac{\bar{\alpha}\bar{k}}{\sqrt{\bar{p}}}
\mathcal{B}_3
+
\frac{\bar{\nu}\bar{\pi}}{\bar{p}}\bar{\sigma}
\nabla^2\delta\varphi \mathcal{B}_4
\right] ~,
\end{eqnarray}
where
\begin{eqnarray} \label{EqnB1}
\mathcal{B}_1 &=& \frac{2\bar{p}}{\bar{\nu}}\frac{\partial\bar{\nu}}
{\partial\bar{p}} + 3f +
2\bar{p}\frac{\partial f_1}{\partial\bar{p}}-3f_1+3g_2
\\ \label{EqnB2}
\mathcal{B}_2 &=& -\frac{2\bar{p}}{3\bar{\nu}}
\frac{\partial\bar{\nu}}{\partial\bar{p}}+f-f_2
\\ \label{EqnB3}
\mathcal{B}_3 &=&
-\frac{10\bar{p}}{3\bar{\nu}}\frac{\partial\bar{\nu}}
{\partial\bar{p}}-5f+g-2\bar{p}\frac{\partial f_2}{\partial\bar{p}}
+4 f_2 - 3 g_3
\\ \label{EqnB4}
\mathcal{B}_4 &=& f_1+g_5 ~.
\end{eqnarray}
In the presence of a non-zero scalar matter potential, the imposition
of background and perturbed Hamiltonian constraints does not determine
matter kinetic terms in terms of gravitational terms in the Hamiltonian
constraint. In such situations anomalies in the kinetic sector of
scalar matter must vanish independently of the gravitational sector
anomaly. From, Eq.~(\ref{SFTotalKinAnomaly}) it is evident that
anomaly freedom in the kinetic sector requires four conditions to be
satisfied, i.e.\ $\mathcal{B}_1=0$, $\mathcal{B}_2=0$,
$\mathcal{B}_3=0$ and $\mathcal{B}_4=0$. However as we mentioned,
after imposing equations (\ref{SFCTEqn1}) we have only three
undetermined functions in kinetic sector counter terms.  Thus it may
appear once again that there is over-determination of counter
terms. However, similarly to the gravitational sector, coefficients of
the anomaly expression (\ref{SFTotalKinAnomaly}) satisfy a non-trivial
consistency relation. In particular, using relations (\ref{CTEqn1}),
(\ref{SFCTEqn1}), one notes that
\begin{equation} \label{RelationB123}
\mathcal{B}_3 = -2 \mathcal{B}_1 - \mathcal{B}_2 ~.
\end{equation}
Thus, cancellation of kinetic sector anomalies requires only three
equations to be satisfied by counter term coefficients. In other
words, counter terms for kinetic sector are unambiguously determined
by anomaly cancellation conditions.

\subsubsection{Potential sector}

We recall that the potential sector did not involve any primary
quantum correction function and did not contribute to the matter
sector anomaly.  However, including counter terms in the gravitational
and matter kinetic sectors leads to new anomaly terms involving the
scalar matter potential. Such new anomaly contributions from
gravitational counter terms to the Poisson bracket between Hamiltonians
can be computed as
\begin{eqnarray} \label{GravCTContribSFPot}
\mathcal{A}_{{\rm grav}\varphi}^{C} &:=& \{H_{\rm grav}^C[N_1],
H_{\varphi}^P[N_2]\}
-(N_1\leftrightarrow N_2) \nonumber\\
&=&\int\mathrm{d}^3x \bar{N}(\delta N_1-\delta N_2) \left[
\frac{\bar{\alpha}}{\sqrt{\bar{p}}} \bar{p}^{3/2} V(\bar{\varphi})
(\delta^c_j\delta K_c^j)(-f) +
\frac{\bar{\alpha}\bar{k}}{\sqrt{\bar{p}}} \bar{p}^{3/2}
V_{,\varphi}(\bar{\varphi}) \delta\varphi (-3f) \right. \nonumber\\
&& \left. ~~~~~~ + \frac{\bar{\alpha}\bar{k}}{\sqrt{\bar{p}}}
\bar{p}^{3/2} V(\bar{\varphi}) \frac{(\delta_c^j\delta
E^c_j)}{2\bar{p}}(-f-g) \right] ~.
\end{eqnarray}
Similar anomaly contributions from counter terms of the matter kinetic
sector are
\begin{eqnarray}
\label{KSCTContribSFPot}
\mathcal{A}_{\varphi\pi}^{C} &:=&
\{H_{\pi\nabla}^C[N_1],H_{\varphi}^P[N_2]\}
-(N_1\leftrightarrow N_2) \nonumber\\
&=&\int\mathrm{d}^3x \bar{N}(\delta N_1-\delta N_2) \left[
\frac{\bar{\nu}\bar{\pi}}{\bar{p}^{3/2}} \bar{p}^{3/2}
V_{,\varphi\varphi}(\bar{\varphi})\delta\varphi(-f_1) +
\frac{\bar{\nu}\delta \pi}{\bar{p}^{3/2}} \bar{p}^{3/2}
V_{,\varphi}(\bar{\varphi}) (-f_1 + g_1) \right. \nonumber\\ &&
\left. ~~~~~~ + \frac{\bar{\nu}\bar{\pi}}{\bar{p}^{3/2}}
\bar{p}^{3/2} V_{,\varphi}(\bar{\varphi}) \frac{(\delta_c^j\delta
E^c_j)}{2\bar{p}}(f_2-f_1-g_2) \right] ~.
\end{eqnarray}
Thus, counter terms of the matter kinetic sector generate a new
anomaly term involving $V_{,\varphi\varphi}(\bar{\varphi})$.
Gravitational counter terms on the other hand, do not lead to such a
term. For non-vanishing $f_1$, not all terms can cancel by combining
equations (\ref{GravCTContribSFPot}) and
(\ref{KSCTContribSFPot}). Even though we did not consider primary
quantum corrections in the potential sector, for anomaly freedom we
need to allow counter terms even here. As in the kinetic
sector, we begin with a general expression of possible counter terms
in the potential sector $H_{\varphi}^C[N] := H_{\varphi}^C[\delta N]
+H_{\varphi}^C[\bar{N}]$ where
\begin{equation} \label{SFGenCounterTermPotDeltaN}
H_{\varphi}^C[\delta N] = \int\mathrm{d}^3x \delta N\left[
f_3(\bar{p}) \bar{p}^{3/2} V_{,\varphi}(\bar{\varphi})
\delta\varphi +f_4(\bar{p})\bar{p}^{3/2} V(\bar{\varphi})
\frac{(\delta_c^j \delta E^c_j)}{2\bar{p}}\right] ~,
\end{equation}
and
\begin{eqnarray} \label{SFGenCounterTermPotBarN}
H_{\varphi}^C[\bar{N}] &=& \int\mathrm{d}^3x\bar{N} \left[
g_6(\bar{p}) \frac{1}{2}\bar{p}^{3/2}
V_{\varphi\varphi}(\bar{\varphi}) {\delta\varphi}^2 + g_7(\bar{p})
\bar{p}^{3/2} V_{,\varphi}(\bar{\varphi}) \delta\varphi
\frac{(\delta_c^j \delta E^c_j)}{2\bar{p}}
\right. \nonumber\\
&& \left. \quad + \bar{p}^{3/2} V(\bar{\varphi})\left(
g_8(\bar{p})\frac{(\delta_c^j \delta E^c_j)^2}{8\bar{p}^2}
- g_9(\bar{p}) \frac{\delta_c^k\delta_d^j\delta
E^c_j\delta E^d_k}{4\bar{p}^2} \right)\right] ~.
\end{eqnarray}
We have introduced six new unknown functions $f_3$, $f_4$, $g_6$, $g_7$,
$g_8$ and $g_9$ in the counter terms of the potential sector. To
ensure invariance of the potential sector counter terms under
diffeomorphism constraint we compute the Poisson bracket between
counter terms and the diffeomorphism constraint:
\begin{eqnarray} \label{SFPotTDiffeoPB}
\{H_{\varphi}^C[N],D[N^a]\} = \int\mathrm{d}^3x \bar{p}(\partial_c
\delta N^c \delta^a_i -\partial_i \delta N^a) \left[\delta
N\bar{p}^{3/2} f_4 V(\bar{\varphi}) \frac{\delta_a^i}{2\bar{p}}
\right. \nonumber\\ \left. + \bar{N}\bar{p}^{3/2}\left\{g_7
V_{,\varphi}(\bar{\varphi}) \delta\varphi
\frac{\delta_a^i}{2\bar{p}} + V(\bar{\varphi})\left( g_8
\frac{(\delta_c^j \delta E^c_j) \delta_a^i}{4\bar{p}^2} - g_9
\frac{\delta_c^i\delta_a^j\delta E^c_j} {2\bar{p}^2} \right)
\right\} \right] ~.
\end{eqnarray}
It is then easy to see that counter terms commute with the
diffeomorphism constraint only if the coefficients satisfy
\begin{equation}
\label{SFCTEqn2}
f_4 = 0  ~,~ g_7=0 ~,~ g_8 = 0 ~,~ g_9=0 ~.
\end{equation}
Thus diffeomorphism invariance of the counter terms allows just two
independent functions in the potential sector.  The non-vanishing
counter terms in the potential sector reduce to
\begin{equation} \label{SFPotCounterTerm}
H_{\varphi}^C[N] = \int\mathrm{d}^3x \left[ \delta N f_3
\bar{p}^{3/2} V_{,\varphi}(\bar{\varphi}) \delta\varphi + \bar{N}
\frac{1}{2} g_6 \bar{p}^{3/2}
V_{,\varphi\varphi}(\bar{\varphi}){\delta\varphi}^2 \right] ~.
\end{equation}
Contributions from counter terms of the potential sector to
the Poisson bracket between total Hamiltonians can be computed
as
\begin{eqnarray} \label{PotCTContribSFPot}
\mathcal{A}_{\varphi}^C &:=& \{H_{\varphi}^C[N_1],
H_{\rm grav}^P[N_2]+H_{\rm matter}^Q[N_2]\}
-(N_1\leftrightarrow N_2) \nonumber\\
&=&\int\mathrm{d}^3x \bar{N}(\delta N_1-\delta N_2) \left[
\frac{\bar{\nu}\bar{\pi}}{\bar{p}^{3/2}} \bar{p}^{3/2}
V_{,\varphi\varphi}(\bar{\varphi})\delta\varphi(f_3-g_6) +
\frac{\bar{\nu}\delta \pi}{\bar{p}^{3/2}} \bar{p}^{3/2}
V_{,\varphi}(\bar{\varphi}) (f_3) \right. \nonumber\\ && \left.
~~~~~~ + \frac{\bar{\nu}\bar{\pi}}{\bar{p}^{3/2}} \bar{p}^{3/2}
V_{,\varphi}(\bar{\varphi}) \frac{(\delta_c^j\delta
E^c_j)}{2\bar{p}}(-f_3) +
\frac{\bar{\alpha}\bar{k}}{\sqrt{\bar{p}}} \bar{p}^{3/2}
V_{,\varphi}(\bar{\varphi}) \delta\varphi
\left(2\bar{p}\frac{\partial f_3}{\partial\bar{p}}+3f_3\right)
\right] .
\end{eqnarray}
Combining (\ref{GravCTContribSFPot}), (\ref{KSCTContribSFPot}) and
(\ref{PotCTContribSFPot}) we form the total anomaly term
$\mathcal{A}_{\varphi} := \mathcal{A}_{{\rm
grav}\varphi}^{C}+\mathcal{A}_{\varphi\pi}^{C}+
\mathcal{A}_{\varphi}^C$ in the scalar matter potential sector:
\begin{eqnarray} \label{SFTotalPotAnomaly}
\mathcal{A}_{\varphi} &=& \int\mathrm{d}^3x \bar{N}(\delta
N_1-\delta N_2) \left[\bar{\nu}\bar{\pi}
V_{,\varphi\varphi}(\bar{\varphi})\delta\varphi \mathcal{D}_1 +
V_{,\varphi}(\bar{\varphi})\left\{ \bar{\nu}\delta \pi
\mathcal{D}_2 + \bar{\nu}\bar{\pi} \frac{(\delta_c^j\delta
E^c_j)}{2\bar{p}}\mathcal{D}_3\right\} \right. \nonumber\\ &&
\left. ~ + (\bar{\alpha}\bar{k}\bar{p})
V_{,\varphi}(\bar{\varphi})\delta\varphi \mathcal{D}_4 -
(\bar{\alpha}\bar{p})V(\bar{\varphi})\left\{ (\delta^c_j\delta
K_c^j)f +\bar{k} \frac{(\delta_c^j\delta E^c_j)}{2\bar{p}}(f+g)
\right\}\right]
\end{eqnarray}
where
\begin{eqnarray} \label{CTContribSFPotCoeff}
\mathcal{D}_1 &=& f_3-f_1-g_6 \nonumber\\
\mathcal{D}_2 &=& f_3-f_1+g_1 \nonumber\\
\mathcal{D}_3 &=& -f_3+f_2-f_1-g_2  \nonumber\\
\mathcal{D}_4 &=& 2\bar{p}\frac{\partial f_3}{\partial\bar{p}}+3f_3-3f
~.
\end{eqnarray}
As in the gravitational and matter kinetic sectors, coefficients of
the potential sector anomaly satisfy a non-trivial consistency
relation
\begin{equation}\label{D3D2Relation}
\mathcal{D}_3 = -\mathcal{D}_2
\end{equation}
using (\ref{SFCTEqn1}). We recall that counter terms
(\ref{PotCTContribSFPot}) of the potential sector have only two
unknown functions $f_3$ and $g_6$ which would be determined by
choosing, say, $\mathcal{D}_1=0$ and $\mathcal{D}_2=0$.  Then, we
would automatically have ${\cal D}_3=0$, but there are still
non-vanishing terms in the anomaly (\ref{SFTotalPotAnomaly}) of the
potential sector.
In particular, last two terms in (\ref{SFTotalPotAnomaly}) are
similar to the anomaly terms due to cosmological constant
(\ref{CTContribGravCC0}) and can be absorbed into anomaly terms of
the gravitational and matter kinetic sectors by subtracting the
total Hamiltonian constraint with suitable lapse function from it.
Vanishing of total anomaly terms from potential sector then
requires that ${\cal D}_4$ should also vanish {\em i.e.} ${\cal
D}_4=0$ \footnote{The anomaly term involving ${\cal D}_4$ can be
absorbed into anomaly terms of the gravitational and matter
kinetic sectors by suitably subtracting total Hamiltonian
constraint from it. However, such subtraction does not lead to a
consistent set of solutions for anomaly cancellation conditions.}.
However, requiring ${\cal D}_4$ to vanish imposes additional
restriction on counter terms which in turn requires primary
correction functions $\alpha$, $\nu$ and $\sigma$ to satisfy
another consistency requirement (see Eq.(\ref{alphanuRelation}))
in presence of a non-zero scalar potential, apart from the
relation (\ref{ANSRelation}).
After imposing ${\cal D}_1=0$, ${\cal D}_2=0$, ${\cal D}_3=0$ and
${\cal D}_4=0$, we can express remaining terms in equation
(\ref{SFTotalPotAnomaly}) as
\begin{eqnarray} \label{CTContribSFPotRemain0}
\mathcal{A}_{\varphi}^{\rm rem} =
\mathcal{A}_{{\rm grav}\varphi} + \mathcal{A}_{\pi\nabla\varphi}
- H^P\left[\bar{N}(\delta N_1-\delta N_2)
\frac{\bar{\alpha}}{\sqrt{\bar {p}}}
\left\{ (\delta^c_j\delta K_c^j)f +\bar{k}
\frac{(\delta_c^j\delta E^c_j)}{2\bar{p}}(f+g) \right\}
\right] ,
\end{eqnarray}
where part of the potential sector anomaly to be included in the
gravitational sector anomaly can be expressed as
\begin{eqnarray} \label{SFPotFeedbackAnoGrav}
\mathcal{A}_{{\rm grav}\varphi} = \frac{1}{8\pi G}
\int\mathrm{d}^3x \bar{N} (\delta N_1-\delta N_2) \bar{\alpha}^2
\left[ \bar{k}^2 (\delta^c_j\delta K_c^j) (- 3 f)
 +\bar{k}^3 \frac{(\delta_c^j\delta E^c_j)}{2\bar{p}} (-3f-3g)
\right] ~,
\end{eqnarray}
and another part that needs to be included in the anomaly expression
of the matter kinetic sector is
\begin{eqnarray}\label{SFPotFeedbackAnoKS}
\mathcal{A}_{\pi\nabla\varphi} = \int\mathrm{d}^3x \bar{N}
(\delta N_1-\delta N_2) \left[
\frac{\bar{\alpha}}{\sqrt{\bar{p}}}
\frac{\bar{\nu}\bar{\pi}^2}{2\bar{p}^{3/2}}
(\delta^c_j\delta K_c^j)(f)
+ \frac{\bar{\alpha}\bar{k}}{\sqrt{\bar{p}}}
\frac{\bar{\nu}\bar{\pi}^2}{2\bar{p}^{3/2}}
\frac{(\delta_c^j\delta E^c_j)}{2\bar{p}}
(f+g) \right] ~.
\end{eqnarray}

We have seen earlier that the presence of a non-zero cosmological
constant modifies the anomaly cancellation conditions as reflected in
equation (\ref{AnoCanGravCCEqn}). Similarly, the presence of a
non-trivial scalar matter potential leads to changes in the anomaly
cancellation conditions for both the gravitational sector as well as
the kinetic sector of matter. In particular, we can combine equations
(\ref{TotalGravAnomalyHH}) and (\ref{SFPotFeedbackAnoGrav}) to express
the total gravitational anomaly $\mathcal{A}_{\rm grav} :=
\mathcal{A}_{\rm grav}^{P} + \mathcal{A}_{\rm grav}^{C}+
\mathcal{A}_{{\rm grav}\varphi}$ as
\begin{equation} \label{TotalGravAnomalyHHWithSFPot}
\mathcal{A}_{\rm grav} = \frac{1}{8\pi G} \int\mathrm{d}^3x \bar{N}
(\delta N_1-\delta N_2)\bar{\alpha}^2 \left[ \frac{\bar{k}}{\bar{p}}
(\partial_c\partial^j\delta E^c_j) \mathcal{G}_1^{\varphi}
+ \bar{k}^2(\delta^c_j\delta K_c^j) \mathcal{G}_2^{\varphi}
+ \bar{k}^3 \frac{(\delta_c^j\delta E^c_j)}{2\bar{p}}
\mathcal{G}_3^{\varphi}\right],
\end{equation}
where the new coefficients are
\begin{eqnarray} \label{GPhiExpression}
\mathcal{G}_1^{\varphi} = \mathcal{G}_1 \quad,\quad
\mathcal{G}_2^{\varphi} = \mathcal{G}_2 -3f\quad,\quad
\mathcal{G}_3^{\varphi} = \mathcal{G}_3 - 3f - 3g ~.
\end{eqnarray}
Using equation (\ref{CTEqn1}), we again note that the new coefficients
satisfy the same non-trivial consistency relation
\begin{equation}\label{G3PhiG2PhiRelation}
\mathcal{G}_3^{\varphi} = - \mathcal{G}_2^{\varphi} ~.
\end{equation}
Thus also in the presence of a non-trivial scalar matter potential,
there are unambiguous solutions for $f$, $g$ and $h$ such that the
gravitational sector of
constraint algebra is anomaly-free. Similarly, for the matter kinetic
sector we can combine the original anomaly (\ref{SFHHAnomalySimp}),
contributions (\ref{GravCTContribSFKS}) from gravitational counter
terms, contributions (\ref{SFCTContribSFKS}) from kinetic sector
counter terms and contributions (\ref{SFPotFeedbackAnoKS}) from the
potential sector to express the total anomaly in the kinetic sector of
scalar matter as
\begin{eqnarray} \label{SFTotalKinAnomalyWithSFPot}
\mathcal{A}_{\pi\nabla} &:=&\mathcal{A}_{\rm matter}^P +
\mathcal{A}_{{\rm grav}\pi\nabla}^{C} + \mathcal{A}_{ \pi\nabla}^C +
\mathcal{A}_{\pi\nabla\varphi}\nonumber\\
&=& \int\mathrm{d}^3x\bar{N}
(\delta N_1-\delta N_2)
\left[\frac{\bar{\nu}\bar{\pi}\delta\pi}{\bar{p}^{3/2}}
\frac{\bar{\alpha}\bar{k}}{\sqrt{\bar{p}}}
\mathcal{B}_1^{\varphi} +
\frac{\bar{\nu}\bar{\pi}^2}{2\bar{p}^{3/2}}(\delta^c_j\delta K_c^j)
\frac{\bar{\alpha}}{\sqrt{\bar{p}}}
\mathcal{B}_2^{\varphi}
\right. \nonumber\\
&& \left. + \frac{\bar{\nu}\bar{\pi}^2}{2\bar{p}^{3/2}}
\frac{(\delta_c^j\delta E^c_j)}{2\bar{p}}
\frac{\bar{\alpha}\bar{k}}{\sqrt{\bar{p}}}
\mathcal{B}_3^{\varphi}
+
\frac{\bar{\nu}\bar{\pi}}{\bar{p}}\bar{\sigma}
\nabla^2\delta\varphi \mathcal{B}_4^{\varphi}
\right] ~,
\end{eqnarray}
with new coefficients
\begin{equation}\label{EqnBPhi}
\mathcal{B}_1^{\varphi} = \mathcal{B}_1 \quad,\quad
\mathcal{B}_2^{\varphi} = \mathcal{B}_2 +f \quad,\quad
\mathcal{B}_3^{\varphi} = \mathcal{B}_3 + f + g
\quad,\quad
\mathcal{B}_4^{\varphi} = \mathcal{B}_4
\end{equation}
in the matter kinetic sector anomaly expression. Also here,
the new coefficients satisfy a consistency relation
\begin{equation} \label{RelationBPhi123}
\mathcal{B}_3^{\varphi} = -2 \mathcal{B}_1^{\varphi} -
\mathcal{B}_2^{\varphi}
\end{equation}
using (\ref{CTEqn1}) and (\ref{SFCTEqn1}). This relation is analogous
to equation (\ref{RelationB123}) and likewise it ensures that anomaly
cancellation conditions lead to unambiguous expressions for counter
terms in the kinetic sector of scalar matter when there is a
non-trivial potential.

\subsection{Quantum corrected total Hamiltonian constraint}

We now combine all primary correction functions and the counter terms
to form the quantum corrected total Hamiltonian constraint $H^Q[N] :=
H^P[N] +H^C[N]$ for the system consisting of a scalar matter field
with arbitrary potential:
\begin{eqnarray} \label{HamGravQ}
H_{\rm grav}^Q[\bar{N}] &:=& \frac{1}{16\pi G} \int\mathrm{d}^3x
\bar{N}\left[ \bar{\alpha}{\mathcal H}^{Q(0)} +
\alpha^{(2)}{\mathcal H}^{Q(0)} +
\bar{\alpha}{\mathcal H}^{Q(2)}\right], \nonumber\\
H_{\rm grav}^Q[\delta N] &:=& \frac{1}{16\pi G}\int\mathrm{d}^3x
 \delta N \left[\bar{\alpha}{\mathcal H}^{Q(1)}\right] ~,
\end{eqnarray}
where the background density is unchanged
except for the explicit factor of the primary correction, i.e.\
${\mathcal H}^{Q(0)} \equiv {\mathcal H}^{(0)}$.
However the perturbed densities now involve counter terms:
\begin{eqnarray} \label{HamGravQDens}
\nonumber\\  {\mathcal H}^{Q(1)} &=& -4(1+f) \bar{k}\sqrt{\bar{p}}
\delta^c_j\delta K_c^j -(1+g)\frac{\bar{k}^2}{\sqrt{\bar{p}}}
\delta_c^j\delta E^c_j +\frac{2}{\sqrt{\bar{p}}}
\partial_c\partial^j\delta E^c_j  ~,
\nonumber\\  {\mathcal H}^{Q(2)} &=& \sqrt{\bar{p}} \delta
K_c^j\delta K_d^k\delta^c_k\delta^d_j - \sqrt{\bar{p}} (\delta
K_c^j\delta^c_j)^2 -\frac{2\bar{k}}{\sqrt{\bar{p}}} \delta
E^c_j\delta K_c^j
\\
&& \quad -\frac{\bar{k}^2}{2\bar{p}^{3/2}} \delta E^c_j\delta
E^d_k\delta_c^k\delta_d^j
+\frac{\bar{k}^2}{4\bar{p}^{3/2}}(\delta E^c_j\delta_c^j)^2
-(1+h)\frac{\delta^{jk} }{2\bar{p}^{3/2}}(\partial_c\delta E^c_j)
(\partial_d\delta E^d_k)  ~.\nonumber
\end{eqnarray}
It should be noted here that the terms in gravitational Hamiltonian which 
involve counterterms, contain either trace or divergence of perturbed 
basic variables. Thus, inclusion  of counterterms does not affect the 
earlier results for vector and tensor modes \cite{vector,tensor}.  
The complete quantum corrected matter Hamiltonian constraint is given by
\begin{eqnarray}\label{HamMatterQ}
 H^Q_{\rm matter}[\bar N]&=& \int_{\Sigma} \mathrm{d}^3x
\bar{N}\left[\left(\bar\nu\h_\pi^{Q(0)}+
\h_\varphi^{Q(0)}\right)+\left(\nu^{(2)}\h_\pi^{Q(0)}+
\bar\nu\h_\pi^{Q(2)}+\bar\sigma\h_\nabla^{Q(2)}+\h_\varphi^{Q(2)}\right)\right]~\nonumber\\
H_{\rm matter}^Q[\delta{N}] &=& \int\mathrm{d}^3x \delta N \left[
\bar{\nu}{\mathcal H}_\pi^{Q(1)} + {\mathcal H}_\varphi^{Q(1)}
\right] ~,
\end{eqnarray}
where background densities are again unchanged,
i.e.\ $\h_\pi^{Q(0)}\equiv\h_\pi^{(0)}$ and
$\h_\varphi^{Q(0)}\equiv\h_\varphi^{(0)}$.
As in the gravitational sector, perturbed matter densities
include counter terms and are given by
\begin{eqnarray}\label{HamMatterQDens}
{\mathcal H}_\pi^{Q(1)} &=& (1+f_1)\frac{\bar{\pi}
\delta{\pi}}{\bar{p}^{3/2}}
-(1+f_2)\frac{\bar{\pi}^2}{2\bar{p}^{3/2}} \frac{\delta_c^j
\delta E^c_j}{2\bar{p}}\\
 {\mathcal H}_\varphi^{Q(1)} &=&
\bar{p}^{3/2}\left( (1+f_3)V_{,\varphi}(\bar{\varphi})
\delta\varphi +V(\bar{\varphi}) \frac{\delta_c^j \delta
E^c_j}{2\bar{p}}\right)\nonumber\\
 \h^{Q(2)}_{\pi}&=&
(1+g_1)\frac{{\delta{\pi}}^2}{2\bar{p}^{3/2}}
-(1+g_2)\frac{\bar{\pi} \delta{\pi}}{\bar{p}^{3/2}}
\frac{\delta_c^j \delta E^c_j}{2\bar{p}}
\nonumber+\frac{1}{2}\frac{\bar{\pi}^2}{\bar{p}^{3/2}} \left(
(1+g_3)\frac{(\delta_c^j \delta E^c_j)^2}{8\bar{p}^2}
+\frac{\delta_c^k\delta_d^j\delta E^c_j\delta E^d_k}{4\bar{p}^2}
\right) \nonumber\\
\h^{Q(2)}_{\nabla}&=&\frac{1}{2}(1+g_5)\sqrt{\bar{p}}\delta^{ab}\partial_a\delta
\varphi
\partial_b\delta \varphi\nonumber\\
\h^{Q(2)}_{\varphi} &=&\bar{p}^{3/2} \left[(1+g_6)\frac{1}{2}
V_{,\varphi\varphi}(\bar{\varphi}) {\delta\varphi}^2 + V_{,\varphi}(\bar{\varphi})
\delta\varphi \frac{\delta_c^j \delta E^c_j}{2\bar{p}}+
V(\bar{\varphi})\left( \frac{(\delta_c^j \delta
E^c_j)^2}{8\bar{p}^2} -\frac{\delta_c^k\delta_d^j\delta
E^c_j\delta E^d_k}{4\bar{p}^2} \right)\right] \,. \nonumber
\end{eqnarray}

To summarize the conditions on non-vanishing coefficients of the
counter terms, we note that there are three such functions in the
gravitational sector (\ref{HamGravQDens}), six in the kinetic sector
and two in the potential sector of scalar matter
(\ref{HamMatterQDens}). Thus for the system under consideration we
have a total of eleven correction functions contained in all counter
terms. Invariance of counter terms under diffeomorphisms,
(\ref{CTEqn1}) and (\ref{SFCTEqn1}), led to four conditions
\begin{equation} \label{FinalConditionsDiffeo}
g = -2f  ~,~ f_2 = 2 f_1 ~,~ g_2=g_1 ~,~ g_3=2g_2
\end{equation}
among the non-vanishing coefficients.
These equations trivially lead to the solutions for $g$, $f_2$, $g_2$
and $g_3$, leaving seven functions to be determined.  

Cancellation of anomaly terms from the Poisson bracket between
Hamiltonian constraints led to three conditions (\ref{GPhiExpression})
from the gravitational sector
\begin{equation} \label{FinalConditionsGrav}
\mathcal{G}_1^{\varphi} = 0 ~,~
\mathcal{G}_2^{\varphi} = 0 ~,~
\mathcal{G}_3^{\varphi} = 0 ~,
\end{equation}
among which only two are independent due to
(\ref{G3PhiG2PhiRelation}). These two independent equations explicitly
solve $f$ and $h$ in terms of the primary correction function $\alpha$.
Thus, there are only five remaining functions that need to be
determined.  Anomaly cancellation from matter kinetic sector
(\ref{EqnBPhi}) leads to four conditions
\begin{equation}\label{FinalConditionsSFKE}
\mathcal{B}_1^{\varphi} = 0 ~,~
\mathcal{B}_2^{\varphi} = 0 ~,~
\mathcal{B}_3^{\varphi} = 0 ~,~
\mathcal{B}_4^{\varphi} = 0 ~.~
\end{equation}
Given the relation (\ref{RelationBPhi123}), there are only three
independent equations which leads to explicit solutions for $f_1$,
$g_1$ and $g_5$ in terms of primary correction functions $\alpha$ and
$\nu$. The remaining two free functions $f_3$ and $g_6$ are
constrained by requiring cancellation of anomalies from the potential
sector which gives four conditions
\begin{equation}\label{FinalConditionsSFPot}
\mathcal{D}_1 = 0 ~,~
\mathcal{D}_2 = 0 ~,~
\mathcal{D}_3 = 0 ~,~
\mathcal{D}_4 = 0 ~.
\end{equation}
Using equation (\ref{CTContribSFPotCoeff}), one notes that
$\mathcal{D}_1 = 0$ and $\mathcal{D}_2=0$ already determines both
$f_3$ and $g_6$ in terms of other counter terms coefficients which are
already fixed. 

While $\mathcal{D}_3 = 0$ is not an independent equation,
$\mathcal{D}_4 = 0$ imposes a non-trivial restriction on counter
terms. Since all of them have been determined at this stage,
consistency requires the primary correction functions to satisfy
\begin{equation}\label{alphanuRelation}
\f{\bar\alpha^\prime\bar p}{\bar\alpha} +\f{\bar
p}{3}\left(\f{\bar\alpha^\prime\bar p}{\bar
\alpha}\right)^\prime-\f{\bar\nu^\prime\bar p}{\bar \nu} -\f{\bar
p}{9}\left(\f{\bar\nu^\prime\bar p}{\bar
\nu}\right)^\prime+\f{2\bar p^2}{9}\left(\f{\bar\nu^\prime\bar
p}{\bar \nu}\right)^{\prime\prime}=0.
\end{equation}
Note that this relation ties the matter correction function to the
gravitational correction function, but it is independent of the matter
fields.
Finally, from (\ref{ANSRelation}) we have the relation
$\bar\alpha^2=\bar\nu\bar\sigma$ to be satisfied by the primary
correction functions.

To summarize, the requirement of anomaly
freedom in the constraint algebra tightly controls the allowed forms
of primary correction functions. 
For primary corrections of the form (\ref{AlphaHomo}), for instance,
one can easily see that solutions exist provided certain relations
between the coefficients $c_{\alpha}$, $c_{\nu}$ and powers
$n_{\alpha}$, $n_{\nu}$ in the two primary correction functions
$\alpha$ and $\nu$ are satisfied. Thus, quantization ambiguities are
non-trivially reduced, which allows stringent consistency tests by
direct calculations from a full representation of the underlying
operators.
These restrictions indirectly
help to eliminate some of the quantization ambiguities encountered
in quantizing inverse triad operators.

\section{Conclusions}

In this paper we have analyzed quantum corrected constraints at the
perturbative effective level. The key issue has been whether they form
a closed Poisson algebra, which would ensure consistency of equations
of motion they generate. We have found that a behavior of correction
functions $\alpha$, $\nu$ and $\sigma$ (scalars of zero density
weight) as in homogeneous models, which would suggest that they (i)
depend only on the triad $E_i^a$ (but not on the extrinsic curvature
$K_a^i$), (ii) depend on the triad algebraically (i.e. do not contain spatial
derivatives of $E_i^a$), and (iii) in the perturbed context, depend on the
background triad $\bar E_i^a$ and its perturbation $\delta E_i^a$ only
in the combination $\bar E_i^a+\delta E_i^a\equiv E_i^a$, would allow
a closed algebra only if the corrections are trivial.

At the same time, from the constructive point of view, it is not
surprising that the three conditions cannot be met together.
Indeed, the only scalar quantity that can be constructed from the
triad alone is its determinant --- a density weight one object ---
leaving no possibility of cancelling the density weight.  One could
relax any of these conditions and see whether that would allow
non-trivial corrections.  For instance, by allowing correction
functions to depend on spatial curvature, which would require spatial
derivatives of the triad, we could alleviate the problem of zero
density weight since this would make available the quantity
\[\frac{E_i^a}{\sqrt{|\det E|}}\D_a\left(\f{\D_b
E_i^b}{\sqrt{|\det E|}}\right)\,.
\]
On the other hand, if we relax the first condition, quantities of the
form $E_i^a K_a^i/\sqrt{|\det E|}$ would be allowed.

We were therefore led to conclude that expectations from homogeneous
models did not capture all possible quantum effects, and turned to
investigating what quantum corrections of inhomogeneous constraints
would be allowed in principle, and which ones should be ruled out. In
this process, we have generated several counterterms in addition to
the primary corrections suggested by homogeneous models. The resulting
counterterms admit non-trivial quantum corrections, and their presence
and form can be related to fundamental aspects of loop quantum
gravity. For instance, we have seen that quantum corrections must be
connection dependent even when they come from inverse triad
corrections.  This can be interpreted as meaning that the computation
of effective constraints, based on expectation values of constraint
operators, must be done in coherent states such that a holonomy
dependence of inverse triad expressions results.  Correction functions
must also depend on spatial derivatives of the triad, which can be
seen as leading terms in a derivative expansion of non-local
expressions involving fluxes, i.e.\ 2-dimensionally integrated triads.
There are also quite unexpected effects, such as counterterms in the
matter sector involving derivatives of the potential. Not all of them
are simply realized as a consequence of the expansion of
$V(\bar{\varphi}+\delta\varphi)$ by inhomogeneities. This suggests that the
matter potential must be quantized in a non-local way to ensure
anomaly freedom. This form of non-locality is currently not realized
in quantizations of scalar matter in loop quantum gravity.
It suggests concrete ways to change full constructions so as to
provide an (off-shell) anomaly-free formulation.

{}From the perturbed second order constraints one can directly derive
Hamiltonian equations of motion for the perturbed variables as well as
gauge transformations on them. Both equations of motion and gauge
transformations will be corrected by quantum gravity terms, which has
to be combined for equations of motion of gauge invariant variables of
the form (\ref{PertI})--(\ref{PertIII}).  Imposing the conditions
found for an anomaly-free constraint algebra must, on general grounds,
result in a consistent set of equations. This has already been
verified for vector and tensor modes (see \cite{vector} and
\cite{tensor} respectively). In a companion paper
\cite{ScalarGaugeInv} we explicitly derive gauge transformations and
construct gauge invariant variables taking into account quantum
corrections, which we will then use to derive gauge invariant
equations of motion describing cosmological perturbations.

We have provided one consistent set of equations by a process which
demonstrates that possibilities of non-trivial quantum corrections are
rather tight.
In fact, existing proposals for primary correction functions are
non-trivially restricted.
Yet, different versions may be available, which could in principle be
compared with full derivations of effective Hamiltonians to fix
remaining ambiguities. But there may also be quantization ambiguities
which cannot be removed based solely on consistency considerations;
they would have to be restricted phenomenologically instead. It is
thus important also for a fundamental understanding to evaluate
cosmological implications of the quantum corrected perturbation
equations.

In addition to other choices regarding one type of quantum
corrections, which in this paper is inverse triad corrections, there
are different general types of corrections. 
In loop quantum gravity, we have two additional classes: corrections
of higher powers of the connection or extrinsic curvature due to the
use of holonomies, and genuine quantum back-reaction effects which
include the influence of the whole wave function on its expectation
values. (It is the latter which underlies constructions such as the
low-energy effective action used in particle physics.)  These
corrections turn out to be more difficult to compute in consistent
form, which is still in progress. Our consistent equations are thus
not to be considered as complete effective equations, and including
the corresponding terms of one type may add to the effects of another
type or decrease them (in a way which is regime dependent).
But it is unlikely that complete cancellations happen because
corrections of the different types take so different forms. Moreover,
a complete cancellation would mean that the characteristic fundamental
representation of loop quantum gravity would leave no trace on the
physics of the theory. 

While quantitative results are expected to depend on the specific form
of corrections and the interplay of different types, the occurrence of
qualitative effects signalling deviations from classical relativity is
more robust. This differs from other results of loop quantum
cosmology, such as bounces in homogeneous models where a sharp
zero-result for the time derivative of a scale factor is required.
Such sharp conditions can easily be destroyed when additional quantum
corrections are included; see e.g.\ \cite{BounceSqueezed}. Compared to
that, the complete elimination of qualitative effects of one type of
correction by including another type is highly unlikely.

The consistent constraint algebra shows that non-trivial quantum
corrections which reflect the underlying discreteness of spatial
geometry are possible. In this sense, general covariance is preserved.
However, we have shown that the classical constraint algebra, while
consistently deformed, is not represented exactly but receives quantum
corrections from the corrected constraints. One can see this directly
from (\ref{HHQuant}), for instance, which carries a factor of
$\bar{\alpha}^2$ in the smearing function of the diffeomorphism
constraint.  This is required by consistency since the classical
algebra cannot be realized with non-trivial quantum corrections. Thus,
an effective action of loop quantum gravity cannot be simply of
higher-curvature type.  (Non-local features, for instance, would then
be essential.)  Nevertheless, we expect that some of the corrections
can be formulated by effective higher-curvature actions which applies
even to the inverse triad corrections used here. Some of the
counterterms, which depend on extrinsic curvature components as well
as spatial derivatives of the triad, can in fact be interpreted as
bringing the corrected constraints in a form amenable to being formulated
as a higher-curvature action. In this context, we emphasize that the
absence of new degrees of freedom in this Hamiltonian framework is
not in conflict with the higher-derivative nature of
higher-curvature effective actions as also discussed in
\cite{Karpacz}: a perturbative interpretation of higher-derivative
actions, which is the only appropriate way in quantum gravity, does
not give rise to more solutions than expected classically
\cite{Simon}.

\section*{Acknowledgements}

This work was supported in part by NSF grants PHY0653127 and PHY0456913.
MK was supported by NSF grant PHY0114375.
SS wishes to thank Roy Maartens and Kevin Vandersloot for discussions.
He is being supported by the Marie Curie Incoming International Grant
IIF-2006-039205.

\section*{Appendix}

\begin{appendix}

\section{Comparison with isotropic models}
\label{a:iso}

An important measure for the size of quantum corrections is the
characteristic scale $a_*$ which signals the onset of non-perturbative
effects. As a critical value for the scale factor, it does not have
absolute meaning because it can be rescaled by a choice of
coordinates. It is ratios such as $a/a_*$ which have physical meaning
related to the patch density of a quantum gravity state.  For a denser
state features of correction functions based on inverse triad
components are realized on larger scales, which increases the
corresponding quantum corrections.

These effects are also important if one tries to include the behavior
in homogeneous models, even though an exactly homogeneous model
provides only limited means of referring to spatial discreteness of an
underlying state and its refinement. For this reason, care is needed
if one tries to address possibilities of refinement schemes and the
size of quantum corrections in purely homogeneous settings, as is
often done due to the simplicity of homogeneous models.  Quantum
corrections in a fully inhomogeneous situation must be expected to be
larger than in a naive isotropic quantization which ignores the factor
${\cal N}$ of the patch density and implicitly assumes ${\cal
N}^{1/3}\sim 1$ as in \cite{Bohr}.  This is the reason why some
minisuperspace considerations artificially suppress those corrections.
Corrections from holonomies, on the other hand, increase with
decreasing vertex density such that they would appear to be more
pronounced. It is possible to mimic the enhancement of inverse triad
corrections even in exact homogeneous models by computing their
correction function for operators based on higher representations of
SU(2) instead of the fundamental one \cite{Ambig}. The corresponding
spin label is then related to the vertex density.

In addition to the size of corrections, there is also the issue of the
correct scaling behavior of correction terms. To have independence of
the coordinate size $V_0$ of the region whose patches are counted, we
must have ${\cal N}\propto V_0$. However, this provides coordinate
independent quantum corrections only if we multiply by another
function which can absorb the coordinate dependence of $V_0$. The
simplest possibility in isotropic models is to use ${\cal N}\propto
a^3V_0$ for which corrections depend neither on the size of the volume
nor on coordinates. This behavior is indeed well-motivated based on
lattice refinements (where the physical vertex density is constant)
and was introduced in \cite{APSII} based on scaling arguments. If
${\cal N}$ is allowed to depend directly only on the scale factor (and
not on $\dot{a}/a$, say), and there is no other parameter which
rescales under changing coordinates other than $V_0$, this is indeed
the only consistent choice. In this sense, the behavior proposed in
\cite{APSII} is unique in isotropic models up to a single constant
which determines the absolute number of patches.

This uniqueness is, however, contingent on conditions which are too
strong for reliably modelling what happens in inhomogeneous
situations. If the model is no longer exactly homogeneous, the
refinement of underlying states is history dependent in ways which do
not simply amount to a dependence on $a$. One can always express the
refinement as a dependence on scaling-independent observables such as
$\dot{a}/a$ which provide an equally good measure for the history of
different phases of the universe. For a given background solution, one
can then express this as a dependence on $a$ alone, given that all
observables depend on $a$. In general, however, this provides more
complicated functions than just ${\cal N}(a)\propto a^3V_0$ for the
patch density. Some phases can, for instance, be described by a
power-law form ${\cal N}(a)={\cal N}_0a^xV_0$ where ${\cal N}_0$
arises in a complicated process by expressing the refinement via a
function only of $a$. In particular, because the original refinement
is history dependent only via observable quantities, this constant
will automatically be equipped with a scaling dependence such that
${\cal N}_0a^xV_0$ is coordinate independent even if $x\not=3$. The
emergence of such a parameter can only be seen in the proper
inhomogeneous context, invalidating considerations based solely on
homogeneous models.

\section{Poisson brackets between unperturbed constraints.}
\label{a:Unpert}

It is instructive to compute Poisson brackets between primary quantum
corrected constraints without expanding by inhomogeneities.  We first
consider the gravitational Hamiltonian constraint
(\ref{HamConstQuant}). As $H_{\rm grav}^P$ commutes with matter 
diffeomorphism $D_{\rm matter}$, it is sufficient to compute $\{H^P_{\rm grav}[N], 
D_{\rm grav}[N^a]\}$. Classically, this Poisson bracket is
\begin{equation} \label{FullHDClassical}
\{H_{\rm grav}[N], D_{\rm grav}[N^a]\} = - H_{\rm grav}[N^a\partial_a N] ~.
\end{equation}

The (gravitational) diffeomorphism constraint acts as a Lie
derivative on a (gravitational) phase space function
\begin{equation} \label{Grav_Diff_Lie}
\left\{F(A,E),D_{\rm grav}[N^a]\right\}={\mathcal L_{N^a}F}~.
\end{equation}
The Poisson bracket between $H_{\rm grav}^P[N]$ and $D_{\rm
grav}[N^a]$ then is
\begin{eqnarray}
\{H_{\rm grav}^P[N], D_{\rm grav}[N^a]\} &\equiv&
\{\int_{\Sigma}\mathrm{d}^3x N \alpha{\mathcal H}, D_{\rm grav}[N^a]\}
=  \int_{\Sigma}\mathrm{d}^3x N {\mathcal L}_{N^a}(\alpha{\mathcal H})
\nonumber\\
&=& \int_{\Sigma}\mathrm{d}^3x N \left(N^a \partial_a(\alpha{\mathcal H})+
(\partial_a N^a)\alpha{\mathcal H}\right)
= \int_{\Sigma}\mathrm{d}^3x (- N^a\partial_a N) \alpha{\mathcal
H} \nonumber\\
&\equiv& - H_{\rm grav}^P[N^a\partial_a N] ~.\label{HD}
\end{eqnarray}
Where we have used the fact that the quantity $\alpha {\mathcal H}$
has density weight one to expand the Lie derivative and integrated by
parts in the next line. By the same token, we would not obtain the
correct algebra if $\alpha$ would not be of density weight
zero. Clearly, any functional of the (gravitational) variables defined
by integration must be an integral of a density-weight-one
function. Thus the only restriction on the correction function,
obtained so far, is that it must be of zero density weight. In fact,
only this condition makes the spatial integrals well-defined.

The matter Hamiltonian constraint has non-zero Poisson brackets with
both gravitational and matter parts of the diffeomorphism constraint.
However, the total diffeomorphism constraint acts as a Lie derivative
on any function of {\em all} phase space variables
\begin{equation} \label{Diff_Lie}
\left\{F(A,E,\varphi,\pi),D[N^a]\right\}={\mathcal L_{N^a}F}~.
\end{equation}
Hence its Poisson bracket with the (gravitational) diffeomorphism
constraint should boil down to an expression analogous to
(\ref{HD}). Again, the only condition on the correction functions is
that they have zero density weight. In that case not only are the
quantum corrected constraints first class, but also form an algebra
identical to the classical one.

In what follows we will make extensive use of

\begin{lemma}\label{lemma1}
Consider a functional
\begin{equation}\label{F_N}%
F[N]=\int{\!\md^3 x\, N(x) f\!\left(\varphi,\pi\right)}
\end{equation}%
of two canonically conjugate scalar\footnote{Generalization to
the case of tensorial as well as several conjugate fields is
straightforward.} fields $\varphi$ and $\pi$. If $f$ does not
depend on spatial derivatives of the fields, the Poisson bracket
\begin{equation}\label{F_N1N2}%
\{F[N_1],F[N_2]\}_{(\varphi,\pi)} = 0\nonumber
\end{equation}%
vanishes
\end{lemma}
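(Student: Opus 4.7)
The plan is to prove this by direct computation of the Poisson bracket using functional derivatives, exploiting the antisymmetry of the bracket together with the fact that, in the absence of spatial derivatives, the functional derivatives reduce to pointwise multiplication.

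First I would apply the definition of the Poisson bracket in the scalar sector,
\[
\{F[N_1],F[N_2]\}_{(\varphi,\pi)} = \int d^3x\,\left(\frac{\delta F[N_1]}{\delta\varphi(x)}\frac{\delta F[N_2]}{\delta\pi(x)} - \frac{\delta F[N_1]}{\delta\pi(x)}\frac{\delta F[N_2]}{\delta\varphi(x)}\right).
\]
Since $f$ depends on $\varphi$ and $\pi$ only algebraically (no spatial derivatives), functional differentiation produces no derivative of the delta distribution and hence no integration by parts against $N_i$ is required. Explicitly,
\[
\frac{\delta F[N_i]}{\delta\varphi(x)} = N_i(x)\,\frac{\partial f}{\partial\varphi}(x),\qquad \frac{\delta F[N_i]}{\delta\pi(x)} = N_i(x)\,\frac{\partial f}{\partial\pi}(x).
\]

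Substituting these expressions into the bracket yields
\[
\{F[N_1],F[N_2]\}_{(\varphi,\pi)} = \int d^3x\,N_1(x)N_2(x)\left(\frac{\partial f}{\partial\varphi}\frac{\partial f}{\partial\pi} - \frac{\partial f}{\partial\pi}\frac{\partial f}{\partial\varphi}\right) = 0,
\]
where the integrand vanishes pointwise by the commutativity of ordinary multiplication. The key observation is that the product $N_1(x)N_2(x)$ is \emph{symmetric} in the two smearing functions, whereas the Poisson bracket is antisymmetric, so the only way to reconcile the two is that the coefficient itself must vanish identically.

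I do not expect any real obstacle; the content is a one-line computation. The only point that deserves emphasis is \emph{why} the hypothesis matters: if $f$ contained spatial derivatives of $\varphi$ or $\pi$, then functional differentiation would produce derivatives of $\delta^3(x-y)$, integration by parts would transfer those derivatives onto $N_i$, and the integrand would no longer be symmetric in $N_1,N_2$, allowing a nontrivial result. This remark should be included to motivate the hypothesis and to foreshadow its use in the more general calculations of later sections, where analogous brackets between Hamiltonian constraints carrying gradient terms are manifestly nonzero.
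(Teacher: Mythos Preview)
Your proof is correct and follows essentially the same route as the paper: compute the functional derivatives pointwise (which is justified precisely because $f$ has no spatial derivatives), substitute into the bracket, and observe that the integrand is symmetric in $N_1\leftrightarrow N_2$ and hence vanishes under antisymmetrization. The paper likewise appends the same remark that spatial derivatives would transfer onto the smearing functions and spoil the cancellation.
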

\begin{proof}
Since the integrand does not contain spatial
derivatives, we have the functional derivative
$\delta F[N]/\delta \varphi = N\D f/\D \varphi$
and
\begin{eqnarray}%
\{F[N_1],F[N_2]\} &\equiv& \int{\!\md^3 x\, \left(\frac{\delta
F[N_1]}{\delta \varphi}\frac{\delta F[N_2]}{\delta \pi} -
(N_1\leftrightarrow N_2)\right)}\nonumber\\
&=&\int{\!\md^3 x\, \left(N_1 \frac{\D f}{\D \varphi}N_2\frac{\D
f}{\D \pi} - (N_1\leftrightarrow N_2)\right)}
=0\nonumber
\end{eqnarray}%
\end{proof}

On the contrary, if spatial derivatives of the fields are present in
the integrand, the relevant functional derivative involves derivatives
of the smearing function which implies a non-vanishing final
expression for the Poisson bracket after anti-symmetrization over
$N_1$ and $N_2$.

Using this result, let us analyze the expression
\begin{eqnarray}
\{H^P[N_1],H^P[N_2]\} &\equiv&
\{H^P_{\rm grav}[N_1]+H^P_{\rm matter}[N_1],H^P_{\rm grav}[N_2]+
H^P_{\rm matter}[N_2]\} \nonumber \\
&=&\{H^P_{\rm grav}[N_1],H^P_{\rm grav}[N_2]\}+\{H^P_{\rm matter}[N_1],
H^P_{\rm matter}[N_2]\}\\
&+&\left(\{H^P_{\rm grav}[N_1],H^P_{\rm matter}[N_2]\}-(N_1 \leftrightarrow
N_2)\right) \nonumber
\end{eqnarray}
term by term. The gravitational constraints yield
\begin{eqnarray}\label{HGwithHG}
\{H^P_{\rm grav}[N_1],H^P_{\rm grav}[N_2]\}&=&\int_{\Sigma}\mathrm{d}^3x
\left(\frac{\delta H_{\rm grav}[\tilde N_1]}{\delta A_a^i} \frac{\delta
H_{\rm grav}[\tilde N_2]}{\D E_i^a} - (N_1 \leftrightarrow
N_2)\right)\nonumber\\
&+&\int_{\Sigma}\mathrm{d}^3x \left(\frac{\delta H_g[\tilde
N_1]}{\delta A_a^i} \frac{\D \alpha}{\D E_i^a} N_2 \h - (N_1
\leftrightarrow
N_2)\right) \nonumber \\
&=&D_{\rm grav}[\tilde N_1 \D^a \tilde N_2 - \tilde N_2 \D^a \tilde
N_1]+{\mathcal A}_{\rm grav\,grav},
\end{eqnarray}
where we have used the fact that if $\tilde{N}_{1,2}$ were independent
of phase space variables then we would simply have the classical
constraint algebra but with new lapse functions $\tilde{N}_1$ and
$\tilde{N}_2$. However, since $\tilde{N}_{1,2}$ do depend on the
densitized triad there is an extra (potentially anomalous) term in the
Poisson bracket which is the second term ${\mathcal A}_{\rm grav\,grav}$,
proportional to the derivatives of the correction function. The
non-trivial contributions to the anomaly
\begin{eqnarray}\label{HwithH_anomaly}
{\mathcal A}_{\rm grav\,grav}&=&-\int_{\Sigma}\mathrm{d}^3x
\left\{N_1\frac{\D \alpha}{\D E_{i^\prime}^a}\h\frac{\delta}{\delta
A_a^{i^\prime}}\left(\int_{\Sigma}\mathrm{d}^3y N_2 \alpha \frac{2
E_i^c E_j^d}{\sqrt{\det E}} \D_c A_d^k \epsilon_{ijk}\right) - (N_1
\leftrightarrow N_2)\right\}\nonumber \\ &=&\int_{\Sigma}\mathrm{d}^3x
\, \h\frac{\D \alpha}{\D E_{k}^a}\epsilon_{ijk}\left\{N_1 \D_c
\left(\alpha N_2\frac{2 E_i^c E_j^a}{\sqrt{\det E}}\right) - (N_1
\leftrightarrow N_2)\right\}
\end{eqnarray}
come from the gradient terms of the Hamiltonians. Note that, for
convenience, we switched the order of terms in the first line of
(\ref{HwithH_anomaly}). In the second line, the only term in the
parenthesis that survives the anti-symmetrization is the one
proportional to the gradient of the lapse function $\D_c N_2$. Thus
the anomaly simply boils down to
\begin{equation}\label{HwithH_anomaly1}
{\mathcal A}_{\rm grav\,grav}=H_{\rm grav}^P[M_\alpha] \quad {\rm with}\quad
M_\alpha:=2\epsilon_{ijk}\frac{E_i^c E_j^a}{\sqrt{\det E}}\frac{\D
\alpha}{\D E_k^a}(N_1\D_c N_2 - N_2\D_c N_1).
\end{equation}
It is easy to see that the symmetricity condition
\begin{equation}\label{cond_on_alpha}
E_j^a\frac{\D \alpha}{\D E_k^a}=E_k^a\frac{\D \alpha}{\D E_j^a}
\end{equation}
is sufficient to make the anomaly (\ref{HwithH_anomaly1}) vanish
due the contraction with $\epsilon_{ijk}$. We should point out
that (\ref{cond_on_alpha}) is definitely satisfied for any
triad-dependent scalar function, which has all internal indices
contracted.

The cross Poisson bracket $\{H_{\rm grav}^P[N_1],H_{\rm
matter}^P[N_2]\}-(N_1\leftrightarrow N_2)$ can be computed
similarly. In the absence of curvature couplings, such that the matter
Hamiltonian contains neither connection nor spatial derivatives of
the triad, this Poisson bracket is given by
\begin{eqnarray}\label{HGwithHM}
\{H^P_{\rm grav}[N_1],H^P_{\rm matter}[N_2]\}-(N_1 \leftrightarrow N_2)&=&
\int_{\Sigma}\mathrm{d}^3x \left(\frac{\delta H_{\rm grav}[\tilde
N_1]}{\delta A_a^i} \frac{\D\nu}{\D E_i^a} N_2 \h_\pi - (N_1
\leftrightarrow N_2)\right) \nonumber \\
&+& \int_{\Sigma}\mathrm{d}^3x \left(\frac{\delta H_{\rm grav}[\tilde
N_1]}{\delta A_a^i} \frac{\D\sigma}{\D E_i^a} N_2 \h_\nabla - (N_1
\leftrightarrow N_2)\right) \nonumber \\
&=&H_\pi^P[M_\nu]+H_\nabla^P[M_\sigma],
\end{eqnarray}
where
\begin{equation}\label{HGwithHM_anomaly1}
H_\pi^P[M_\nu]= \int_{\Sigma}\mathrm{d}^3x M_\nu \h_\pi, \quad
H_\nabla^P[M_\sigma]= \int_{\Sigma}\mathrm{d}^3x M_\sigma
\h_\nabla
\end{equation}
with the effective lapse functions
\begin{equation}
M_\nu:=2\epsilon_{ijk}\frac{E_i^c E_j^a}{\sqrt{\det E}}\frac{\D
\nu}{\D E_k^a}(N_1\D_c N_2 - N_2\D_c N_1), \quad
M_\sigma:=2\epsilon_{ijk}\frac{E_i^c E_j^a}{\sqrt{\det E}}\frac{\D
\sigma}{\D E_k^a}(N_1\D_c N_2 - N_2\D_c N_1)
\end{equation}
similar to (\ref{HwithH_anomaly1}). These vanish if the correction
functions satisfy
\begin{equation}\label{cond_on_nu-sigma}
E_j^a\frac{\D \nu}{\D E_k^a}=E_k^a\frac{\D \nu}{\D E_j^a}, \quad
E_j^a\frac{\D \sigma}{\D E_k^a}=E_k^a\frac{\D \sigma}{\D E_j^a}.
\end{equation}

Finally, the Poisson bracket between two matter Hamiltonians
involves only functional derivatives with respect to the matter
variables $\varphi$ and $\pi$. By virtue of lemma (\ref{lemma1}),
the non-trivial contribution comes from
\[\left\{\int_{\Sigma}\mathrm{d}^3x N_1 \nu
\h_\pi,\int_{\Sigma}\mathrm{d}^3x N_2 \sigma
\h_\nabla\right\}_{(\varphi,\pi)}-(N_1 \leftrightarrow N_2).
\]
Since the correction functions do not depend on the matter
variables, they act as constant factors, i.e.
\begin{eqnarray}\label{HMwithHM}
\{H^P_{\rm matter}[N_1],H^P_{\rm matter}[N_2]\}=
D_{\rm matter}[\nu\sigma (N_1 \D^a N_2 - N_2 \D^a
N_1)],
\end{eqnarray}
Combining (\ref{HGwithHG}), (\ref{HGwithHM}) and (\ref{HMwithHM})
and assuming (\ref{cond_on_alpha}) and
(\ref{cond_on_nu-sigma}) we obtain
\begin{eqnarray}\label{HwithH}
\{H^P[N_1],H^P[N_2]\}=D_{\rm grav}[\alpha^2 (N_1 \D^a N_2 - N_2 \D^a
N_1)]+D_{\rm matter}[\nu\sigma (N_1 \D^a N_2 - N_2 \D^a N_1)].
\end{eqnarray}
It is easy to see that the constraint algebra closes, if
$\alpha^2=\nu\sigma$ in addition to the requirement that $\alpha$,
$\nu$ and $\sigma$ are all scalars of vanishing density weight. In
that case the right hand side of (\ref{HwithH}) reduces to the total
diffeomorphism constraint
\begin{eqnarray}\label{HwithH_nonanom}
\{H^P[N_1],H^P[N_2]\}=D[\alpha^2 (N_1 \D^a N_2 - N_2 \D^a
N_1)]\equiv D[\tilde N_1 \D^a \tilde N_2 - \tilde N_2 \D^a \tilde
N_1)].
\end{eqnarray}

So far in this appendix, we have worked non-perturbatively which gives
only a few conditions on quantum correction functions.  The anomaly
freedom conditions (\ref{HDAnomalyFreeCond}) and
(\ref{HHAnomalyFreeCond}) obtained in the main part of this paper,
where the condition of vanishing density weight turns out to be quite
non-trivial, appear much more restrictive compared with the relatively
mild-looking requirement derived in the context of the unperturbed
system in this appendix. It is therefore pertinent to comment on this
apparent discrepancy.

Note that the conditions on the three correction functions imply the
same functional form of $\alpha$, $\nu$ and $\sigma$. Thus we shall
restrict our consideration to only one of them. In section
\ref{PrimCorr}, we had made the following assumptions concerning the
primary correction function $\alpha$:\\ (i) $\alpha$ depends only on
the triad $E_i^a$ (but not on the extrinsic curvature $K_a^i$ or the
connection),\\ (ii) $\alpha$ depends only algebraically on the triad
$E_i^a$ (but not its spatial derivatives)\\ (iii) in the perturbed
context, $\alpha$ depends on the background triad $\bar E_i^a$ and its
perturbation $\delta E_i^a$ only in the combination $\bar E_i^a+\delta
E_i^a\equiv E_i^a$ (i.e. $\alpha$ is expected to originate from a full
unperturbed expression).

One can check by inspection that assumption (iii) implies that
(\ref{HHAnomalyFreeCond}) is automatically satisfied. Indeed,
using the Taylor expansion
\be%
 \alpha(E_i^a) = \alpha(\bar E_i^a)+\left.\f{\D \alpha(E_i^a)}{\D
 E_i^a}\right|_{\bar E_i^a} \delta E_i^a +
\left.\f{1}{2} \f{\D^2 \alpha(E_i^a)}{\D E_i^a \D
 E_j^b}\right|_{\bar E_i^a}\delta E_i^a \delta E_i^b+\cdots \equiv
 \bar\alpha+\alpha^{(1)}+\alpha^{(2)}+\cdots
 \ee it is easy to see that the terms on the right hand side are not
 entirely independent. Clearly the relations between $\bar\alpha$,
 $\alpha^{(1)}$, and $\alpha^{(2)}$ are exactly written in
 Eq.~(\ref{HHAnomalyFreeCond}).

However, of greater concern is the other condition,
Eq.~(\ref{HDAnomalyFreeCond}). In particular, the requirement
$\alpha^{(1)}=0$ (along with (\ref{HHAnomalyFreeCond})) rules
out all possible non-trivial solutions. In order to understand its
origin let us revisit the seemingly trivial restriction on the
correction function to be of zero density weight. We start by
formulating the following

\begin{lemma}
A scalar $\alpha(E_i^a)$ of density weight zero satisfying
the three assumptions above must be a constant function.
\end{lemma}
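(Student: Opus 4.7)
The plan is to exploit the coordinate-rescaling part of diffeomorphism invariance, combined with the assumption that $\alpha$ is algebraic in $E^a_i$ alone, to force $\alpha$ to depend only on the single invariant $\det E$, and then to use density-weight bookkeeping to make even that dependence trivial. By assumptions (i)--(ii) the value of $\alpha$ at a point depends only on the nine numbers $E^a_i$ at that point, so $\alpha$ is a function $\alpha:\mathbb{R}^9\to\mathbb{R}$; assumption (iii) guarantees that the same function is used in the background and in the perturbed context, so no new functional freedom arises from the split $E = \bar E + \delta E$. The density-weight-zero condition translates into the invariance $\alpha(|\det J|^{-1}J^a{}_b E^b_i) = \alpha(E^a_i)$ for every $J\in GL^+(3)$, together with internal $SO(3)$ invariance on the $i$ index.

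First I would classify the algebraic scalars. The only constant tensors available for contractions are $\delta^a_b$, $\delta_{ij}$, and the two Levi--Civita symbols $\tilde\epsilon_{abc}$, $\tilde\epsilon^{ijk}$. An index count then shows that any polynomial in $E^a_i$ which is a scalar under both $GL(3)$ spatially and $SO(3)$ internally is a polynomial in
\[\det E \;=\; \tfrac{1}{6}\,\tilde\epsilon_{abc}\,\tilde\epsilon^{ijk}\,E^a_i\,E^b_j\,E^c_k,\]
since saturating the upper spatial indices of $n$ copies of $E$ forces $n$ to be a multiple of $3$ and requires $n/3$ copies of $\tilde\epsilon_{abc}$, while matching internal indices uses exactly $n/3$ copies of $\tilde\epsilon^{ijk}$; any such contraction factorizes into powers of $\det E$. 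Hence every algebraic scalar $\alpha(E)$ depending only on $E^a_i$ is a function of $\det E$ alone.

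A density-weight count then closes the argument. The quantity $\det E$ has nontrivial and positive density weight, so under a uniform coordinate rescaling $x'=\lambda x$ one has $\det E \mapsto \lambda^{-k}\det E$ for a fixed integer $k>0$ (concretely, $E^a_i\mapsto \lambda^{-2}E^a_i$, hence $k=6$). The weight-zero condition then becomes $\alpha(\lambda^{-k}\det E) = \alpha(\det E)$ for every $\lambda>0$; since $\lambda^{-k}$ sweeps out $(0,\infty)$, $\alpha$ must take the same value at all points of $\{\det E > 0\}$, i.e.\ on the connected component of physical non-degenerate orientation-preserving triads. This is the claim.

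The main obstacle is the classification step in the middle: ruling out every hidden $GL(3)\times SO(3)$-invariant of the entries of $E^a_i$ beyond powers of $\det E$, so that the density-weight argument has nothing else to act on. The cleanest way to discharge this is the standard invariant-theory fact that the polynomial invariants for the joint left $SL(3)$ and right $SO(3)$ action on a $3\times 3$ matrix are generated by the determinant; the index-count sketched above is essentially a hands-on verification of that statement adapted to the mixed spatial/internal index structure of $E^a_i$.
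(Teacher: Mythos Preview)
Your argument is correct, but it is genuinely different from the paper's. The paper works infinitesimally and perturbatively: it writes the Lie-derivative identity for a scalar density of weight $w$ in two ways (once using the transformation law of a density, once using the chain rule through $E^a_i$), evaluates both at linear order around the FRW background $\bar E^a_i=\bar p\,\delta^a_i$ with $N^a=\delta N^a$, and reads off $\tfrac{2\bar p}{3}\partial\bar\alpha/\partial\bar p = w\,\bar\alpha$. Setting $w=0$ forces $\partial\bar\alpha/\partial\bar p=0$, and then assumption (iii), through the Taylor-expansion relations of the text, propagates this to $\alpha^{(1)}=0$, $\alpha^{(2)}=0$, and so on. In contrast, you work globally and group-theoretically: you translate ``density weight zero'' into invariance under $E\mapsto|\det J|^{-1}JER^{-1}$ for $(J,R)\in GL^{+}(3)\times SO(3)$, classify the joint invariants as functions of $\det E$ only, and then use the residual scaling to kill even that dependence. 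Your route yields constancy on all of $\{\det E>0\}$ at once without ever invoking the FRW background, and assumption (iii) enters only to say that $\alpha$ is a single function of $E$ rather than of $(\bar E,\delta E)$ separately; the paper's route stays closer to its computational framework and makes the link to the anomaly conditions explicit.

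One minor gap in your write-up: in the polynomial-invariant step you assert that ``matching internal indices uses exactly $n/3$ copies of $\tilde\epsilon^{ijk}$,'' which is not literally forced, since $\delta^{ij}$ is also available for internal contractions. The conclusion is still correct because the identity $E^a_iE^b_jE^c_k\,\tilde\epsilon_{abc}=(\det E)\,\epsilon_{ijk}$ converts every spatial $\tilde\epsilon$ into a factor of $\det E$ times an internal $\epsilon$, after which any $\delta^{ij}$'s contribute only numerical constants. A cleaner way to dispatch this step is to observe directly that the $SL(3)\times SO(3)$ action is transitive on each level set of $\det E$ in $GL^{+}(3)$, which makes $\det E$ the \emph{only} invariant without any index bookkeeping.
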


\begin{proof}
Consider a scalar $\alpha(E_i^a)$ of density weight
$w$ satisfying the aforementioned assumptions. Its Lie derivative
along an arbitrary shift vector $N^a$ is given by
\[
\l_{\vec N}\alpha=N^b \D_b\alpha+w\alpha \D_b N^b.
\]
On the other hand,
\[
\l_{\vec N}\alpha=\f{\D \alpha}{\D E_i^a}\l_{\vec N} E_i^a=\f{\D
\alpha}{\D E_i^a} \left( N^b \D_b E_i^a - E_i^b \D_b N^a +E_i^a
\D_b N^b\right).
\]
These equations are valid for any $N^a$. In the context of
cosmological perturbation theory, $\bar N^a=0$, hence $N^a=\delta
N^a$. In the perturbative expansion of the right hand side of the
equations there is no contribution from the background part.
Equating the corresponding linear order terms, we obtain
\[
w \bar \alpha \D_b \delta N^b=\left(\f{\D \alpha}{\D
E_i^a}\right)^{(0)}\left(\bar E_i^a \D_b \delta N^b-\bar E_i^b
\D_b \delta N^a \right).
\]
Using $\bar E_i^a=\bar p \delta_i^a$ the derivative $(\D
\alpha/\D E_i^a)^{(0)}\equiv (\D \alpha/\D \bar
E_i^a)|_{\bar E_i^a}$ can be rewritten as $\f{1}{3}\delta_i^a\D
\bar \alpha/\bar p$, which yields
\[
\f{2\bar p}{3}\f{\D \bar \alpha}{\D \bar p} \D_b \delta N^b =w
\bar \alpha \D_b \delta N^b.
\]
The divergence of a generic shift vector does not vanish, and
therefore the derivative of the background correction function is $\D
\bar \alpha/\D \bar p=\f{2}{3}\bar p w \alpha$.  Requiring $w=0$
results in $\D \bar \alpha/\D \bar p=0$ and consequently from
(\ref{HHAnomalyFreeCond}), $\alpha^{(1)}=0$, $\alpha^{(2)}=0$ and so
on. This concludes the proof of the lemma.
\end{proof}

In the light of this we are led to the following conclusion. The three
assumptions that we made on the functional form of the correction
functions are incompatible with the conditions for anomaly freedom,
unless $\alpha$, $\nu$ and $\sigma$ are constants.  Therefore, to allow
non-trivial solution we have to relax one or more of the assumptions
which makes the algebra much more involved. In the main text of this
paper, we organize these calculations by the method of counterterms.

\section{Poisson brackets of perturbed variables}
\label{a:Poisson}

A direct application of the Poisson brackets given by (\ref{PB_full})
can sometimes be problematic. For instance, the Poisson bracket
between the two original fields $\{\varphi,\pi\}$, given by
\bq \label{PB_Phi_Pi}
\{\bar\varphi+\delta\varphi(x),\bar\pi+\delta\pi(x)\}_{\bar\varphi,
\bar\pi,\delta\varphi,\delta\pi}&=&
\{\bar\varphi,\bar\pi\}_{\bar\varphi,\bar\pi}+
\{\delta\varphi(x),\delta\pi(y)\}_{\delta\varphi,\delta\pi},\nonumber\\
&=&\f{1}{V_0}+\delta(x-y),
\eq
does not agree with the original expression
$\{\varphi(x),\pi(y)\}=\delta(x-y)$. This can be traced to the fact that
(\ref{PB_full}) provides Poisson brackets for the fields
$(\bar\varphi,\bar\pi,\delta\varphi,\delta\pi)$ only if the conditions
(\ref{2nd_class_constraints}) are used in (\ref{New_Sim_Str}) to
identify $\bar\varphi$ and $\bar\pi$ with the sole zero modes of inhomogeneous
fields.  The constraints (\ref{2nd_class_constraints}) clearly have a
nonzero Poisson bracket $\{\chi_1,\chi_2\}$, which makes them of the
second class.

According to Dirac \cite{DirQuant}, second class constraints
correspond to non-physical degrees of freedom and can be dealt
with in the following way. i) One should take linear combinations
of (all) the constraints, in order to bring as many of them into
first class form as possible and ii) redefine the Poisson bracket to
\be \label{DB}
\{F,G\}^*_{\delta\varphi,\delta\pi}=
\{F,G\}_{\delta\varphi,\delta\pi}-\{F,\chi_a\}_{\delta\varphi,\delta\pi}C^{ab}\{\chi_b,G\}_{\delta\varphi,\delta\pi},
\ee
where
\[
C^{ab}\{\chi_b,\chi_c\}=\delta^a_c \nonumber
\]
so as to remove the variations with respect to the non-physical
degrees of freedom. Using (\ref{2nd_class_constraints}) we obtain
\[
C^{11}=C^{22}=0, \quad C^{21}=-C^{12}=(V_0\lambda_1\lambda_2)^{-1},
\]
which implies
\be\label{DB_Def}
\{F,G\}^*_{\delta\varphi,\delta\pi}=
\{F,G\}_{\delta\varphi,\delta\pi}-\f{1}{V_0}\left(\int{\md^3 z\f{\delta
F}{\delta (\delta \varphi)}}\int{\md^3 z^\prime\f{\delta G}{\delta
(\delta \pi)}} - (F\leftrightarrow G)\right)
\ee

Let us first point out the basic properties of the Dirac bracket
(\ref{DB_Def}). For the field perturbations
\[
\{\delta\varphi(x),\delta\pi(y)\}^*_{\bar\varphi,\bar\pi,
\delta\varphi,\delta\pi}=\delta(x-y)-\f{1}{V_0}\,.
\]
Clearly the last term would remove the extra contribution in
(\ref{PB_Phi_Pi}) yielding the expected result
\bq \label{DB_Phi_Pi}
\{\varphi,\pi\}^*_{\bar\varphi,\bar\pi,\delta\varphi,\delta\pi}&=&
\{\bar\varphi,\bar\pi\}_{\bar\varphi,\bar\pi}+
\{\delta\varphi(x),\delta\pi(y)\}^*_{\delta\varphi,\delta\pi}
=\f{1}{V_0}+\delta(x-y)-\f{1}{V_0}
=\{\varphi,\pi\}_{\varphi,\pi}\,.\nonumber
\eq
Thus the Dirac bracket ensures a correct transition from the full
theory to the perturbed one. By construction, the constraints
(\ref{2nd_class_constraints}) now commute, $\{\chi_1, \chi_2\}^*=0$,
and we can impose $\chi_1$ and $\chi_2$ strongly. Moreover,
Dirac brackets between a first order functional and a functional of
arbitrary order vanish, which can be seen by inspection using
(\ref{DB_Def}). Thus for any two functionals, their linear terms do
not contribute to the Dirac bracket.\footnote{We are mostly
interested in linearized equations of motion, i.e. in the functionals
(constraints) we should keep terms up to the second order.}

We will now address an issue directly related to closure of the
constraints algebra. When computing Poisson brackets in the
context of perturbation theory, one has a choice between two
methods:

1) Calculate the Poisson bracket of the constraints with respect
to the full fields and expand the resulting expression in orders
of perturbations.

or

2) Expand the constraints first and then compute their Poisson
(Dirac) brackets in terms of the expanded fields.

It is, in general, not guaranteed that the two approaches agree for
arbitrary functionals which depend on the fields and their first
derivatives. However, we have

\begin{lemma}
If the fields $\bar\varphi$, $\bar\pi$, $\delta\varphi$ and
$\delta\pi$ enter the functionals
\be\label{F&G}%
F=\int{\!\md^3 x\,
f\!\left(\varphi,\nabla\varphi,\pi,\nabla\pi\right)},\quad
G=\int{\!\md^3 x\, g\!\left(\varphi,\nabla\varphi,\pi,\nabla\pi\right)}
\ee%
only as a combination $\varphi\equiv \bar\varphi+\delta\varphi$ or
$\pi\equiv \bar\pi+\delta\pi$, then the two procedures described
above yield the same result for $\{F,G\}$.
\end{lemma}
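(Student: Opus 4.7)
The plan is to show that for functionals of the assumed form, the Dirac bracket computed in the split variables $(\bar\varphi,\bar\pi,\delta\varphi,\delta\pi)$ reduces, term by term, to the ordinary Poisson bracket $\{F,G\}_{\varphi,\pi}$ computed with respect to the unsplit fields. The key observation I would establish first is a pair of functional identities that follow directly from the hypothesis that $f$ and $g$ depend on the split variables only through the sums $\varphi=\bar\varphi+\delta\varphi$ and $\pi=\bar\pi+\delta\pi$. Since $\bar\varphi$ is spatially constant, $\partial_a\bar\varphi=0$ and hence $\partial f/\partial\bar\varphi=\partial f/\partial\varphi$ (no contribution via $\nabla\varphi$), while treating $\delta\varphi$ as an independent field one gets $\partial f/\partial(\delta\varphi)=\partial f/\partial\varphi$ and $\partial f/\partial(\partial_a\delta\varphi)=\partial f/\partial(\partial_a\varphi)$. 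Integrating by parts (the boundary terms vanish under the assumed periodicity or the compact/closed spatial slice of coordinate volume $V_0$), this yields the crucial identity
\begin{equation}
\frac{\partial F}{\partial\bar\varphi}=\int\!\md^3x\,\frac{\delta F}{\delta(\delta\varphi)}(x),\qquad \frac{\delta F}{\delta(\delta\varphi)}(x)=\frac{\delta F}{\delta\varphi}(x),\nonumber
\end{equation}
and analogously for $\pi$.

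Next I would substitute these identities into the Dirac bracket (\ref{DB_Def}) together with the homogeneous piece of (\ref{PB_full}). The background contribution reads
\begin{equation}
\{F,G\}_{\bar\varphi,\bar\pi}=\frac{1}{V_0}\left(\frac{\partial F}{\partial\bar\varphi}\frac{\partial G}{\partial\bar\pi}-\frac{\partial F}{\partial\bar\pi}\frac{\partial G}{\partial\bar\varphi}\right),\nonumber
\end{equation}
while the Dirac correction term is
\begin{equation}
-\frac{1}{V_0}\left(\int\!\md^3z\,\frac{\delta F}{\delta(\delta\varphi)}\int\!\md^3 z'\,\frac{\delta G}{\delta(\delta\pi)}-(F\leftrightarrow G)\right).\nonumber
\end{equation}
By the identity above these two expressions are pointwise equal and opposite, so they cancel exactly. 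What remains of $\{F,G\}^*$ is the inhomogeneous part $\int\md^3x\,(\delta F/\delta(\delta\varphi))(\delta G/\delta(\delta\pi))-(F\leftrightarrow G)$, and by the second half of the identity this coincides term by term with the ordinary Poisson bracket $\{F,G\}_{\varphi,\pi}$ that is obtained from method 1 and the elementary relation $\{\varphi(x),\pi(y)\}=\delta^3(x-y)$.

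The main obstacle is bookkeeping rather than anything deep: one must be careful that the integration by parts converting $\int(\partial f/\partial\varphi)$ into $\int\delta f/\delta(\delta\varphi)$ drops only a total spatial divergence (legitimate under the $V_0$-periodic or closed-slice setup), and one must genuinely use the structural hypothesis that $f$ and $g$ contain no separate $\bar\varphi$ or $\delta\varphi$ dependence outside the combination $\varphi$. If the functionals depended explicitly on $\bar\varphi$ and $\delta\varphi$ independently, the identity $\partial_{\bar\varphi}F=\int\delta F/\delta(\delta\varphi)$ would fail and the cancellation between the homogeneous bracket and the Dirac correction would leave a residue, accounting for the caveat the authors mention that the two procedures do not agree for arbitrary functionals. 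I would close the argument by remarking that the hypothesis also ensures that no $\{\bar\varphi,\delta\pi\}$-type cross terms arise, so that the block-diagonal form of (\ref{PB_full}) is sufficient and no further modification of the Dirac bracket is needed.
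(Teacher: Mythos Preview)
Your proposal is correct and follows essentially the same route as the paper's proof: both hinge on the identities $\partial F/\partial\bar\varphi=\int\md^3x\,\partial f/\partial\varphi=\int\md^3x\,\delta F/\delta(\delta\varphi)$ (the last equality using that the total-divergence term integrates to zero) and $\delta F/\delta(\delta\varphi)=\delta F/\delta\varphi$, which make the homogeneous Poisson term cancel the Dirac correction and reduce the inhomogeneous Poisson term to the full bracket. The only cosmetic difference is that the paper first remarks explicitly that linear pieces $F^{(1)},G^{(1)}$ drop out of the Dirac bracket, so one may reinstate them and work with the untruncated $F,G$; you do this implicitly by computing with the full functionals from the start.
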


\begin{proof} We shall show that the second procedure is equivalent
to the first one. First of all, as linear terms do not contribute
to (\ref{DB_Def}), we can rewrite the Dirac bracket between two
expanded constraints as
\bq
\{F^{(0)}+F^{(2)},G^{(0)}+G^{(2)}\}^*_{\bar\varphi,\bar\pi,
\delta\varphi,\delta\pi}&=&
\{F^{(0)}+F^{(1)}+F^{(2)},G^{(0)}+G^{(1)}+G^{(2)}\}^*_{\bar\varphi,\bar\pi,
\delta\varphi,\delta\pi}\nonumber
\\
&\equiv&\{F,G\}^*_{\bar\varphi,\bar\pi,\delta\varphi,\delta\pi}.\nonumber
\eq
According to (\ref{DB_Def}), we have
\bq
\{F,G\}^*_{\bar\varphi,\bar\pi,\delta\varphi,\delta\pi}&=&\f{1}{V_0}\left(\f{\D
F}{\D \bar\varphi}\f{\D G}{\D \bar\pi}-\f{\D F}{\D \bar\pi}\f{\D G}{\D
\bar\varphi}\right)+\int{\md^3 x\left(\f{\delta F}{\delta(\delta
\varphi)}\f{\delta G}{\delta(\delta \pi)}\nonumber -\f{\delta
F}{\delta(\delta \pi)}\f{\delta G}{\delta(\delta
\varphi)}\right)}\nonumber\\
&-&\f{1}{V_0}\left(\int{\md^3 z\f{\delta F}{\delta (\delta
\varphi)}}\int{\md^3 z^\prime\f{\delta G}{\delta (\delta \pi)}}
-\int{\md^3 z\f{\delta F}{\delta (\delta \pi)}} \int{\md^3
z^\prime\f{\delta G}{\delta (\delta \varphi)}} \right)
 \nonumber\\
&=&\f{1}{V_0}\int{\md^3 z\f{\D f}{\D \varphi}}\int{\md^3
z^\prime\f{\D g}{\D \pi}} +\int{\md^3 x\f{\delta F}{\delta \varphi}}
\f{\delta G}{\delta \pi} - (\varphi \leftrightarrow \pi)
\nonumber \\
&-&\f{1}{V_0}\int{\md^3 z \left(\f{\D f}{\D \varphi}-\D_a\f{\D
f}{\D(\D_a \varphi)}\right)}\int{\md^3 z^\prime \left(\f{\D g}{\D
\pi}-\D_a\f{\D g}{\D(\D_a \pi)}\right)} - (\varphi \leftrightarrow
\pi)\nonumber\\
&=&\int{\md^3 x\f{\delta F}{\delta \varphi}} \f{\delta
G}{\delta \pi} - (\varphi \leftrightarrow \pi)
\equiv \{F,G\}_{\varphi,\pi}\,. \nonumber
\eq
\end{proof}

In the second equality, we have used
\[
\f{\D f}{\D \bar\varphi} = \f{\D f}{\D \varphi}, \quad \quad \f{\delta
F}{\delta(\delta \varphi)}=\f{\delta F}{\delta \varphi}=\f{\D f}{\D
\varphi}-\D_a\f{\D f}{\D (\D_a \varphi) }
\]
and dropped the surface integrals originating from integration of
the total divergence terms. It is now easy to see that if higher
derivative terms were present in the functionals, they would have
merely led to additional surface terms and would not have
affected the final conclusion.

Since linear functionals do not contribute to Dirac brackets, they can
be omitted, and one can restrict consideration to terms of the zeroth
and second order only. Moreover, for functionals of an even order, the
second term in the Dirac bracket (\ref{DB_Def}) vanishes, and one can
simply use the Poisson bracket (\ref{PB_full}).

 A somewhat similar consistency issue arises when it comes to
 equations of motion, generated e.g. by a (Hamiltonian) constraint
\be%
 H=\int{\!\md^3 x \,h(\varphi,\nabla\varphi, \pi, \nabla\pi)}.
\ee
There are again two approaches: i) either derive the equations of
motion for the original fields and then split into the background
and (linear) perturbation or ii) expand the constraint and obtain
separately equations of motion for the homogeneous and
inhomogeneous parts of the field. In other words, one needs to
compare
\be%
 \{\varphi,H\} \quad {\rm with} \quad \{\bar\varphi, H\}^* \,\,
{\rm and} \,\, \{\delta \varphi,
 H\}^*\,.
\ee
We start by noting that
\be\label{Full_EoM}%
 \dot \varphi = \{\varphi,H\}_{\varphi,\pi}=
\f{\delta H}{\delta \pi}= \f{\D h}{\D \pi} - \D_a
\f{\D h}{\D(\D_a \pi )},
\ee
whereas the equation of motion for the background field
\bq%
\dot{\bar\varphi} = \{\bar\varphi,H\}_{\bar\varphi,\bar\pi} &=&
\f{1}{V_0} \int{\md^3
x \f{\D h}{\D \bar\pi} = \f{1}{V_0} \int{\md^3 x \f{\D h}{\D
\pi}}}\nonumber
\\
&=&\f{1}{V_0} \int{ \md^3 x \left(\f{\D h}{\D \pi} - \D_a \f{\D
h}{\D(\D_a \pi)}\right)}
=\f{1}{V_0} \int{ \md^3 x \f{\delta h}{\delta \pi}}\nonumber
\eq
coincides with the background part of the equation of motion
(\ref{Full_EoM}) for the total field, $(\delta H/\delta
\pi)^{(0)}$. At the same time, the equation of motion for
the perturbation
\bq%
\delta\dot \varphi (x)=
\{\delta\varphi(x),H\}^*_{\delta\varphi,\delta\pi} &=& \f{\delta
H}{\delta (\delta \pi(x))}- \f{1}{V_0}\int{\md^3 y \f{\delta
H}{\delta (\delta\pi(y))}} \nonumber
\\
&=&\int{\md^3 y \left(\delta(x-y)-\f{1}{V_0}\right)\f{\delta
H}{\delta\pi(y)}}
=\left(\f{\delta H}{\delta \pi}\right)^{(1)}\nonumber
\eq
is nothing else but the perturbed part of the equation
(\ref{Full_EoM}). In fact, one can think of the kernel
$\delta(x-y)-1/V_0$ as cutting off the background
part of the function, with which it is integrated. It is again
pertinent to mention that the linear (as well as the background)
part of the functional does not contribute to the perturbed
equation of motion, that is
\be%
\delta\dot \varphi (x)=
\{\delta\varphi(x),H\}^*_{\delta\varphi,\delta\pi}=\{\delta\varphi(x),H^{(2)}\}_{\delta\varphi,\delta\pi}\,.
\ee
Note that in the second equality the Poisson bracket is used, not the
Dirac bracket.

To summarize, we have shown that in order to proceed to the
perturbation theory, the Dirac bracket (\ref{DB_Def}) in terms of
the background and perturbed variables should be used.
Nevertheless, when dealing with already expanded functionals,
containing only even order terms, the Dirac bracket reduces to the
Poisson bracket (\ref{PB_full}).

So far we have considered perturbations of a scalar field.
Generalization to tensorial fields is rather straightforward for any
rank.  In particular, we need the canonical pair of loop quantum
gravity, i.e. the extrinsic curvature and densitized triad whose
perturbations have
Dirac brackets
\be\label{DB_Def_KE}
\{F,G\}^*_{\delta K_a^i,\delta E_i^a}=
\{F,G\}_{\delta K_a^i,\delta E_i^a}-\f{1}{V_0}\left(\int{\md^3 z\md^3 z^\prime\f{\delta
F}{\delta (\delta K_a^i(z))}}{\f{\delta G}{\delta
(\delta E_i^a(z^\prime))}} - (F\leftrightarrow G)\right),
\ee
where $F$ and $G$ are arbitrary functionals of $K_a^i$ and $E_i^a$.

Of interest is also a generalization of the Dirac brackets to the case
of local second-class constraints. Let us split the triad and
extrinsic curvature into the diagonal and traceless parts
\begin{equation}
K_a^i=\kappa \delta_a^i + \kappa_a^i, \quad E^a_i=\varepsilon \delta^a_i + \varepsilon^a_i,
\end{equation}
such that
\be\label{2nd_class_constraints_local}
\chi_1:=\tr\varepsilon_i^a=0, \quad \chi_2:=\tr\kappa_a^i=0.
\ee It is easy to see that the pairs $(\kappa,\varepsilon)$ and
$(\kappa_a^i,\epsilon_i^a)$ are symplectically orthogonal. Indeed, the
symplectic structure takes the form
\[
\int{\md^3 x(\dot \varepsilon \delta_a^i + \dot\varepsilon_a^i)(\kappa\delta^i_a+\kappa^i_a)}
=\int{\md^3 x(3 \dot \varepsilon \kappa + \dot\varepsilon_a^i \kappa^i_a)}.
\]

However the constraints (\ref{2nd_class_constraints_local}) are second
class under the tentative Poisson bracket
\begin{equation}\label{PB_local}
\{F,G\}_{\kappa,\varepsilon,\kappa_a^i,\varepsilon_i^a}=\f{1}{3}\int{\md^3 x \left(\f{\delta F}{\delta\kappa}\f{\delta G}{\delta \varepsilon}-\f{\delta F}{\delta\varepsilon}\f{\delta G}{\delta \kappa} \right)}
+\int{\md^3 x\left(\f{\delta F}{\delta\kappa_a^i}\f{\delta G}{\delta \varepsilon_i^a}-\f{\delta F}{\delta\varepsilon_i^a}\f{\delta G}{\delta\kappa_a^i}\right)}.
\end{equation}
Specifically,
\begin{equation}
\{\chi_2(x),\chi_1(y)\}_{\kappa_a^i,\varepsilon_i^a}=3\delta(x-y).
\end{equation}
As before, we define the Dirac brackets
\begin{equation}
\{F,G\}^*=\{F,G\}-\int{\md^3 z \md^3 z^\prime \{F,\chi_a(z)\}C^{ab}(z,z^\prime)\{\chi_b(z),G\} },
\end{equation}
where the matrix $C^{ab}(x,y)$ is now space-dependent and satisfies
\begin{equation}
\int{\md^3 y C^{ab}(x,y)\{\chi_b(y),\chi_c(z)\} }=\delta_c^a\delta(x-z).
\end{equation}
Using the constraints (\ref{2nd_class_constraints_local}) in the equation above, we find that
\begin{eqnarray}
C^{11}(x,y)&=&C^{22}(x,y)=0 \\
C^{12}(x,y)&=&-C^{21}(x,y)=\f{1}{3}\delta(x-y).
\end{eqnarray}
Therefore the Dirac bracket reads
\be\label{DB_Def_local}
\{F,G\}^*_{\kappa_a^i,\varepsilon_i^a}=\int{\md^3 z \f{\delta F}{\delta \kappa_a^i(z)}\f{\delta G}{\delta\varepsilon_i^a(z)}}
-\f{1}{3}\int{\md^3 z \left(\f{\delta^j_b \delta F}{\delta \kappa_b^j(z)} \f{\delta^c_k\delta G}{\delta \varepsilon_k^c(z)} \right)}
- (F\leftrightarrow G)
\ee

It is easy to see by inspection that the constraints
(\ref{2nd_class_constraints_local}) indeed commute under this Dirac
bracket, $\{\chi_1,\chi_2\}^*=0$, so these constraints may be imposed
strongly. Also, the Dirac bracket between the original canonical
variables has the correct expression
\[
\{K^i_a(x),E^b_j(y)\}^*_{\kappa,\varepsilon,\kappa_c^k,\varepsilon_k^c}
=\delta_a^b \delta_j^i \delta(x-y)
=\{K^i_a(x),E^b_j(y)\}_{K_c^k,E^k_c}.
\]
Earlier on we have seen that one can still use the Poisson bracket
rather than the corresponding Dirac bracket if one removes from the
original constraints (that is before splitting the canonical
variables) all the terms proportional to the second class constraints
arising because of the splitting. This still holds for local second
class constraints. In the case at hand, as soon as all the terms
containing traces of the extrinsic curvature and the densitized triad
are omitted, the remaining constraints form the correct algebra under
the Poisson bracket (\ref{PB_local}).

\end{appendix}



\newcommand{\noopsort}[1]{}


\begin{thebibliography}{10}

\bibitem{PositronEffAc}
W.\ Heisenberg and H.\ Euler,
\newblock Consequences of Dirac's Theory of the Positron,
\newblock {\em Z.\ Phys.} 98 (1936) 714, [physics/0605038]

\bibitem{VacPolEffAc}
J.\ Schwinger,
\newblock On Gauge Invariance and Vacuum Polarization,
\newblock {\em Phys.\ Rev.} 82 (1951) 664--679

\bibitem{EffAc}
M.\ Bojowald and A.\ Skirzewski,
\newblock Effective Equations of Motion for Quantum Systems,
\newblock {\em Rev.\ Math.\ Phys.} 18 (2006) 713--745, [math-ph/0511043]

\bibitem{EffectiveEOM}
A.\ Skirzewski,
\newblock {\em Effective Equations of Motion for Quantum Systems},
\newblock PhD thesis, Humboldt-Universit\"at Berlin, 2006

\bibitem{Karpacz}
M.\ Bojowald and A.\ Skirzewski,
\newblock Quantum Gravity and Higher Curvature Actions,
\newblock {\em Int.\ J.\ Geom.\ Meth.\ Mod.\ Phys.} 4 (2007) 25--52,
  [hep-th/0606232],
\newblock Proceedings of ``Current Mathematical Topics in Gravitation and
  Cosmology'' (42nd Karpacz Winter School of Theoretical Physics), Ed.\
  Borowiec, A.\ and Francaviglia, M.

\bibitem{BouncePert}
M.\ Bojowald,
\newblock Large scale effective theory for cosmological bounces,
\newblock {\em Phys.\ Rev.\ D} 75 (2007) 081301(R), [gr-qc/0608100]

\bibitem{BounceSqueezed}
M.\ Bojowald,
\newblock How quantum is the big bang?,
\newblock {\em Phys.\ Rev.\ Lett.} 100 (2008) 221301, [arXiv:0805.1192]

\bibitem{EffCons}
M.\ Bojowald, B.\ Sandh\"ofer, A.\ Skirzewski, and A.\ Tsobanjan,
\newblock Effective constraints for quantum systems, [arXiv:0804.3365]

\bibitem{Rov}
C.\ Rovelli,
\newblock {\em Quantum Gravity},
\newblock Cambridge University Press, Cambridge, UK, 2004

\bibitem{ALRev}
A.\ Ashtekar and J.\ Lewandowski,
\newblock Background independent quantum gravity: A status report,
\newblock {\em Class.\ Quantum Grav.} 21 (2004) R53--R152, [gr-qc/0404018]

\bibitem{ThomasRev}
T.\ Thiemann,
\newblock {\em Introduction to Modern Canonical Quantum General Relativity},
\newblock Cambridge University Press, Cambridge, UK, 2007, [gr-qc/0110034]

\bibitem{ScalarGaugeInv}
M.\ Bojowald, G.\ Hossain, M.\ Kagan, D.\ Mulryne, N.\ Nunes, and
S.\ Shankaranarayanan, [in preparation]

\bibitem{LoopRep}
C.\ Rovelli and L.\ Smolin,
\newblock Loop Space Representation of Quantum General Relativity,
\newblock {\em Nucl.\ Phys.\ B} 331 (1990) 80--152

\bibitem{InhomLattice}
M.\ Bojowald,
\newblock Loop quantum cosmology and inhomogeneities,
\newblock {\em Gen.\ Rel.\ Grav.} 38 (2006) 1771--1795, [gr-qc/0609034]

\bibitem{BoundaryGraviton}
C.\ Rovelli,
\newblock Graviton propagator from background-independent quantum gravity,
\newblock {\em Phys.\ Rev.\ Lett.} 97 (2006) 151301, [gr-qc/0508124]

\bibitem{NPZRev}
H.\ Nicolai, K.\ Peeters, and M.\ Zamaklar,
\newblock Loop quantum gravity: an outside view,
\newblock {\em Class.\ Quantum Grav.} 22 (2005) R193--R247, [hep-th/0501114]

\bibitem{AnoFree}
T.\ Thiemann,
\newblock \noopsort{QSD a}Anomaly-Free Formulation of Non-Perturbative,
  Four-Dimensional Lorentzian Quantum Gravity,
\newblock {\em Phys.\ Lett.\ B} 380 (1996) 257--264, [gr-qc/9606088]

\bibitem{QSDI}
T.\ Thiemann,
\newblock \noopsort{QSD Ia}Quantum Spin Dynamics {(QSD)},
\newblock {\em Class.\ Quantum Grav.} 15 (1998) 839--873, [gr-qc/9606089]

\bibitem{Master}
T.\ Thiemann,
\newblock The Phoenix Project: Master Constraint Programme for Loop Quantum
  Gravity, {\em Class.\ Quant.\ Grav.} 23 (2006) 2211--2248,
 [gr-qc/0305080]

\bibitem{AQGI}
K.\ Giesel and T.\ Thiemann,
\newblock Algebraic Quantum Gravity (AQG) I. Conceptual Setup,
\newblock {\em Class.\ Quantum Grav.} 24 (2007) 2465--2497, [gr-qc/0607099]

\bibitem{vector}
M.\ Bojowald and G.\ Hossain,
\newblock Cosmological vector modes and quantum gravity effects,
\newblock {\em Class.\ Quantum Grav.} 24 (2007) 4801--4816, [arXiv:0709.0872]

\bibitem{tensor}
M.\ Bojowald and G.\ Hossain,
\newblock Quantum gravity corrections to gravitational wave dispersion,
\newblock {\em Phys.\ Rev.\ D} 77 (2008) 023508, [arXiv:0709.2365]

\bibitem{Bardeen}
J.~M.\ Bardeen,
\newblock Gauge-invariant cosmological perturbations,
\newblock {\em Phys.\ Rev.\ D} 22 (1980) 1882--1905

\bibitem{CosmoPert}
V.~F.\ Mukhanov, H.~A.\ Feldman, and R.~H.\ Brandenberger,
\newblock Theory of cosmological perturbations,
\newblock {\em Phys.\ Rept.} 215 (1992) 203--333

\bibitem{HamPerturb}
M.\ Bojowald, H.\ Hern\'andez, M.\ Kagan, P.\ Singh, and A.\
Skirzewski,
\newblock Hamiltonian cosmological perturbation theory with loop quantum
  gravity corrections,
\newblock {\em Phys.\ Rev.\ D} 74 (2006) 123512, [gr-qc/0609057]

\bibitem{PowerLoop}
G.~M.\ Hossain,
\newblock Primordial Density Perturbation in Effective Loop Quantum Cosmology,
\newblock {\em Class.\ Quantum Grav.} 22 (2005) 2511--2532, [gr-qc/0411012]

\bibitem{PowerPert}
S.\ Hofmann and O.\ Winkler,
\newblock The Spectrum of Fluctuations in Inflationary Quantum Cosmology,
  [astro-ph/0411124]

\bibitem{InhomEvolve}
M.\ Bojowald, H.\ Hern\'andez, M.\ Kagan, P.\ Singh, and A.\
Skirzewski,
\newblock Formation and evolution of structure in loop cosmology,
\newblock {\em Phys.\ Rev.\ Lett.} 98 (2007) 031301, [astro-ph/0611685]

\bibitem{LQCStepping}
C.\ Rovelli and F.\ Vidotto,
\newblock Stepping out of Homogeneity in Loop Quantum Cosmology, [arXiv:0805.4585]

\bibitem{LivRev}
M.\ Bojowald,
\newblock Loop Quantum Cosmology,
\newblock {\em Living Rev.\ Relativity} 8 (2005) 11, [gr-qc/0601085],
\newblock {\tt http://relativity.livingreviews.org/Articles/lrr-2005-11/}

\bibitem{SymmRed}
M.\ Bojowald and H.~A.\ Kastrup,
\newblock Symmetry Reduction for Quantized Diffeomorphism Invariant Theories of
  Connections,
\newblock {\em Class.\ Quantum Grav.} 17 (2000) 3009--3043, [hep-th/9907042]

\bibitem{SymmStatesInt}
J.\ Engle,
\newblock Relating loop quantum cosmology to loop quantum gravity: symmetric
  sectors and embeddings,
\newblock {\em Class.\ Quantum Grav.} 24 (2007) 5777--5802, [gr-qc/0701132]

\bibitem{Reduction}
T.\ Koslowski,
\newblock Reduction of a Quantum Theory, [gr-qc/0612138]

\bibitem{Rieffel}
T.\ Koslowski,
\newblock A Cosmological Sector in Loop Quantum Gravity, [arXiv:0711.1098]

\bibitem{NonLinPert}
D.\ Langlois and F.\ Vernizzi,
\newblock Evolution of non-linear cosmological perturbations,
\newblock {\em Phys.\ Rev.\ Lett.} 95 (2005) 091303, [astro-ph/0503416]

\bibitem{PertObsI}
B.\ Dittrich and J.\ Tambornino,
\newblock A perturbative approach to Dirac observables and their space-time
  algebra,
\newblock {\em Class.\ Quant.\ Grav.} 24 (2007) 757--784, [gr-qc/0610060]

\bibitem{PertObsII}
B.\ Dittrich and J.\ Tambornino,
\newblock Gauge invariant perturbations around symmetry reduced sectors of
  general relativity: applications to cosmology,
\newblock {\em Class.\ Quantum Grav.} 24 (2007) 4543--4585, [gr-qc/0702093]

\bibitem{BKdustI}
K.\ Giesel, S.\ Hofmann, T.\ Thiemann, and O.\ Winkler,
\newblock Manifestly Gauge-Invariant General Relativistic Perturbation Theory:
  I. Foundations, [arXiv:0711.0115]

\bibitem{BKdustII}
K.\ Giesel, S.\ Hofmann, T.\ Thiemann, and O.\ Winkler,
\newblock Manifestly Gauge-Invariant General Relativistic Perturbation Theory:
  II. FRW Background and First Order, [arXiv:0711.0117]

\bibitem{InflationWMAP}
S.\ Tsujikawa, P.\ Singh, and R.\ Maartens,
\newblock Loop quantum gravity effects on inflation and the CMB,
\newblock {\em Class.\ Quantum Grav.} 21 (2004) 5767--5775, [astro-ph/0311015]

\bibitem{Robust}
M.\ Bojowald, J.~E.\ Lidsey, D.~J.\ Mulryne, P.\ Singh, and R.\
Tavakol,
\newblock Inflationary Cosmology and Quantization Ambiguities in Semi-Classical
  Loop Quantum Gravity,
\newblock {\em Phys.\ Rev.\ D} 70 (2004) 043530, [gr-qc/0403106]

\bibitem{ScaleInvLQC}
D.~J.\ Mulryne and N.~J.\ Nunes,
\newblock Constraints on a scale invariant power spectrum from superinflation
  in LQC,
\newblock {\em Phys.\ Rev.\ D} 74 (2006) 083507, [astro-ph/0607037]

\bibitem{CCPowerLoop}
G.\ Calcagni and M.~V.\ Cort\^es,
\newblock Inflationary scalar spectrum in loop quantum cosmology,
\newblock {\em Class.\ Quantum Grav.} 24 (2007) 829--853, [gr-qc/0607059]

\bibitem{SuperInflLQC}
E.~J.\ Copeland, D.~J.\ Mulryne, N.~J.\ Nunes, and M.\ Shaeri,
\newblock Super-inflation in Loop Quantum Cosmology, arXiv:0708.1261

\bibitem{RefinementMatter}
W.\ Nelson and M.\ Sakellariadou,
\newblock Lattice Refining LQC and the Matter Hamiltonian,
\newblock {\em Phys.\ Rev.\ D} 76 (2007) 104003, [arXiv:0707.0588]

\bibitem{RefinementInflation}
W.\ Nelson and M.\ Sakellariadou,
\newblock Lattice Refining Loop Quantum Cosmology and Inflation,
\newblock {\em Phys.\ Rev.\ D} 76 (2007) 044015, [arXiv:0706.0179]

\bibitem{TensorBackground}
J.\ Mielczarek and M.\ Szyd\l{}owski,
\newblock Relic gravitons as the observable for Loop Quantum Cosmology,
\newblock {\em Phys.\ Lett.\ B} 657 (2007) 20--26, [arXiv:0705.4449]

\bibitem{TensorRelic}
J.\ Mielczarek and M.\ Szyd\l{}owski,
\newblock Relic gravitons from super-inflation, [arXiv:0710.2742]

\bibitem{APS}
A.\ Ashtekar, T.\ Pawlowski, and P.\ Singh,
\newblock Quantum Nature of the Big Bang: An Analytical and Numerical
  Investigation,
\newblock {\em Phys.\ Rev.\ D} 73 (2006) 124038, [gr-qc/0604013]

\bibitem{SchwarzN}
M.\ Bojowald, D.\ Cartin, and G.\ Khanna,
\newblock Lattice refining loop quantum cosmology, anisotropic models and
  stability,
\newblock {\em Phys.\ Rev.\ D} 76 (2007) 064018, [arXiv:0704.1137]

\bibitem{BoundFull}
J.\ Brunnemann and T.\ Thiemann,
\newblock Unboundedness of Triad-Like Operators in Loop Quantum Gravity,
\newblock {\em Class.\ Quantum Grav.} 23 (2006) 1429--1483, [gr-qc/0505033]

\bibitem{QuantCorrPert}
M.\ Bojowald, H.\ Hern\'andez, M.\ Kagan, and A.\ Skirzewski,
\newblock Effective constraints of loop quantum gravity,
\newblock {\em Phys.\ Rev.\ D} 75 (2007) 064022, [gr-qc/0611112]

\bibitem{InvScale}
M.\ Bojowald,
\newblock Inverse Scale Factor in Isotropic Quantum Geometry,
\newblock {\em Phys.\ Rev.\ D} 64 (2001) 084018, [gr-qc/0105067]

\bibitem{FermionBBN}
M.\ Bojowald, R.\ Das, and R.\ Scherrer,
\newblock Dirac fields in Loop Quantum Gravity and Big Bang Nucleosynthesis,
\newblock {\em Phys.\ Rev.\ D} 77 (2008) 084003, [arXiv:0710.5734]

\bibitem{CosConst}
M.\ Bojowald,
\newblock The dark side of a patchwork universe,
\newblock {\em Gen.\ Rel.\ Grav.} 40 (2008) 639--660, [arXiv:0705.4398]

\bibitem{DegFull}
M.\ Bojowald,
\newblock Degenerate Configurations, Singularities and the Non-Abelian Nature
  of Loop Quantum Gravity,
\newblock {\em Class.\ Quantum Grav.} 23 (2006) 987--1008, [gr-qc/0508118]

\bibitem{Ambig}
M.\ Bojowald,
\newblock Quantization ambiguities in isotropic quantum geometry,
\newblock {\em Class.\ Quantum Grav.} 19 (2002) 5113--5130, [gr-qc/0206053]

\bibitem{ICGC}
M.\ Bojowald,
\newblock Loop Quantum Cosmology: Recent Progress,
{\em Pramana} 63 (2004) 765--776,
\newblock In Proceedings of the International Conference on Gravitation
  and Cosmology (ICGC 2004), Cochin, India, [gr-qc/0402053]

\bibitem{Simon}
J.~Z.\ Simon, Higher-derivative Lagrangians, nonlocality, problems,
and solutions, {\em Phys.\ Rev.\ D} 41 (1990) 3720--3733

\bibitem{Bohr}
A.\ Ashtekar, M.\ Bojowald, and J.\ Lewandowski,
\newblock Mathematical structure of loop quantum cosmology,
\newblock {\em Adv.\ Theor.\ Math.\ Phys.} 7 (2003) 233--268, [gr-qc/0304074]

\bibitem{APSII}
A.\ Ashtekar, T.\ Pawlowski, and P.\ Singh,
\newblock Quantum Nature of the Big Bang: Improved dynamics,
\newblock {\em Phys.\ Rev.\ D} 74 (2006) 084003, [gr-qc/0607039]

\bibitem{DirQuant}
P.~A.~M.\ Dirac,
\newblock {\em Lectures on Quantum Mechanics},
\newblock Yeshiva Press, 1969

\end{thebibliography}
\end{document}